\title{RapunSL: Untangling Quantum Computing with Separation, Linear Combination and Mixing}
\author{Yusuke Matsushita}
\affiliation{\institution{Kyoto University}
  \city{Kyoto}
  \country{Japan}
}
\email{ymat@fos.kuis.kyoto-u.ac.jp}
\author{Kengo Hirata}
\affiliation{\institution{University of Edinburgh}
  \city{Edinburgh}
  \country{United Kingdom}
}
\affiliation{\institution{Kyoto University}
  \city{Kyoto}
  \country{Japan}
}
\email{k.hirata@sms.ed.ac.uk}
\author{Ryo Wakizaka}
\affiliation{\institution{Kyoto University}
  \city{Kyoto}
  \country{Japan}
}
\email{wakizaka@fos.kuis.kyoto-u.ac.jp}
\author{Emanuele D'Osualdo}
\affiliation{\institution{University of Konstanz}
  \city{Konstanz}
  \country{Germany}
}
\email{emanuele.dosualdo@uni-konstanz.de}
\keywords{quantum separation logic, entanglement, quantum program verification}
\begin{document}

\begin{abstract}
Quantum Separation Logic (QSL) has been proposed as an effective tool
to improve the scalability of deductive reasoning for quantum programs.
In QSL, separation is interpreted as \emph{disentanglement},
and the frame rule brings a notion of \emph{entanglement-local} specification
(one that only talks about the qubits entangled with those acted upon by the program).
In this paper, we identify two notions of locality unique to the quantum domain,
and we construct a novel quantum separation logic, \ourlogic,
which is able to soundly reduce reasoning about superposition states
to reasoning about pure states (\emph{basis-locality}),
and reasoning about mixed states arising from measurement
to reasoning about pure states (\emph{outcome-locality}).
To do so, we introduce two connectives, linear combination and mixing,
which together with separation provide a dramatic improvement
in the scalability of reasoning,
as we demonstrate on a series of challenging case studies.
\end{abstract}

 \maketitle

\section{Introduction}
\label{sect:intro}

\begin{mathfig}
  \begin{proofrules}
    \infer*[lab = hoare-frame]{
      \hoare{\Prop}{\Cmd}{\PropB}
    }{
      \hoare{\Prop * \PropC}{\Cmd}{\PropB * \PropC}
    }
    \label{rule:hoare-frame}

    \infer*[lab = hoare-sum]{
      \hoare{\Prop}{\Cmd}{\PropB}
      \\
      \hoare{\Prop'}{\Cmd}{\PropB'}
    }{
      \hoare{\Prop + \Prop'}{\Cmd}{\PropB + \PropB'}
    }

    \infer*[lab = hoare-mix]{
      \hoare{\Prop}{\Cmd}{\PropB}
      \\
      \hoare{\Prop'}{\Cmd}{\PropB'}
    }{
      \hoare{\Prop \mix{\hvar} \Prop'}{\Cmd}{\PropB \mix{\hvar} \PropB'}
    }
    \label{rule:hoare-mix}
  \end{proofrules}
  \caption{The three rules of \ourlogic{} embodying the three locality principles.}
  \label{fig:locality-rules}
  \Description{}
\end{mathfig}

Separation logic~\cite{OHearnP99-bi} represents a foundational breakthrough in the scalability of deductive verification methods.
By introducing a resource-based form of \emph{local} reasoning,
separation logic allows program specifications to describe behaviour
solely in terms of the resources a program fragment accesses,
without reference to the rest of the program state~\cite{IshtiaqO01-bi, OHearnRY01-sl, Reynolds02-sl}.
This is formalized through the \emph{frame rule}
(\ref{rule:hoare-frame} in \cref{fig:locality-rules}),
which guarantees that verified properties of a program fragment remain valid when executed in a larger context,
provided the context's resources are \emph{separate}.
By giving different interpretations to the notion of ``separable resources'',
separation logic has been shown to support a wide array of different computational phenomena, from the heap to concurrency~\cite{OHearn07-csl, Brookes04-csl, BrookesO16-csl-summary}, to probabilistic non-determinism~\cite{BartheHL19-probsep, BaoDHS21-bunched-probsep, LiAH23-probsep, LiAJPAH24-nominal-probsl, BaoDF25-bluebell}.

Recently, the separation logic paradigm has been applied
to \emph{quantum computation}~\cite{ZhouBHYY21-qsl, LeLSS22-qsl, DengWX24-qsl, SuZFY24-qsl}.
In the quantum domain, separation has been interpreted
as \emph{disentanglement}:
roughly speaking,
$P * Q$ describes a configuration where the qubits described by $P$
are not entangled with the ones in $Q$.
Since a quantum program acting on some qubits $\bar{\qit}$ would not affect
any disentangled qubits~$\bar{\qitB}$,
the frame rule is valid, improving the modularity of reasoning.
For example, in proving
$
  \hoare
    {{\qit \mapsto \qket * \qitB \mapsto \qketB}}
    {\Unitary[x]}
    {{\qit \mapsto \Unitary \qket * \qitB \mapsto \qketB}}
$
one can apply \ref{rule:hoare-frame} framing $\qitB \mapsto \qketB$
and reduce reasoning to the simpler
$
  \hoare
    {{\qit \mapsto \qket}}
    {\Unitary[x]}
    {{\qit \mapsto \Unitary \qket}}
$
as one would do in an informal proof.
This allows one to focus only on the relevant state for a proof of a component,
and still reuse the specification when used in larger contexts.
We dub this \emph{entanglement-locality}, as it allows one to focus only on the entangled part.

The starting observation of this paper is that
quantum computation adds two unique ways to add context to some computation,
\emph{superposition} and \emph{measurement},
and each of them grants a fundamentally new notion of locality which
cannot be captured by separation alone.

\emph{Basis-locality.}
A qubit's state is, in general,
a superposition $\alpha \ket{0} + \beta \ket{1}$
of the two classical states~$\ket{0}$ and $\ket{1}$ (the basis vectors).
Quantum gates, the basic building blocks of quantum computation,
act on qubits as unitary operators~$\Unitary$:
their effect on a state $\alpha \ket{0} + \beta \ket{1}$ is entirely determined
by their effect on $\ket{0}$ and $\ket{1}$,
\ie $
  \Unitary(\alpha \ket{0} + \beta \ket{1}) =
    \alpha(\Unitary \ket{0}) + \beta(\Unitary \ket{1}).
$
This suggests, in addition to disentanglement, a second notion of locality,
which we call \emph{basis-locality},
which would allow us to focus on the effect of a program on the basis states,
and extrapolate its effect to a superposition.

\emph{Outcome-locality.}
In the quantum world, the act of measuring is a delicate affair:
measuring a qubit has the very global effect of making its superposition state
probabilistically collapse to a classical state.
The state after a measurement can thus be described using a
so-called \emph{mixed state}, \ie a probabilistic ensemble of pure states.
The effect of a program continuing after a measurement, however,
would be entirely determined by its effect on each of the possible outcomes.
This suggests a third notion of locality,
which we call \emph{outcome-locality},
where a specification on (potentially pure) states can be lifted
to a specification on mixed states.

Unfortunately, none of the quantum logics in the literature has achieved all three locality principles within one logic.
The quantum separation logics of \citet{ZhouBHYY21-qsl} and \citet{LeLSS22-qsl} support only entanglement-locality.
Recent work by \citet{DengWX24-qsl} additionally supports outcome-locality, but not basis-locality.
This is not for lack of imagination:
these logics are built on models that are fundamentally incompatible
with basis-locality or outcome-locality.

What we set out to find in this paper is a way to soundly enable all these
three locality principles in a program logic for quantum computation,
and to articulate the patterns of reasoning they unlock.
Due to the extremely subtle interaction of all these three mechanisms---(dis)entanglement, superposition, and measurement---the natural models of assertions all fail.
Our solution starts from an analysis of such failures.
We identify two sources of incompatibility with the three locality principles
in the models used in the literature.
The first pertains to the level of abstraction of assertions on pure states.
The logics of \citet{ZhouBHYY21-qsl} and \citet{Ying11-qhl}
use global-phase-insensitive assertions,
which is natural considering that measurements
are insensitive to global-phase changes.
This choice, however, impedes basis-locality:
the sum of vectors needed to form a superposition
is only meaningful when the global phase is tracked.

The second, and more serious, incompatibility
involves the treatment of measurements.
The issue is that measurements do not fulfil, strictly speaking,
the basis-locality principle, as their effect is \emph{not}
fully determined by their effect on basis states.
On a classical state, a measurement is a no-op.
However, when applied to a state in a superposition, the measurement
collapses it, probabilistically, to a classical state.
On the face of it, this seems to suggest that there is a fundamental
incompatibility between measurements and basis-locality.
In fact, this is indeed the case in the
(global-phase-sensitive) model of \citet{LeLSS22-qsl}.

The main contribution of this paper
is to show that this apparent conflict can be resolved
by introducing a new quantum separation logic called \ourlogic.
We identify the handling of mixed states as the source of the conflict:
In the logic of \citet{LeLSS22-qsl},
assertions talk, just like in any traditional separation logic,
about \emph{single} pure states, so specifications can only assert facts that apply to \emph{every outcome}.
In \ourlogic, assertions are global-phase-sensitive and predicate over
the whole mixed state at once.
More precisely, \ourlogic{} includes three logical connectives:
\emph{separation}~($P * Q$) representing disentanglement,
\emph{sum}~($P + Q$) representing superposition, and
\emph{mixing}~($P \mix{\hvar} Q$) joining two outcomes into a mixed state.
The conflict is resolved because in our model,
a measurement on a classical state still gives us a mixed state
where one outcome is a degenerate zero-probability state,
and is therefore distinguishable from a no-op.
The three locality principles are then embodied by the three rules
of~\cref{fig:locality-rules}:
\ref{rule:hoare-frame} for entanglement-locality,
\ref{rule:hoare-sum} for basis-locality, and
\ref{rule:hoare-mix} for outcome-locality.

\paragraph{Contributions}
The main contributions of this paper are:
\begin{enumerate}
  \item A new \emph{assertion language} to describe mixed states using
        separation, sum and mixing, and its supporting model.
  \item A thorough study of the rich interactions between the new connectives.
        In particular, we study the distributivity and interchange properties
        that hold in our model.
  \item A sound program logic, \ourlogic,
        supporting all three locality principles.
  \item Case studies evaluating how effective local reasoning is in \ourlogic,
        and illustrating the new proof patterns available in it.
\end{enumerate}

\paragraph{Outline}
We start with an informal overview~\cref{sect:overview} explaining the key ideas.
Then, after setting things up in \cref{sect:lang}, we formalize our logic and prove its soundness in \cref{sect:logic}.
We present case studies in \cref{sect:cases} and discuss key topics in \cref{sect:discussion}.
Finally, \cref{sect:related} reviews related work and \cref{sect:concl} concludes with future work.
All omitted details and proofs can be found in \appendixref{}\ifnoappendix{} in \cite{MatsushitaHirataWD25-RapunSL-arXiv}\fi.
 \section{Overview of \ourlogic}
\label{sect:overview}

In this section, we highlight the key ideas
unlocking the development of \ourlogic.

\subsection{Handling Superposition, Compositionally}
\label{sect:overview:superposition}

To motivate the problem with superposition,
imagine we are given the task of designing a quantum circuit~$\Cmd$
that should implement some Boolean 1-to-1 function
$f \from \{0, 1\}^n \to \{0, 1\}^n$
of $n$ qubits~$\bar{\qit}$,
  using basic quantum gates,
\eg Toffoli gates.
Sometimes, some auxiliary qubits are necessary to do so.
A well-understood technique for managing auxiliary state
is to employ so-called \emph{dirty qubits}~\cite{HanerRS17-factoring-using, Gidney18-halving-cost, NieZS24-quantum-circuit}:
a qubit~\qitTmp{} that serves as auxiliary workspace,
and is not assumed to be in any particular state (hence ``dirty'').
As long as \qitTmp{} is returned to its original state once the computation
is done, even if it was in a superimposed state with other
qubits in the context, the only state change would be in $\bar{\qit}$.

Informally, one would reason about $\Cmd$ as follows.
Since $\Cmd$ is implemented with linear operators, it is itself linear.
Therefore, it is sufficient to analyse its behaviour on \emph{classical} states
for $\bar{\qit}$ and $\qitTmp$, prove that $\bar{\qit}$ is correctly transformed
as dictated by $f$, and that $\qitTmp$ is restored to its input state.
Then, by linearity, any state in a superposition would simply see
the operator~$U_f$ (the unitary lifting of $f$)
applied to $\bar{\qit}$, and the rest of the qubits left untouched.

More formally, the argument would start by proving the correctness
of $\Cmd$ on a classical state, obtaining the following triple:
\begin{equation}
    \all{\bar{b} \in \{0, 1\}^n}
    \all{t \in \{0, 1\}}
     \hoare
       {\bar{\qit} \mapsto \ket{\smashbar{b}} * \qitTmp \mapsto \ket{t}}
       {\Cmd}
       {\bar{\qit} \mapsto \ket{f(\smashbar{b})} * \qitTmp \mapsto \ket{t}}
  \label{spec:dirty-classical}
\end{equation}
The challenge would then be to reuse the triple in a context where
there are some other qubits~$\bar{\qitB}$,
and all the qubits are in some arbitrary superimposed state~$\qket$,
\ie to deduce from~\cref{spec:dirty-classical} the triple:
\begin{equation}
  \hoare
    {(\bar{\qit}, \qitTmp, \bar{\qitB}) \mapsto \qketB}
    {\Cmd}
    {(\bar{\qit}, \qitTmp, \bar{\qitB}) \mapsto
     (U_{f} \otimes \gid_{\qitTmp, \bar{\qitB}})
     \qketB}
  \label{spec:dirty-superimp}
\end{equation}
To perform such a deduction,
\ourlogic{} provides two rules encoding the linearity argument:
\begin{proofrules}
  \infer*[lab = hoare-sum]{
    \hoare{\Prop}{\Cmd}{\PropB}
    \\
    \hoare{\Prop'}{\Cmd}{\PropB'}
  }{
    \hoare{\Prop + \Prop'}{\Cmd}{\PropB + \PropB'}
  }
  \label{rule:hoare-sum}

  \infer*[lab = hoare-scale]{
    \hoare
      {\Prop}
      {\Cmd}
      {\PropB}
  }{
    \hoare
      {\scl{\alpha}{\Prop}}
      {\Cmd}
      {\scl{\alpha}{\PropB}}
  }
  \label{rule:hoare-scale}
\end{proofrules}
The rules use the logical connectives~$ \Prop + \Prop' $
and~$ \scl{\alpha}{\Prop} $
representing the sum and scaling of quantum states respectively,
\ie \ourlogic{} defines them so that
$
  (\qit \mapsto \qket) + (\qit \mapsto \qketB)
    \dashvdash
  \qit \mapsto (\qket + \qketB)
$
and
$
  \scl{\alpha}{(\qit \mapsto \qket)}
    \dashvdash
  \qit \mapsto \alpha \qket
$ are valid.\footnote{We use $\vdash$ for entailment, and $\dashvdash$ for bidirectional entailment, \ie logical equivalence.
}\footnote{Interestingly, we can derive \ref{rule:hoare-scale} from the frame rule \ref{rule:hoare-frame},
  because $\scl{\alpha}{\Prop}$
  can be represented as $(() \mapsto \alpha) \, *\, \Prop$,
  where $() \mapsto \alpha$ is a zero-qubit state of a one-dimensional vector (\ie a scalar) $\alpha$.
}

Using such rules, it is possible to prove \cref{spec:dirty-superimp}
by first seeing $\qket$ as an explicit superposition of classical states:
$ \qket = \sum_{\bar{b}, t, \bar{c}} \alpha_{\bar{b}, t, \bar{c}} \ket{\smashbar{b}\pk1 t\pk1 \bar{c}} $.
Then by \ref{rule:hoare-sum} and \ref{rule:hoare-scale}, we reduce the goal
to a triple on a classical state; the state of $\bar{\qitB}$ can now simply be framed
to reduce the problem to our original \cref{spec:dirty-classical}.
Note how the framing of $\bar{\qitB}$ is only possible \emph{after} having applied
\ref{rule:hoare-sum}: in the original state, $\bar{\qit}$, $\qitTmp$ and $\bar{\qitB}$ are
arbitrarily entangled.

Perhaps surprisingly, no logic in the literature can perform the
intuitive steps above.
Prior logics, in fact, cannot soundly admit \ref{rule:hoare-sum}.
The logics of \citet{ZhouBHYY21-qsl} and \citet{SuZFY24-qsl}
are incompatible with it because they adopt a \emph{global-phase-insensitive}
semantics for their assertions
(\eg by representing states as a density matrix).
This choice seems justified as no measurement can distinguish between
two states that differ only in the global phase.
This is, however, fundamentally incompatible with \ref{rule:hoare-sum}.
For example, in a global-phase-insensitive logic,
the two states $\qit \mapsto \ket{1}$ and $\qit \mapsto - \ket{1}$ are indistinguishable (\ie $\qit \mapsto \ket{1} \dashvdash \qit \mapsto - \ket{1} $).
If sum were to preserve entailment here,
we could derive
$
  {\qit \mapsto 0}
  \vdash
  ({\qit \mapsto \onehalf \ket{1}}
    +
    {\qit \mapsto - \onehalf \ket{1}})
  \vdash
  ({\qit \mapsto \onehalf \ket{1}}
    +
    {\qit \mapsto \onehalf \ket{1}})
  \vdash
  {\qit \mapsto \ket{1}}
$---a meaningful state out of an impossible probability-zero state.

Retaining the global-phase information in \ourlogic{}
allows for the compositional treatment
of some deductions, like the ones using linearity.
The logic of \citet{LeLSS22-qsl} is the only other global-phase-sensitive
quantum separation logic we are aware of.
This logic is also fundamentally incompatible with \ref{rule:hoare-sum}
because of its model of \emph{measurements}.
The main contribution of \ourlogic{} is a new model which can support
unrestricted applications of \ref{rule:hoare-sum}
\emph{and} measurements in the same logic.

\subsection{The Main Challenge: Handling Measurements Soundly}
\label{sect:overview:measurement}

The fundamental issue with measurement is that it is not a unitary operator.
Specifically, it collapses a superposition state
$ \sum_i \alpha_i \ket{s_i} $
into the classical state $\ket{s_i}$ with probability $\abs{\alpha_i}^2$.
As a consequence,
one would think the following innocent-looking triples should be valid:
\begin{align}
  &\hoare{\qit \mapsto \ket{0}}{\gMZ[\qit]}{\qit \mapsto \ket{0}}
  &
  &\hoare{\qit \mapsto \ket{1}}{\gMZ[\qit]}{\qit \mapsto \ket{1}}
  \label{spec:meas-classical-wrong}
\end{align}
The triples say that an already classical state cannot be ``collapsed''
further by measuring it.
Remarkably, however,
any logic that would admit \cref{spec:meas-classical-wrong}
is fundamentally incompatible with \ref{rule:hoare-sum}.
This is because its application to \cref{spec:meas-classical-wrong}
would yield the invalid triple
\begin{equation}
  \hoare
    {\qit \mapsto (\alpha \ket{0} + \beta \ket{1})}
    {\gMZ[\qit]}
    {\qit \mapsto (\alpha \ket{0} + \beta \ket{1})}
\end{equation}
where no collapse of the superposition happens at all.
In fact, the logic of \citet{LeLSS22-qsl} would admit
\cref{spec:meas-classical-wrong}, and thus cannot support
the local reasoning afforded by \ref{rule:hoare-sum}.

At first sight, this might seem an insurmountable obstacle:
the \ref{rule:hoare-sum} rule seems to imply every program is linear,
but measurement is not.
Is our objective even achievable?

The key insight behind the solution we provide with \ourlogic
is that the real culprit is the handling of \emph{mixed states},
\ie probabilistic mixtures of pure quantum states (the outcomes of measurement).
In a logic like \citet{LeLSS22-qsl}'s,
just as in standard separation logic,
assertions talk about one possible outcome at a time.
For example, the most accurate representation of the mixture of two outcomes
$\Prop$ and $\PropB$ is $\Prop \lor \PropB$,
an assertion that does not fully describe the mixed state,
but only an over-approximation of the possible outcomes.

As a first step towards a solution, in \ourlogic{},
we move to assertions that can predicate over the whole mixed state.
To handle mixed states compositionally,
we introduce a new connective~$\Prop\mix{\hvar} \PropB$,
called \emph{tagged mixing},
which represents the mixed state resulting from running a measurement
tagged with $\hvar$ and with two outcomes~$0$ and $1$.\footnote{
  \citet{DengWX24-qsl}'s logic has a similar connective $\oplus$,
  but their model is fundamentally different from ours and incompatible with basis-locality.
  This will be discussed in more detail later in \cref{sect:related}.
}
Under this interpretation, the triples~\cref{spec:meas-classical-wrong}
become invalid: in \ourlogic{} they would assert that the state resulting
from a measurement is equivalent to one where no measurement was taken.
A valid rule for measurement in \ourlogic{} is:\footnote{
  The superscript $\hvar$ of the precondition means that the ownership of the variable $\hvar$ is required.
  See \cref{sect:logic:program} for the details.
}
\begin{proofrule}
  \infer*[lab = hoare-m${}_{\textsc{z}}$]{}{
    \hoare
      { \qit \mapsto (\alpha \ket{0} + \beta \ket{1}) }[^\hvar]
      { \gMZ^\hvar[\qit] }
      { (\qit \mapsto \alpha \ket{0}) \mix{\hvar} (\qit \mapsto \beta \ket{1}) }
  }
  \label{rule:hoare-mz}
\end{proofrule}
The rule states that from a state where \qit{} is in a superposition of states
$\ket{0}$ and $\ket{1}$ with coefficients $\alpha$ and $\beta$, respectively,
measuring \qit{} gives us a mixed state with two outcomes:
one where the state of \qit{} collapsed to the classical
state $\ket{0}$ and one where it collapsed to $\ket{1}$.
More precisely, the single outcomes in the postcondition retain their
coefficient (\ie we do not normalize the states) so that we can read off
the probabilities of each outcome (as the squared norm of the coefficient).

The crucial change induced by moving to assertions over mixed states is
that now the $\Prop + \PropB$ connective is not just given the meaning of
``linear combination'' but of ``outcome-wise linear combination''.
This is essentially what makes \ref{rule:hoare-sum} sound
in the presence of measurement: it no longer states that the program
is linear, but linear on each outcome
(\ie once all the measurement's effects have been factored out).
This is summarized in the following interchange rules,
which are validated by \ourlogic's model:
\begin{proofrules}
  \infer*[lab = mix-sum]{}{
    (\Prop_0\mix{\hvar} \Prop_1) + (\PropB_0\mix{\hvar} \PropB_1)
    \,\dashvdash\,
    (\Prop_0 + \PropB_0) \mix{\hvar} (\Prop_1 + \PropB_1)
  }
  \label{rule:mix-sum}

  \infer*[lab = mix-scale]{}{
    \scl{\alpha}{(\Prop_0\mix{\hvar} \Prop_1)}
    \,\dashvdash\,
    (\scl{\alpha}{\Prop_0}) \mix{\hvar} (\scl{\alpha}{\Prop_1})
  }
  \label{rule:mix-scale}
\end{proofrules}

Now we can resolve the apparent conflict between measurement and sum
we started with:
we can derive the specification
\ref{rule:hoare-mz}
of the behaviour of a measurement
on a superposition state,
from a specification of its behaviour on classical states.
In \ourlogic, the valid rules for measuring a classical state are:
\begin{align}
  &\hoare
    {\qit \mapsto \ket{0}}[^\hvar]
    {\gMZ^\hvar[\qit]}
    {
      (\qit \mapsto \ket{0})
      \mix{\hvar}
      \mghost{(\qit \mapsto \ket{1})}
      {(\qit \mapsto 0)}
    }
  \label{spec:meas-mix-zero}
  \\
  &\hoare
    {\qit \mapsto \ket{1}}[^\hvar]
    {\gMZ^\hvar[\qit]}
    {
      \mghost{(\qit \mapsto \ket{0})}
      {(\qit \mapsto 0)}
      \mix{\hvar}
      (\qit \mapsto \ket{1})
    }
  \label{spec:meas-mix-one}
\end{align}
Crucially, for these to be valid in our model,
they have to include in the postcondition
an explicit tagged mix connective,
with only one meaningful branch and one
probability-zero outcome (the measurement that cannot materialize)
represented as the zero-norm state $\qit \mapsto 0 $.
In fact, we can now \emph{derive} \ref{rule:hoare-mz} from
\cref{spec:meas-mix-zero} and \cref{spec:meas-mix-one}:
\begin{derivation}
\infer*[Right = {\labelstep{meas-deriv:swap}}]{
\infer*[Right = {\labelstep{meas-deriv:sum}}]{
\infer*[Right = {\labelstep{meas-deriv:scale-0}}]{
  \hoare
      {\qit \mapsto \ket{0}}[^\hvar]
      {\gMZ^\hvar[\qit]}
      {
        \qit \mapsto \ket{0}
        \pk1\mix{\hvar}\pk1
        \qit \mapsto 0
      }
}{
  \hoare
      {\qit \mapsto \alpha \ket{0}}[^\hvar]
      {\gMZ^\hvar[\qit]}
      {
        \qit \mapsto \alpha \ket{0}
        \pk1\mix{\hvar}\pk1
        \qit \mapsto 0
      }
}
\\
\infer*[Right = {\labelstep{meas-deriv:scale-1}}]{\hoare
      {\qit \mapsto \ket{1}}[^\hvar]
      {\gMZ^\hvar[\qit]}
      {
        \qit \mapsto 0
        \pk1\mix{\hvar}\pk1
        \qit \mapsto \alpha \ket{1}
      }
}{
  \hoare
      {\qit \mapsto \beta \ket{1}}[^\hvar]
      {\gMZ^\hvar[\qit]}
      {
        \qit \mapsto 0
        \pk1\mix{\hvar}\pk1
        \qit \mapsto \beta \ket{1}
      }
}}{
  \hoare
      {\qit \mapsto (\alpha \ket{0} + \beta \ket{1})}[^\hvar]
      {\gMZ^\hvar[\qit]}
      {
        \paren{\pk1
          \qit \mapsto \alpha \ket{0}
          \pk1\mix{\hvar}\pk1
          \qit \mapsto 0
        \pk1}
        +
        \paren{\pk1
          \qit \mapsto 0
          \pk1\mix{\hvar}\pk1
          \qit \mapsto \beta \ket{1}
        \pk1}
      }
}}{
  \hoare
      {\qit \mapsto (\alpha \ket{0} + \beta \ket{1})}[^\hvar]
      {\gMZ^\hvar[\qit]}
      {
        \qit \mapsto \alpha \ket{0}
        \pk1\mix{\hvar}\pk1
        \qit \mapsto \beta \ket{1}
      }
}
\end{derivation}
We start by applying \ref{rule:hoare-scale} and \ref{rule:mix-scale}
to \cref{spec:meas-mix-zero,spec:meas-mix-one} in
\cref{meas-deriv:scale-0,meas-deriv:scale-1},
to introduce the coefficients for each (classical) state.
Then, in \cref{meas-deriv:sum},
we apply \ref{rule:hoare-sum} to combine the states in a superposition;
this gives a postcondition which is the sum of two mixed states.
In the final \cref{meas-deriv:swap},
we use \ref{rule:mix-sum} to obtain a mixed state of sums, as required;
note how summing a state with the impossible outcome $\qit \mapsto 0$
leaves the state unchanged.

\subsection{The Three Layers of Locality}
\label{sect:overview:locality}

As we described, \ourlogic{} supports three locality principles
at the same time, displayed in \cref{fig:locality-rules}.
That is, in a mixed state, we can reason on a per-outcome basis
(\ref{rule:hoare-mix});
in a superposition state, we can reason on the classical basis states
(\ref{rule:hoare-sum});
and in a state where some qubits are disentangled, we can focus
on the relevant ones and ignore the others
(\ref{rule:hoare-frame}).
Given that a proof in \ourlogic{} will inevitably involve a combination
of all three layers, it is crucial to study the interaction between the three
constructs of mixing, sum and separation.
Our model is carefully constructed to enjoy a number of
distributivity/interchange rules that allow for flexible combinations of the connectives,
which we review next.

A first illustration of the flexibility of \ourlogic{}
was already presented in \ref{rule:mix-sum},
which shows an interchange law between mixing and sum.
This allows for handling superposition and mixing in any order
without losing information.
A similar interchange law holds for mixings arising from two consecutive measurements on different qubits.
\begin{proofrule}
  \infer*[lab = mix-mix]{}{
    (\Prop_{00} \mix{\hvar} \Prop_{01})
      \mix{\hvarB}
    (\Prop_{10} \mix{\hvar} \Prop_{11})
    \,\dashvdash\,
    (\Prop_{00} \mix{\hvarB} \Prop_{10})
      \mix{\hvar}
    (\Prop_{01} \mix{\hvarB} \Prop_{11})
  }
  \label{rule:mix-mix}
\end{proofrule}
The \ref{rule:mix-mix} rule says that
\ourlogic's model is insensitive to the order
in which measurements are taken,
showing that the $\mix{\hvar}$ connective is not simply
a representation of the program's measurements in the assertions,
but a genuine compositional abstraction over them.

Finally, separation is also well-behaved concerning superposition and mixing:
\begin{proofrules}
  \infer*[lab = sum-frame]{}{
    (\Prop + \PropB)*\PropC
    \,\vdash\,
    (\Prop*\PropC) + (\PropB*\PropC)
  }
  \label{rule:sum-frame}

  \infer*[lab = mix-frame]{}{
    (\Prop \mix{\hvar} \PropB)*\PropC
    \,\vdash\,
    (\Prop*\PropC) \mix{\hvar} (\PropB*\PropC)
  }
  \label{rule:mix-frame}
\end{proofrules}
Thanks to \ref{rule:mix-frame},
adding a disentangled frame to a mixed state
does not disallow per-outcome reasoning.
As we explain next, the reverse direction of \ref{rule:mix-frame}
is crucial for the scalability of reasoning in \ourlogic.

\subsection{Abstraction}
\label{sect:overview:abstraction}

As we argued, we can only hope to have a sound logic if
every measurement introduces a mixing connective in the postcondition.
Without care, this can induce an exponential growth in the
outcomes to be considered.
The key tool provided by \ourlogic{} to control this complexity is the
reverse direction of the \ref{rule:mix-frame} rule:
\begin{proofrule}
  \infer*[lab = mix-unframe]{
    \PropC \col \precise
  }{
    (\Prop*\PropC) \mix{\hvar} (\PropB*\PropC)
    \,\vdash\,
    (\Prop \mix{\hvar} \PropB)*\PropC
  }
  \label{rule:mix-unframe}
\end{proofrule}
The \ref{rule:mix-unframe} rule
says that it is possible (under a technical condition on $R$)
to factor out the portions of state that are common to multiple outcomes
into a single frame.
This allows any reasoning that only depends on $R$ to be done once,
and for the remaining information to be framed around the reasoning.

To show more concretely the positive effect of this rule on the abstraction
capabilities of \ourlogic, let us consider a specific example
which we treat in full detail in \cref{sect:cases:meas-cnot}.
In certain quantum architectures designed
to support fault tolerance~\cite{HorsmanFDM12-lattice-surgery},
the implementation of basic \mbox{2-qubit} gates, such as \gCNOT{} (a.k.a. \gCX{}), can be improved by
implementing their functionality through a combination of
\mbox{1-qubit} unitary gates and \mbox{2-qubit} measurements~\cite{FowlerG19-lattice-surgery}.
In \cref{fig:mcnot}, we show the schema of such a circuit
called \mCNOT,
encoding a \gCNOT{} gate.
The details are explained in \cref{sect:cases:meas-cnot},
but for our purposes, what is important is that the \mCNOT{}
encodes a \gCNOT{} between \qit{} and \qitC{},
using an auxiliary (ancilla) qubit \qitB{} initialized with state~$\ket{0}$, and
that the circuit contains several measurements.

Although the \mCNOT{} circuit is designed to be morally equivalent to the \gCNOT{},
formally, the two have important differences: \mCNOT{} requires an extra qubit,
and the measurements produce a mixed state.
Without care, using \mCNOT{} in compositional reasoning instead of the \gCNOT{}
may cause incorrect results.
However, if the circuit using \mCNOT{} is not making use of these differences,
reasoning about the overall correctness should proceed essentially
as if we used \gCNOT{} instead of \mCNOT{}.
In \ourlogic, we can replicate this rough argument fully formally.

The idea is that \mCNOT{} can be proven to satisfy a specification of the form:\footnote{
  To be precise, the precondition has the superscript of the classical variables used for storing the results of measurements.
}
\[
  \all{\qket}
  \hoare
    {(\qit, \qitC) \mapsto \qket * \qitB \mapsto \ket{0}}
    {\mCNOT[\qit, \qitB, \qitC]}
    {
      (\qit, \qitC) \mapsto \gCX \qket
      *
      \Prop_{\oplus}
    }
\]
where $\Prop_{\oplus}$ contains mixing connectives $\mix{}$ introduced by the measurements
and (per-outcome) information about the state of \qitB{} and the global phase.
The rest of the postcondition asserts that the effect on the \qit{} and \qitC{}
qubits is the same as the effect of a \gCNOT{} gate.
The ability to group the measurement ``side effects'' into $\Prop_{\oplus}$
is given by \ref{rule:mix-unframe}.
This grouping has two nice effects.
First, it allows any user of the specification to
lift reasoning that holds for \gCNOT{} to reasoning that holds for \mCNOT{} by
framing $\Prop_{\oplus}$.
Second,
when lifting reasoning in this way,
framing $\Prop_{\oplus}$ safeguards against
possible unsoundness,
\ie if a step applies to \gCNOT{} but not to \mCNOT{},
the frame $\Prop_{\oplus}$ prevents lifting the result to \mCNOT{}.

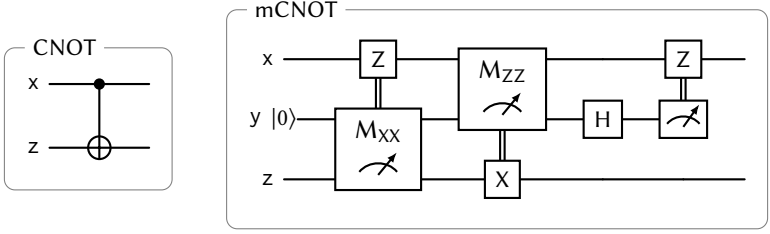
\begin{figure}
  \adjustfigure[\small]
  \begin{quantikz}[with border, pic label = \gCNOT{}, row sep = 0.3cm]
    \lstick{\qit} & \ctrl{2} & \\
    \setwiretype{n} & \\
    \lstick{\qitC} & \targ{} &
  \end{quantikz}
  \qquad
  \begin{quantikz}[with border, pic label = \mCNOT, row sep = 0.3cm]
    \lstick{\qit} &
    \gate{\gZ} \wire[d]{c} &
    \gate[2]{\smash{\raisebox{-.5\height}{\usebox{\meterinsideZ}}}} &
    &
    \gate{\gZ} \wire[d]{c} &
  \\
    \lstick{\qitB} \ket{0} &
    \gate[2]{\smash{\raisebox{-.5\height}{\usebox{\meterinsideX}}}} &
    \wire[d]{c} &
    \gate{\gH} &
    \meter{}
  \\
    \lstick{\qitC} &
    &
    \gate{\gX} & & &
  \end{quantikz}
  \caption{A \gCNOT{} gate (left) and its encoding \mCNOT{} using measurements (right).}
  \label{fig:mcnot}
  \Description{}
\end{figure}

We also note that, although we adopt a global-phase-sensitive logic,
we do not preclude users from employing density matrices
when a more compact representation is desired.
Density matrices can be encoded within our logic,
thus users can switch to use them to have smaller state descriptions;
however, this comes at the cost of reduced modularity.
We will see the details in \cref{sect:discussion}.

\subsection{Summary}
\label{sect:overview:summary}

In summary, we constructed a new logic, \ourlogic, that is able to achieve
modular reasoning across three ways of combining quantum programs:
by adding a disentangled state via \emph{separation},
by superposition via \emph{sum}, and
by adding measurements via \emph{mixing}.
In the remainder of the paper, we provide the model and rules of
\ourlogic, and evaluate it through a series of challenging case studies.
The omitted details and full proofs of soundness can be found in \appendixref{app:sect:logic}.
 \section{Preliminaries and Program Language}
\label{sect:lang}

\subsection{A Primer on Quantum Computing}
\label{sect:lang:primer}

The basic unit of data in quantum computing is the \emph{qubit},
which can take not only the \emph{classical states} $\ket{0}$ and $\ket{1}$,
but also a \emph{superposition} of them,
that is, a linear combination of the two states
$\qket = \alpha \ket{0} + \beta \ket{1} \in \mathbb{C}^2$.
Here, the squared norm $\norm{\qket}^2 = \abs{\alpha}^2 + \abs{\beta}^2$ represents the probability of that state.
More generally, the \emph{pure state} of a system of $n$ qubits
is a vector in a $2^n$-dimensional Hilbert space.
For example, the pure states of two qubits
are described by vectors of the form
$\alpha \ket{00} + \beta \ket{01} + \gamma \ket{10} + \delta \ket{11}$.
The states $\ket{b}$ with $b \in \mathbb{B}^n$ are the \emph{basis} states.
If two spaces $\Hspace_0$ and $\Hspace_1$
represent the states of quantum data $\qit$ and $\qitB$ respectively,
a state of the composite system $(\qit, \qitB)$ is called \emph{separated}
or \emph{disentangled} if
it can be represented as $\qket[_0] \otimes \qket[_1]$,
\ie there is no correlation between the two data.
If not, then the state is called \emph{entangled}.
An example of an entangled state is the Bell state
$\ketBell = (\ket{00} + \ket{11})/\sqrt{2}$.

Quantum states that differ only in a \emph{global phase} $\alpha \in \CC$,
\ie $\qket$ and $\alpha \qket$,
are physically indistinguishable: no measurement can detect the difference.
Only \emph{relative} phase---the difference in phase between coefficients of the
basis states---can be measured.

Quantum computing is performed by \emph{quantum gates}.
A quantum gate on $n$ qubits is a \emph{unitary operator} (or matrix) on the vector space $(\CC^2)^{\otimes n}$.
In particular, quantum gates are linear, and thus their effect is
determined by their effect on basis states:
$\Unitary (\sum_i \alpha_i \ket{b_i}) = \sum_i \alpha_i \Unitary \ket{b_i}$.
For example,
the $\gX$ gate (on a single qubit) flips
$\ket{0}$ to $\ket{1}$ and vice versa;
the $\gZ$ gate flips the sign of $\ket{1}$;
the $\gCX = \gCNOT$ gate (on two qubits) maps $\ket{a b}$ to $\ket{a, a \xor b}$
($\xor$ stands for xor).
Any classical bijection $f \col \{0, 1\}^n \to \{0, 1\}^n$ can be lifted to a unitary operator $\Unitary_f$ on $(\CC^2)^{\otimes n}$ that maps $\ket{x}$ to $\ket{f(x)}$.

Performing a measurement on a qubit in state
$\alpha \ket{0} + \beta \ket{1}$
makes the state collapse to $\ket{0}$ with probability $\abs{\alpha}^2$
and to $\ket{1}$ with probability $\abs{\beta}^2$.
More generally, a measurement $\Measure$ is defined as a set of linear operators $\{ \Measure^{(i)} \}_{i = 0}^{k}$ satisfying $\sum_{i = 0}^k \Measure^{(i)\dagger} \Measure^{(i)} = \gid$.
For example, the usual single-qubit $\gZ$-basis measurement $\gMZ$ is defined as $\curly{\gMZ^{(i)}}_{i = 0}^1$ such that $\gMZ^{(0)} \pk1\defeq\pk1 \ketbra{0}{0}$ and $\gMZ^{(1)} \pk1\defeq\pk1 \ketbra{1}{1}$.
The process of a measurement $\Measure$ is described as follows:
\[
  \qket \rightarrow \Measure^{(i)} \qket / \sqrt{\prob_i}
  \qquad \prob_i = \norm{\Measure^{(i)} \qket}^2.
\]
That is, the quantum state is projected with probability $\prob_i$ onto the subspace corresponding to each $\Measure^{(i)}$.
The probabilistic mixture of pure states resulting from a measurement
is called a \emph{mixed state}.

\subsection{Program Language}
\label{sect:lang:lang}

We define a simple, imperative language for quantum computing
with minimal features.
We equip it with a straightforward small-step operational semantics and
extract a denotational collecting semantics
which we use as a foundation for the model of our logic.

\paragraph{Syntax}

\begin{mathfig}
  \begin{gather*}
    \text{Qubit} \quad
    \qit, \qitB \, \in\, \Qubit
  \hspace{3em}
    \text{Variable} \quad
    \var, \varB, \varC, \hvar, \hvarB, \hvarC \, \in\, \dVar
  \\[.3em]
    \text{Pure expression} \quad
    \expr \sdefeq n \sor \var \sor \op(\bar{\expr})
  \hspace{3em}
    \text{Value} \quad
    \Val \, \ni\, \val, \valB \sdefeq n
  \\[.3em]
    \begin{aligned}
      \text{Command} \quad \dCmd \pk2\ni\pk2 \Cmd
      & \sdefeq \cskip
      \sor \Cmd;\pk1 \Cmd'
      \sor \ifelse{\expr}{\Cmd}{\Cmd'}
      \sor \whilex{\expr}{\Cmd}
    \\[-.2em] & \sskip
      \sor \var \store \expr
      \sor \Unitary[\bar{\qit}]
      \sor \var \store \Measure[\bar{\qit}]
    \end{aligned}
  \\[.4em]
    \ifonly{\expr}{\Cmd} \pk6\defeq\pk6
      \ifelse{\expr}{\Cmd}{\cskip}
  \hspace{2.5em}
    \Measure^\var[\bar{\qit}] \pk6\defeq\pk6
      \var \store \Measure[\bar{\qit}]
  \end{gather*}
  \caption{Syntax of the program language.}
  \label{fig:lang-syntax}
  \Description{}
\end{mathfig}

The syntax of our program language is summarized in \cref{fig:lang-syntax}.
We have pure expressions $\expr$ and commands $\Cmd$.
For simplicity, the command for measurement has the form $\var \store \Measure[\bar{\qit}]$, requiring that the result of the measurement be stored in some (classical) variable $\var$.
The notation $\Measure^\var[\bar{\qit}]$ used in \cref{sect:overview} is just shorthand for this command.
We represent Boolean values using $1$ for true and $0$ for false.
The guard of if statements and while loops is an expression.
To branch on a measurement result,
one can first store the outcome of the measurement into a variable
and then use it as a guard.

\paragraph{States}

\begin{mathfig}
  \[
    \textstyle
    \text{Quantum state} \hspace{.7em}
    \qket \pk3\in\pk3
    \Qstate \pk6\defeq\pk6
      \sum_{\pk1 \Qit \subseteq \Qubit}
      \bigotimes_{\qit \in \Qit} \CC^2
  \hspace{2em}
    \text{Store} \hspace{.7em}
    \Store \pk3\in\pk3 \dStore \pk6\defeq\pk6
      \dVar \pto \Val
  \]
  \caption{Domains for states in the language.}
  \label{fig:lang-states}
  \Description{}
\end{mathfig}

The domains for states in the language are summarized in \cref{fig:lang-states}.
The domain $\Qstate$ assigns a state vector $\qket$ to some set of qubits $\Qit$.
We denote its elements as $\Qit \mapsto \qket$, or just $\qket$ when there is no confusion.
A store $\Store \in \dStore$ assigns to some set of variables a value,
that is, classical data of any type.

\paragraph{Semantics of pure expressions}

The semantics $\sem{\expr}_\Store \pk2\in\pk2 \Val$ of a pure expression $\expr$ under the store $\Store$ is naturally defined as follows:
\[
  \sem{n}_\Store \pk5\defeq\pk5 n
\hspace{3em}
  \sem{\var}_\Store \pk5\defeq\pk5 \Store[\var]
\hspace{3em}
  \sem{\op(\bar{\expr})}_\Store \pk5\defeq\pk5
    \op(\overline{\sem{\expr}_\Store})
\]
Note that it is undefined when $\expr$ contains a variable $\var$ that is not in the domain of $\Store$.

\paragraph{Operational semantics}

\begin{mathfig}
  \begin{gather*}
    \text{Configuration} \quad
    \config \sdefeq (\bar{\Cmd}, \pk2 \qket, \pk2 \Store)
  \\[.4em]
    (\cskip, \, \bar{\Cmd}, \, \qket, \, \Store) \kredto
    (\bar{\Cmd}, \, \qket, \, \Store)
  \hspace{3em}
    (\Cmd_0; \Cmd_1, \, \bar{\Cmd'}\, \qket, \, \Store) \kredto
    (\Cmd_0, \, \Cmd_1, \, \bar{\Cmd'}, \, \qket, \, \Store)
  \\[.2em]
    (\ifelse{\expr}{\Cmd_1}{\Cmd_0}, \, \bar{\Cmd'}, \, \qket, \, \Store) \kredto
    (\Cmd_{\sem{\expr}_\Store}, \, \bar{\Cmd'}, \, \qket, \, \Store)
  \\[.2em]
    (\whilex{\expr}{\Cmd}, \, \bar{\Cmd'}, \, \qket, \, \Store) \kredto
    (\ifonly{\expr}{(\Cmd;\pk3 \whilex{\expr}{\Cmd})}, \,
      \bar{\Cmd'}, \, \qket, \, \Store)
  \\[.2em]
    (\var \store \expr, \, \bar{\Cmd}, \, \qket, \,
    \Store \{\var \store \val\})
    \kredto
    (\bar{\Cmd}, \, \qket, \,
    \Store \{\var \store \sem{\expr}_\Store\})
  \hspace{3em}
    (\Unitary[\bar{\qit}], \, \bar{\Cmd}, \, \qket, \, \Store) \kredto
    (\bar{\Cmd}, \, \Unitary_{\bar{\qit}} \qket, \, \Store)
  \\[.0em]
    (\var \store \Measure[\bar{\qit}], \, \bar{\Cmd}, \, \qket, \,
    \Store \{\var \store \val\})
    \kredto[i]
    (\bar{\Cmd}, \, \Measure^{(i)}_{\bar{\qit}} \qket, \,
    \Store \{\var \store i\})
  \end{gather*}
  \caption{Operational semantics of the program language.}
  \label{fig:opsem}
  \Description{}
\end{mathfig}

Next, we present the operational semantics of our language.
It is summarized in \cref{fig:opsem}.
The configuration $\config$ has the syntax shown at the top of \cref{fig:opsem}.
Here, $\qket \in \Qstate$ and $\Store \in \dStore$.
The small-step reduction relation $\config \redto \config'$, $\config \redto[i] \config'$ is defined by the rules in \cref{fig:opsem}.
Importantly, we put a label $i \in \NN$ on the reduction to indicate the result of a measurement $\Measure[\bar{\qit}]$.

\paragraph{Denotational semantics}

Now we present the denotational semantics of our language.
Our denotational semantics collects all the possible branches as a whole.
It is summarized in \cref{fig:denosem}.

\begin{mathfig}
  \begin{gather*}
    \text{Behaviour tree} \quad
    \BTree \, \ni\, \tree
      \pk2\sdefeq\pk2 \Branch(\bar{\tree})
      \pk2\sor\pk2 \Leaf(\qket, \Store)
      \pk2\sor\pk2 \Nil
    \quad\text{(coinductively)}
  \\[.5em]
    \sem{\empseq}(\qket, \Store) \pk6\defeq\pk6 \Leaf(\qket, \Store)
  \hspace{3em}
    \sem{\pk1 \cskip, \, \bar{\Cmd} \pk1}(\qket, \pk1 \Store) \pk6\defeq\pk6
      \sem{\pk1 \bar{\Cmd} \pk1}(\qket, \pk1 \Store)
  \\[.2em]
    \sem{\pk1 \Cmd_0; \Cmd_1, \, \bar{\Cmd'} \pk1}(\qket, \pk1 \Store) \pk6\defeq\pk6
      \sem{\pk1 \Cmd_0, \pk1 \Cmd_1, \pk1 \bar{\Cmd'} \pk1}(\qket, \pk1 \Store)
  \\[.2em]
    \sem{\pk1 \ifelse{\expr}{\Cmd_1}{\Cmd_0}, \, \bar{\Cmd'} \pk1}(\qket, \pk1 \Store) \pk6\defeq\pk6
      \sem{\Cmd_{\sem{\expr}_\Store}, \pk1 \bar{\Cmd'}}(\qket, \pk1 \Store)
  \\[.2em]
    \sem{\pk1 \whilex{\expr}{\Cmd}, \, \bar{\Cmd'} \pk1}(\qket, \pk1 \Store) \pk6\defeq\pk6
      \sem{\pk1 \ifonly{\expr}{(\Cmd;\pk3 \whilex{\expr}{\Cmd})}, \, \bar{\Cmd'} \pk1}(\qket, \pk1 \Store)
  \\[.2em]
    \sem{\pk1 \var \store \expr, \, \bar{\Cmd}}
      (\qket, \pk1 \Store\{\var \store \val\})
    \pk6\defeq\pk6
    \sem{\pk1 \bar{\Cmd} \pk1}
      (\qket, \pk1 \Store \{\var \store \sem{\expr}_\Store\})
  \hspace{3em}
    \sem{\pk1 \Unitary[\bar{\qit}], \, \bar{\Cmd} \pk1}(\qket, \pk1 \Store) \pk6\defeq\pk6
      \sem{\pk1 \bar{\Cmd} \pk1}(\Unitary_{\bar{\qit}} \qket, \pk1 \Store)
  \\[.0em]
    \sem{\pk1 \var \store \Measure[\bar{\qit}], \, \bar{\Cmd} \pk1}
    (\qket, \pk1 \Store \{ \var \store \val \})
    \pk6\defeq\pk6
      \Branch \pk2\paren[\big]{\pk3
        \overline{\sem{\pk1 \bar{\Cmd} \pk1}
          (\Measure^{(i)}_{\bar{\qit}} \qket, \pk1 \Store \{\var \store i\}) \nk2}^{\pk4 i}
      \pk2}
  \end{gather*}
  \caption{Denotational semantics of the program language.}
  \label{fig:denosem}
  \Description{}
\end{mathfig}

First, we define the domain of \emph{behaviour trees} $\tree \in \BTree$ as possibly infinite trees \emph{coinductively} by the syntax at the top of \cref{fig:denosem}.
The branch $\Branch(\bar{\tree})$ represents branching by a measurement.
The leaf $\Leaf(\qket, \Store)$ represents a branching-free terminating execution with the final state $(\qket, \Store)$.
The nil tree $\Nil$ represents a branching-free non-terminating execution.

We also define the partial order $\tree \le \tree'$ over behaviour trees coinductively by the rules $\Nil \pk2\le\pk2 \tree$, $\Leaf(\qket, \Store) \pk2\le\pk2 \Leaf(\qket, \Store)$, and ``if $\tree_i \le \tree'_i$ for each $i$, then $\Branch(\bar{\tree}) \pk2\le\pk2 \Branch(\bar{\tree'})$''.
In other words, $\tree \le \tree'$ means that $\tree'$ can be obtained from $\tree$ by replacing each occurrence of $\Nil$ with some tree.
Also, we define the child access $\tree.i$ as $\Branch(\bar{\tree'}).i \pk1\defeq\pk1 \tree'_i$ and undefined otherwise.

The denotational semantics $\sem{\pk1 \bar{\Cmd} \pk1}(\qket, \Store) \pk1\in\pk1 \BTree$ is defined \emph{inductively} by the equations in \cref{fig:denosem}.
The semantics is defined for a general sequence of commands $\bar{\Cmd}$ ($\empseq$ is the empty sequence).
Technically, the \emph{least} fixed point for the equations in \cref{fig:denosem} is constructed as the limit $\lim_{n \to \infty} \sem{-}^n(-, -)$ of $n$-th approximations $\sem{-}^n(-, -) \col \SemArg \pto \BTree$, leaving parts that have not terminated in $n$ steps as $\Nil$.\footnote{
  First, let $\Sem(f)(-, -, -) \pk2\col\pk2 \SemArg \pto \BTree$ for $f \col \SemArg \pto \BTree$ (where $\SemArg \pk1\defeq\pk1 \dCmd^* \pk1\times\pk1 \Qstate \pk1\times\pk1 \dStore$) be the map obtained by replacing all the self-references to $\semop$ with $f$ in the definition of $\sem{\pk1 \bar{\Cmd} \pk1}(\qket, \Store)$ in \cref{fig:denosem}.
  Then the $n$-th approximation is inductively defined by
  $\sem{-}^0(-, -) \pk2\defeq\pk2 \lam{\_, \_, \_} \Nil$
  and $\sem{-}^{n + 1}(-, -) \pk2\defeq\pk2 \Sem(\sem{-}^n(-, -))$.
  Finally, we set $\sem{-}(-, -) \pk2\defeq\pk2 \lim_{n \to \infty} \sem{-}^n(-, -)$.
  Here, we take the limit of an $\omega$-chain (\ie increasing sequence) $\sem{-}^0(-, -) \le \sem{-}^1(-, -) \le \sem{-}^2(-, -) \le \cdots$, which is defined because $\BTree$ is $\omega$-complete.
  Note that partial maps $\SemArg \pto \BTree$ are given the pointwise order, where undefined parts are regarded as the maximum element.
}
Notably, we can enjoy equational reasoning using the denotational semantics.

The following theorem formalizes the relation between
operational and denotational semantics.

\begin{theorem}[Equivalence of the operational and denotational semantics]
\label{thm:opsem-denosem}
  Take any configuration $\config \pk1=\pk1 (\bar{\Cmd}, \qket, \Store)$.
  The configuration never gets stuck in any branches (\ie any $\config'$ reachable from $\config$ is reducible) if and only if $\pk3\tree \pk1\defeq\pk1 \sem{\pk1 \bar{\Cmd} \pk1}(\qket, \Store)$ is defined.
  Moreover, assuming that $\tree$ is defined, $\config \redsto[\bar{i}] (\qketB, \Store')$ holds if and only if $\pk2\tree\overline{.i} = \Leaf(\qketB, \Store')$.
\end{theorem}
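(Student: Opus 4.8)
The plan is to prove the two biconditionals by coinduction/induction on the structure of the behaviour tree, exploiting the fact that $\sem{-}(-,-)$ is constructed as the limit of the finite approximations $\sem{-}^n(-,-)$. The essential observation is that a single step of the approximation functor $\Sem$ mirrors exactly one ``administrative or computational'' step of the small-step relation, once we account for the fact that the denotational semantics operates on a \emph{sequence} of commands, whereas $\redto$ works on a stack-like configuration $(\bar\Cmd,\qket,\Store)$. So the first thing I would do is set up a tight correspondence between configurations $\config = (\bar\Cmd,\qket,\Store)$ and the ``pending work'' implicit in a call $\sem{\bar\Cmd}(\qket,\Store)$, and observe that each clause of \cref{fig:denosem} is the denotational image of the corresponding clause of \cref{fig:opsem}: the non-measurement clauses each correspond to a single unlabelled $\redto$ step that leaves the structure of the semantics call otherwise intact, while the measurement clause produces a $\Branch$ node whose $i$-th child is exactly $\sem{\bar\Cmd}(\Measure^{(i)}_{\bar\qit}\qket, \Store\{\var\store i\})$, matching the labelled step $\redto[i]$.

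For the \textbf{first biconditional} (never stuck iff $\tree$ defined), I would argue as follows. For the ``only if'' direction, suppose $\config$ never gets stuck in any branch; I show by induction on $n$ that the $n$-th approximation $\sem{\bar\Cmd}^n(\qket,\Store)$ reaches $\Nil$ only at positions that correspond to genuinely non-terminated-in-$n$-steps executions, and that at each such position the configuration is still reducible (never $\bot$, i.e. never ``undefined'' in the partial-map sense) — this uses that reducibility of the reachable configurations rules out the only way $\Sem$ can fail, namely evaluating $\sem{\expr}_\Store$ at a variable outside $\dom(\Store)$ or applying a command clause with no matching case. Passing to the limit, $\tree = \sem{\bar\Cmd}(\qket,\Store)$ is defined. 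For the ``if'' direction I argue the contrapositive: if some reachable $\config'$ is stuck, then tracing the finite reduction $\config \redsto \config'$ and using the step-for-step correspondence, the approximation $\sem{\bar\Cmd}^n(\qket,\Store)$ for $n$ larger than the length of that reduction must attempt the failing evaluation at the corresponding position, hence is undefined, hence so is the limit.

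For the \textbf{second biconditional} ($\config \redsto[\bar i](\qketB,\Store')$ iff $\tree\,\overline{.i} = \Leaf(\qketB,\Store')$, assuming $\tree$ defined), the forward direction is a straightforward induction on the length of the reduction sequence $\config \redsto[\bar i](\qketB,\Store')$: at each step, either the label is empty and the corresponding denotational clause rewrites $\sem{\bar\Cmd}(\qket,\Store)$ to $\sem{\bar\Cmd'}(\qket',\Store')$ without consuming a child index, or the step is a measurement with label $i$ and the clause exposes a $\Branch$ whose $i$-th child is the semantics of the continuation; composing, $\tree\,\overline{.i}$ descends exactly to $\Leaf(\qketB,\Store')$. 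The backward direction goes by induction on the length of the index list $\bar i$ (equivalently, on the finite path from the root of $\tree$ to the leaf): if $\bar i$ is empty, then $\sem{\bar\Cmd}(\qket,\Store) = \Leaf(\qketB,\Store')$ forces $\bar\Cmd$ to reduce, via a finite chain of unlabelled steps (terminating because each such step strictly decreases a well-founded measure on $\bar\Cmd$, there being no $\Branch$ in the way), to $(\qketB,\Store')$; if $\bar i = i\,\bar j$, then the root must be a $\Branch$, which by the clauses of \cref{fig:denosem} can only arise after a finite chain of unlabelled steps followed by one $\redto[i]$ step exposing $\sem{\bar\Cmd'}(\Measure^{(i)}_{\bar\qit}\qket,\Store\{\var\store i\})$ as the $i$-th child, and we apply the induction hypothesis to that child.

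The \textbf{main obstacle} I expect is not any single step but the bookkeeping needed to make the ``step-for-step correspondence'' precise and robust: the denotational semantics flattens sequential composition ($\Cmd_0;\Cmd_1$ becomes the list $\Cmd_0,\Cmd_1,\bar\Cmd'$) and similarly desugars \texttt{while}, so the obvious induction on the command does not go through directly, and one must instead induct on an approximation index (or on the length of a reduction) while carrying an invariant relating the command-sequence in a configuration to the argument of a semantics call. A secondary subtlety is handling non-termination cleanly: one has to check that the $\Nil$ leaves of a finite approximation are precisely the ``not yet finished'' positions and do not spuriously signal either stuckness or termination, which is where the $\omega$-completeness of $\BTree$ and the pointwise order on partial maps (with undefined as top) are used to justify passing to the limit.
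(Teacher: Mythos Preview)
Your overall strategy aligns with the paper's: both use the finite approximations $\sem{-}^n$ for the first biconditional, and both unfold the semantics along a reduction for the forward direction of the second. However, there is a genuine gap in your backward direction of the second biconditional.

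You claim that when $\bar i$ is empty and $\tree = \Leaf(\qketB, \Store')$, the chain of unlabelled reduction steps from $\config$ to the terminal configuration terminates ``because each such step strictly decreases a well-founded measure on $\bar\Cmd$''. This is false: the while-unfolding rule takes $\whilex{\expr}{\Cmd}$ to $\ifonly{\expr}{(\Cmd;\,\whilex{\expr}{\Cmd})}$, which \emph{increases} any natural syntactic size measure on the command sequence, so no such well-founded measure on $\bar\Cmd$ exists. Termination of this chain is not a syntactic fact about commands at all; it is a consequence of the semantic assumption that the node in question is a $\Leaf$ rather than $\Nil$. The same issue recurs in your inductive step, where you need a finite chain of unlabelled steps before the $\Branch$ node.

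The paper's argument sidesteps this pitfall: since $\tree$ is the limit of the approximations and $\tree\overline{.i}$ is a $\Leaf$, there exists a finite $n$ with $\sem{\bar\Cmd}^n(\qket,\Store)\overline{.i} = \Leaf(\qketB, \Store')$ already. That $n$ bounds the number of denotational unfoldings, and hence (via the step-for-step correspondence you set up) the length of the corresponding reduction. Your induction on $|\bar i|$ can be salvaged by invoking exactly this finite-approximation fact in place of the spurious well-founded measure, but as written the argument does not go through.
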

\begin{proof}
  For the first statement, the backward implication is straightforward.
  The forward implication follows from the definedness of all the approximations $\semop^n$, proved by induction over $n$.
  For the second statement, the forward implication is obtained by unfolding $\semop$.
  The backward implication can be proved by the fact that there exists some $n$ such that the $n$-th approximation has that leaf, \ie $\sem{\pk1 \bar{\Cmd} \pk1}^n(\qket, \Store) \overline{.i} = \Leaf(\qketB, \Store')$.
\end{proof}
 \section{The \ourlogic{} Logic}
\label{sect:logic}

In this section, we formally introduce our logic, \ourlogic.
We treat assertions and judgments semantically,
with entailment rules being just valid lemmas about the semantic model.

\begin{mathfig}
  \begin{gather*}
    \res, \resB \pk3\in\pk3 \Res \pk6\defeq\pk6
      \Qstate \pk1\times\pk1 \dStore
  \hspace{3em}
    \Prop, \PropB \pk3\in\pk3 \SLProp \pk6\defeq\pk6 \Pow(\Mult(\Res))
  \\[.1em]
    \Prop \vdash \PropB \pk6\defeq\pk6 \Prop \subseteq \PropB
  \hspace{3em}
    \Prop\vdash \Prop
  \hspace{3em}
    \frac{
      \Prop \vdash \PropB
    \hspace{1.5em}
      \PropB\vdash \PropC
    }{
      \Prop \vdash \PropC
    }
  \hspace{3em}
  \end{gather*}
  \caption{Model of \ourlogic's SL propositions.}
  \label{fig:model-slprop}
  \Description{}
\end{mathfig}

\subsection{Resources, Propositions and Entailment}
\label{sect:logic:basics}

We start by fixing notation for multisets,
which we use to build our model of mixed state resources.

\begin{definition}[Multisets]
\label{def:multiset}
  For a set $A$, the set of multisets $\Mult(A)$ is defined as the quotient of the set\footnote{
    Here, we fix some universe of sets $\Set$.
  } $\bigcup_{I \in \Set} I \to A$ over the following equivalence relation:
  $f \col I \to A$ and $g \col J \to A$ are equivalent if there is a bijection $\prob \col I \to J$ such that $f = g \circ \prob$.
  In other words, multisets are maps with the key forgotten.
  We write $\floor{f} \in \Mult(A)$ for the multiset of the equivalence class of $f \col I \to A$.

  We use the extensional notation $\mset{a_0, \dots, a_{n - 1}}$, regarding the sequence as a map from $\{0, \dots, n - 1\}$.
  For example, the multiset $\mset{0, 1, 1, 2}$ is the same as $\mset{2, 1, 0, 1}$ but different from $\mset{0, 1, 2}$.
  We also use the comprehension notation for multisets.
  For example, we can write $\mset{f\, i \mid i \in I}$ for $\floor{f \col I \to A}$.
  Also, we can take an element from multisets in the comprehension notation, using multiset membership $\mIn$, taking the multiplicity into account.
  For example, $\mset{ n \mid n \mIn \mset{0, 0, 1, 5}, \pk1 n < 2 }$ means the multiset $\mset{0, 0, 1}$, not $\mset{0, 1}$.

  Also, multiset inclusion $\mult \msubseteq \mult'$ over $\mult, \mult' \in \Mult(A)$ is defined as follows:
  $\floor{f \col I \to A} \msubseteq \floor{g \col J \to A}$ holds if and only if there exists an injection $\prob \col I \to J$ such that $f = g \circ \prob$.
  For example, $\mset{0, 1, 2} \msubseteq \mset{0, 1, 2, 2}$ holds, but $\mset{0, 1, 2, 2} \msubseteq \mset{0, 1, 2}$ does not.
\qed
\end{definition}

Now we can model \ourlogic's SL propositions as presented in \cref{fig:model-slprop}.
A resource $\res \in \Res$ consists of a quantum state $\Qit \mapsto \qket \in \Qstate$ and a store $\Store \in \dStore$.
A mixed state is represented as a \emph{multiset} of resources,
collecting all the outcomes of branching due to measurement.
An SL proposition $\Prop \in \SLProp$ is modelled as the set of multisets of resources that satisfy it.
The entailment relation $\Prop \vdash \PropB$ is defined simply by set inclusion, which is clearly reflexive and transitive.
We can define the standard connectives, such as the universal and existential quantifiers $\all{x \in A}{\Prop_x}$, $\ex{x \in A}{\Prop_x}$, as usual, with standard proof rules;
please refer to \appendixref{app:sect:logic:standard} for the details.

\subsection{Bare Mixing}
\label{sect:logic:bmix}

\begin{mathfig}
  \begin{gather*}
    \nb \pk6\defeq\pk6
      \{ \mset{ } \}
  \hspace{2.5em}
    \Prop \oplus \PropB \pk6\defeq\pk6
      \set{ \mult \uplus \mult' }{ \mult \in \Prop, \, \mult' \in \PropB }
  \\[.1em] \textstyle
    \bigoplus_{x \in I} \Prop_x \pk6\defeq\pk6
      \set{ \biguplus_{x \in I} \mult_x }{
        \all{x \in I} \mult_x \in \Prop_{x}
      }
  \end{gather*} \vspace{-.3em}
  \begin{proofrules}
    \infer*[lab = bmix-mono]{
      \Prop \vdash \Prop'
    \hspace{1.5em}
      \PropB \vdash \PropB'
    }{
      \Prop \oplus \PropB \, \vdash\, \Prop' \oplus \PropB'
    }
    \label{lrule:bmix-mono}

    \infer*[lab = nb-bmix]{}{
      \nb \oplus \Prop \, \dashvdash\, \Prop
    }
    \label{lrule:nb-bmix}

    \infer*[lab = bmix-comm]{}{
      \Prop \oplus \PropB \, \dashvdash\,
        \PropB \oplus \Prop
    }
    \label{lrule:bmix-comm}

    \infer*[lab = bmix-assoc]{}{
      (\Prop \oplus \PropB) \oplus \PropC \, \dashvdash\,
        \Prop \oplus (\PropB \oplus \PropC)
    }
    \label{lrule:bmix-assoc}
  \end{proofrules} \vspace{.0em}
  \begin{proofrules}
    \infer*[lab = bigbmix-mono]{
      \all{x \in I}\, \paren{\Prop_x \vdash \PropB_x}
    }{ \textstyle
      \bigoplus_{x \in I} \Prop_x \, \vdash\,
        \bigoplus_{x \in I} \PropB_x
    }
    \label{lrule:bigbmix-mono}

    \infer*[lab = bigbmix-comm]{
      \text{$f \col I \to J$ is a bijection}
    }{ \textstyle
      \bigoplus_{x \in I} \Prop_{\pk1 f\pk2 x} \, \dashvdash\,
        \bigoplus_{y \in J} \Prop_y
    }
    \label{lrule:bigbmix-comm}
  \end{proofrules} \vspace{.2em}
  \begin{proofrules}
    \infer*[lab = bigbmix-assoc]{}{ \textstyle
      \bigoplus_{x \in I} \bigoplus_{y \in J_x} \Prop_{x, y} \, \dashvdash\,
        \bigoplus_{(x, y) \in \bigsqcup_{x \in I} J_x} \Prop_{x, y}
    }
    \label{lrule:bigbmix-assoc}

    \infer*[lab = nb-bigbmix]{}{ \textstyle
      \nb \, \dashvdash\, \bigoplus_{\_ \in \empset}
    }
    \label{lrule:nb-bigbmix}

    \infer*[lab = bmix-bigbmix]{}{ \textstyle
      \Prop_0 \oplus \Prop_1 \, \dashvdash\,
        \bigoplus_{i \in \{0, 1\}} \Prop_i
    }
    \label{lrule:bmix-bigbmix}
  \end{proofrules}
  \caption{Bare mixing and its proof rules in \ourlogic.}
  \label{fig:bmix}
  \Description{}
\end{mathfig}

Tagged mixing $\mix{\hvar}$ we introduced in \cref{sect:overview}
is derived, in \ourlogic, from a more fundamental connective~$\oplus$
called \emph{bare mixing}, defined in \cref{fig:bmix}.
A proposition $P \oplus Q$ represents all collections of outcomes
in the mixed states described by $P$ and by $Q$.
It is therefore naturally expressed via (pointwise) multiset sum.

\begin{definition}[Sum of multisets]
\label{def:mutiset-uplus}
  The sum of multisets $\mult \uplus \mult'$ is defined as follows:
  $\floor{f \col I \to A} \uplus \floor{g \col J \to A} \pk3\defeq\pk3 \floor{h \col I + J \to A}$ where $h\, (\inl i) \pk1\defeq\pk1 f\, i$ and $h\, (\inr j) \pk1\defeq\pk1 g\, j$.
  For example, $\mset{0, 1} \uplus \mset{1, 2} = \mset{0, 1, 1, 2}$.
  Moreover, the indexed sum $\biguplus_{x \in I} \mult_x$ over an indexed family of multisets $(\mult_x)_{x \in I}$ is defined as
  $\biguplus_{x \in I} \floor{f_x \col J_x \to A} \pk3\defeq\pk3
    \floor{g \col \bigsqcup_{x \in I} J_x \to A}$
  where $g\, (x, j) \pk1\defeq\pk1 f_x\, j$.
  We can also use the comprehension notation $\mset{\pk1 a \mid x \in I, \pk1 a \mIn \mult_x \pk1}$ for $\biguplus_{x \in I} \mult_x$.
\qed
\end{definition}

\Cref{fig:bmix} present rules that reflect basic properties of
the multiset sum semantics.
Notably, binary mixing $\oplus$ has the unit $\nb$ (\ref{lrule:nb-bmix}) and is commutative (\ref{lrule:bmix-comm}) and associative (\ref{lrule:bmix-assoc}).
Here, we introduce the no-behaviour assertion $\nb$, modelled as the empty multiset.
As we allow finite and infinite multisets,
we can also introduce the indexed mixing $\bigoplus_{x \in I} \Prop_x$ over an arbitrary (possibly infinite) set $I$.
Indexed mixing is commutative and associative (\ref{lrule:bigbmix-comm}, \ref{lrule:bigbmix-assoc}).
Note that $\nb$ corresponds to mixing indexed over the empty set $\empset$ (\ref{lrule:nb-bigbmix}; here the body of $\bigoplus$ is an empty family of propositions) and binary $\oplus$ corresponds to mixing indexed over $\{0, 1\}$ (\ref{lrule:bmix-bigbmix}).

\subsection{Separating Conjunction}
\label{sect:logic:sep}

\begin{mathfig}
  \begin{gather*}
    \unit_\Qstate \pk3\defeq\pk3 \empset \mapsto 1
  \hspace{3em}
    (\Qit \mapsto \qket) \pk2\cdot_\Qstate\pk2 (\QitB \mapsto \qketB) \pk8\defeq\pk8
      \Qit \cup \QitB \pk2 \mapsto \pk2 \qket \otimes \qketB
      \hspace{.9em} \text{if $\Qit \pk1\cap\pk1 \QitB \pk2=\pk2 \empset$}
  \\[.1em]
    \unit_\dStore \pk3\defeq\pk3 \empset
  \hspace{3em}
    \Store \pk2\cdot_\dStore\pk2 \Store' \pk8\defeq\pk8
      \Store \cup \Store'
      \hspace{.9em} \text{if $\dom \Store \pk1\cap\pk1 \dom \Store' \pk2=\pk2 \empset$}
  \\[.1em]
    \unit_\Res \pk3\defeq\pk3 (\unit_\Qstate, \unit_\dStore)
  \hspace{3em}
    (\qket, \Store) \pk2\cdot_\Res\pk2 (\qketB, \Store') \pk6\defeq\pk6
      (\qket \cdot_\Qstate \qketB, \pk3 \Store \cdot_\dStore \Store)
  \end{gather*}
  \caption{PCM structure over $\Qstate$, $\dStore$, and $\Res$.}
  \label{fig:res-pcm}
  \Description{}
\end{mathfig}

Our next question is how to define separating conjunction $*$.
Following the usual approach, we use PCMs (partial commutative monoids).

\begin{definition}[PCM]
\label{def:pcm}
  A partial commutative monoid (PCM) $(A, \pk2 \unit_A \in A, \pk2 \cdot_A \col A \times A \pto A)$ is a set $A$ equipped with the unit $\unit_A$ and the partial product $\cdot_A$ (we may omit the subscript) such that $\unit_A \cdot_A a = a$, $a \cdot_A b = b \cdot_A a$, and $(a \cdot_A b) \cdot_A c = a \cdot_A (b \cdot_A c)$.
  Here, we use the Kleene equality, where the left-hand side is defined if and only if the right-hand side is defined.
\qed
\end{definition}

The PCM structure over $\Qstate$, $\dStore$ and $\Res$, presented in \cref{fig:res-pcm}, is standard.
For stores $\dStore$, the product is defined only if the domains are disjoint.
For quantum states $\Qstate$, the product is defined only if the qubit sets are disjoint and uses the tensor product $\qket \otimes \qketB$ for the qubit vector.
For resources $\Res$, we simply define operations component-wise.

\begin{mathfig}
  \begin{gather*}
    \unit_{\pk1\Mult(\Res)} \pk3\defeq\pk3 \mset{ \unit_\Res }
  \hspace{3em}
    \mult \pk2\cdot_{\Mult(\Res)} \pk2 \mult' \pk8\defeq\pk8
      \mset{\pk2 \res \cdot_\Res \resB \mid \res \mIn \mult, \pk2 \resB \mIn \mult' \pk2}
  \\[.2em]
    \emp \pk6\defeq\pk6
      \{\pk4 \unit_{\pk1\Mult(\Res)} \pk4\}
  \hspace{2.5em}
    \Prop \pk1*\pk1 \PropB \pk6\defeq\pk6
      \set{\pk2 \mult \cdot_{\Mult(\Res)} \mult' }{ \mult \in \Prop, \, \mult' \in \PropB \pk2}
  \end{gather*} \vspace{-.3em}
  \begin{proofrules}
    \infer*[lab = sep-mono]{
      \Prop \vdash \Prop'
    \hspace{1.5em}
      \PropB \vdash \PropB'
    }{
      \Prop \pk1*\pk1\PropB \, \vdash\, \Prop' \pk1*\pk1 \PropB'
    }
    \label{lrule:sep-mono}

    \infer*[lab = emp-sep]{}{
      \emp \pk1*\pk1 \Prop \, \dashvdash\, \Prop
    }
    \label{lrule:emp-sep}

    \infer*[lab = sep-comm]{}{
      \Prop \pk1*\pk1 \PropB \, \dashvdash\,
        \PropB \pk1*\pk1 \Prop
    }
    \label{lrule:sep-comm}

    \infer*[lab = sep-assoc]{}{
      (\Prop \pk1*\pk1 \PropB) \pk1*\pk1 \PropC \, \dashvdash\,
        \Prop \pk1*\pk1 (\PropB \pk1*\pk1 \PropC)
    }
    \label{lrule:sep-assoc}
  \end{proofrules} \vspace{-.3em}
  \begin{proofrules}
\infer*[lab = bigbmix-frame]{}{ \textstyle
      (\bigoplus_{x \in I} \Prop_x) \pk2*\pk2 \PropB \, \vdash\,
        \bigoplus_{x \in I} \pk2 (\Prop_x * \PropB)
    }
    \label{lrule:bigbmix-frame}

\infer*[lab = bigbmix-unframe]{
      \PropB \col \precise
    }{ \textstyle
      \bigoplus_{x \in I} \pk2 (\Prop_x * \PropB) \, \vdash\,
        (\bigoplus_{x \in I} \Prop_x) \pk2*\pk2 \PropB
    }
    \label{lrule:bigbmix-unframe}
  \end{proofrules}
  \caption{PCM structure over $\Mult(\Res)$, separating conjunction, and its proof rules in \ourlogic.}
  \label{fig:pcm-sep}
  \Description{}
\end{mathfig}

\begin{mathfig}
  \begin{gather*}
    \Prop \col \precise \pk9\defeq\pk9
      \all{\mult, \mult' \in \Prop} \pk3 \mult = \mult'
\\[.2em]
    \frac{
      \Prop \col \precise
    \hspace{1em}
      \PropB \vdash \Prop
    }{
      \PropB \col \precise
    }
  \hspace{2.5em}
    \nb, \pk2 \emp \col \precise
  \hspace{2.5em}
    \frac{
      \all{x} \pk2 \paren{\Prop_x \col \precise}
    }{
      \bigoplus_{x \in I} \Prop_x \col \precise
    }
  \hspace{2.5em}
    \frac{
      \Prop, \PropB \col \precise
    }{
      \Prop * \PropB \col \precise
    }
  \end{gather*}
  \caption{Precision of SL propositions.}
  \label{fig:precise}
  \Description{}
\end{mathfig}

Now, our SL propositions $\SLProp$ are a set of multisets $\Mult(\Res)$.
We want to give a PCM structure to them to define separating conjunction $*$, the core of separation logic.
The top of \cref{fig:pcm-sep} shows how we can do that.
The unit is just the singleton multiset consisting of the unit resource.
The product distributes over each argument multiset.
Note that membership $\res \mIn \mult$, $\resB \mIn \mult'$ in the comprehension notation takes the multiplicity into account.
More explicitly, $\cdot_{\Mult(\Res)}$ is defined as follows:
$\floor{f \col I \to \Res} \cdot_{\Mult(\Res)} \floor{g \col J \to \Res} \pk4\defeq\pk4
  \floor{\lam{(i, j) \in I \times J} \pk2 f\, i \cdot_\Res g\, j}$.

Directly using this PCM structure, we can define empty ownership $\emp$ and separating conjunction $\Prop * \PropB$.
We enjoy the rules \ref{lrule:sep-mono}, \ref{lrule:emp-sep}, \ref{lrule:sep-comm}, \ref{lrule:sep-assoc} directly from the definition and the PCM structure.
Also, bare mixing $\oplus$ distributes over separating conjunction $*$ by rules \ref{lrule:bigbmix-frame} and \ref{lrule:bigbmix-unframe}.
This comes from the fact that multiset sum $\uplus$ distributes over $\cdot_{\Mult(\Res)}$, \ie
$(\mult_0 \uplus \mult_1) \cdot_{\Mult(\Res)} \mult' \pk1=\pk1 (\mult_0 \cdot_{\Mult(\Res)} \mult') \pk2\uplus\pk2 (\mult_1 \cdot_{\Mult(\Res)} \mult')$ (with the Kleene equality).
Note that framing out an assertion $\PropB$ \ref{lrule:bigbmix-unframe} requires that the assertion $\PropB$ is precise, meaning that it represents up to one multiset of resources.
Very roughly speaking, assertions `without disjunction' are precise.
For example, in the rule \ref{lrule:bigbmix-unframe}, if we could set $\PropB \pk1=\pk1 (\var \mapsto 0 \lor \var \mapsto 1)$ (which is \emph{not} precise) and $\Prop_0 = \Prop_1 = \emp$ with $I = \{0, 1\}$, then this is unsound, because the state $\var \mapsto 0 \pk1\oplus\pk1 \var \mapsto 0$ is included in the left-hand side but not in the right-hand side.
\Cref{fig:precise} shows the definition and basic rules of the precision
judgment $\Prop \col \precise$, which are straightforward.

\subsection{Reasoning about Quantum Programs}
\label{sect:logic:program}

Now we present the features of \ourlogic{} for reasoning about quantum programs.
The sum $\Prop + \PropB$ of SL assertions, the heart of \ourlogic,
is discussed later in \cref{sect:logic:sum}.

\paragraph{Tokens for quantum and classical states}

\begin{mathfig}
  \begin{gather*}
    \bar{\qit} \mapsto \qket \pk6\defeq\pk6
      \set{
        \mset{ (\curly{\bar{\qit}} \mapsto \qket, \unit) }
      }{
        \distinct(\bar{\qit})
      }
  \hspace{3em}
    \lift{\alpha} \pk6\defeq\pk6
      () \mapsto \alpha
  \hspace{3em}
    \scl{\alpha}{\Prop} \pk6\defeq\pk6
      \lift{\alpha} * \Prop
  \\[.1em] \textstyle
    \var \mapsto \val \pk6\defeq\pk6
      \curly[\big]{
        \mset{
          (\unit, \curly{ (\var, \val) })
        }
      }
  \hspace{3em}
    \vtok{\bar{\var}} \pk6\defeq\pk6
      \bigsep_i\pk1 \ex{\val} \var_i \mapsto \val
  \hspace{3em}
    \rtok{\val} \pk6\defeq\pk6
      \ret \mapsto \val
  \end{gather*} \vspace{-.3em}
  \begin{proofrules}
    \infer*{}{
      \bar{\qit} \mapsto \qket, \pk2 \var \mapsto \val \col \precise
    }

    \infer*[lab = bigbmix-scale]{}{ \textstyle
      \scl{\alpha}{(\bigoplus_{x \in I} \Prop_x)} \, \dashvdash\,
        \bigoplus_{x \in I} \pk2 \scl{\alpha}{\Prop_x}
    }
    \label{lrule:bigbmix-scale}
  \end{proofrules} \vspace{-.5em}
  \begin{proofrules}
    \infer*[lab = qpoints-sep]{}{
      \bar{\qit} \mapsto \qket \pk3*\pk3
      \bar{\qitB} \mapsto \qketB \pk5\dashvdash\pk5
        (\bar{\qit}, \bar{\qitB}) \mapsto (\qket \otimes \qketB)
    }
    \label{lrule:qpoints-sep}

    \infer*[lab = scale-qpoints]{}{
      \scl{\alpha}{\paren{\bar{\qit} \mapsto \qket}} \pk4\dashvdash\pk4
        \bar{\qit} \mapsto \alpha \qket
    }
    \label{lrule:scale-qpoints}
\end{proofrules}
  \caption{Tokens for quantum and classical states and their proof rules.}
  \label{fig:tokens}
  \Description{}
\end{mathfig}

Now we introduce the tokens for quantum and classical states as shown in \cref{fig:tokens}.
The quantum points-to token $\bar{\qit} \mapsto \qket$ is defined simply as the quantum state with the domain $\curly{\bar{\qit}}$.
Here, $\distinct(\bar{\qit})$ means that the qubit names $\bar{\qit}$ are mutually distinct, \ie $\qit_i \ne \qit_j$ if $i \ne j$.
As a special case, we have $ () \mapsto \alpha$ for the state of a one-dimensional vector of zero qubit, which is just a complex-number coefficient $\alpha$ working as a global phase of the (global) state.
We abbreviate such an assertion with $\lift{\alpha}$ (or sometimes just $\alpha$)
and represent scaling $\scl{\alpha}{\Prop}$ as $\lift{\alpha} * \Prop$.
The classical points-to token $\var \mapsto \val$ is modelled as standard.
For utility, we introduce the classical variable token $\vtok{\bar{\var}}$, which represents ownership of the variables $\var$, while ignoring their values.
We also introduce the return-value token $\rtok{\val}$ for representing the return value of a pure expression (see \cref{fig:hoare} shown later).
For simplicity, it is derived as a classical points-to token for a special variable $\ret \in \dVar$ that clients cannot use in programs.

The tokens satisfy natural proof rules, as shown in \cref{fig:tokens}.
First, the quantum and classical points-to tokens are precise.
Thanks to this, in particular, the phase assertion $\lift{\alpha}$ is precise, and combining this with \ref{lrule:bigbmix-unframe} along with \ref{lrule:bigbmix-frame}, we can derive the property that mixing distributes over scaling (\ref{lrule:bigbmix-scale}).
Also, the quantum points-to tokens can be merged and split according to the tensor product $\otimes$ (\ref{lrule:qpoints-sep}).
So separating conjunction $*$ represents the disentanglement of quantum states along with the ownership disjointness.
Note that from it we can derive the scaling rule \ref{lrule:scale-qpoints}.
We also have a proof rule for permuting the qubits of a quantum points-to token;
see \appendixref{app:sect:logic:rules} for the details.

\paragraph{Hoare triples}

\begin{mathfig}
  \begin{gather*}
    \msem{\pk1 \expr \pk1}(\mult) \pk6\defeq\pk6
      \mset[\big]{\pk2 (\qket, \Store \uplus \curly{ (\ret, \sem{\expr}_\Store) } ) \pk5\big\vert\pk5 (\qket, \Store) \mIn \mult \pk2}
  \\[.1em]
    \WP{ \expr }{ \Prop } \pk8\defeq\pk8
      \set{ \mult }{ \msem{\pk1 \expr \pk1}(\mult) \in \Prop }
  \hspace{3em}
    \hoare{ \Prop }{ \expr }{ \PropB } \pk8\defeq\pk8
      \Prop \pk3\vdash\pk3 \WP{ \expr }{ \PropB }
  \\[.0em]
    \hoare{\pk1 \emp \pk1}{ \val }{\pk1 \rtok{\val} \pk1}
  \hspace{3em}
    \hoare{\pk2 \var \mapsto \val \pk2}{ \var }{\pk2 \rtok{\val} \pk2*\pk2 \var \mapsto \val \pk2}
  \hspace{3em}
    \frac{
      \all{i} \pk2 \hoare{\pk2 \Prop \pk2}{ \expr_i }{\pk2 \rtok{\val_i} \pk2*\pk2 \Prop \pk2}
    }{
      \hoare{\pk2 \Prop \pk2}{ \op(\bar{\expr}) }{\pk2 \rtok{\op(\bar{\val})} \pk2*\pk2 \Prop \pk2}
    }
  \\[.05em]
    \frac{
      \hoare{ \Prop }{ \expr }{ \PropB }
    }{
      \hoare{\pk2 \Prop * \PropC \pk2}{ \expr }{\pk2 \PropB * \PropC \pk2}
    }
  \hspace{3em}
    \frac{
      \all{x \in A} \hoare{ \Prop_x }{ \expr }{ \PropB_x }
    }{
      \hoare{\pk2 \bigoplus_{x \in I} \Prop_x \pk2}{ \expr }{\pk2 \bigoplus_{x \in I} \PropB_x \pk2}
    }
  \hspace{3em}
    \frac{
      \hoare{ \Prop }{ \expr }{ \PropB }
    \hspace{1em}
      \PropB \vdash \PropB'
    }{
      \hoare{ \Prop }{ \expr }{ \PropB' }
    }
  \\[.8em]
    \Leaves(\tree) \pk6\defeq\pk6
      \mset[\big]{\pk2 (\qket, \Store) \pk5\big\vert\pk5
        \tree\overline{.i} = \Leaf(\qket, \Store) \pk2}
  \\[.1em]
    \msem{\pk1 \Cmd \pk1}(\mult) \pk8\defeq\pk8
      \mset[\big]{\pk2 (\qketB, \Store') \pk5\big\vert\pk5
        (\qket, \Store) \mIn \mult, \pk2
        (\qketB, \Store') \mIn \Leaves(\sem{\Cmd}(\qket, \Store)) \pk2}
  \\[.1em]
    \WP{ \Cmd }{ \Prop } \pk8\defeq\pk8
      \set{ \mult }{ \msem{\pk1 \Cmd \pk1}(\mult) \in \Prop }
  \hspace{3em}
    \hoare{ \Prop }{ \Cmd }{ \PropB } \pk8\defeq\pk8
      \Prop \pk3\vdash\pk3 \WP{ \Cmd }{ \PropB }
  \\[-.05em]
    \hoare{ \Prop }[^{\bar{\var}}]{ \expr }{ \PropB }[^{\bar{\varB}}] \pk8\defeq\pk8
      \hoare{ \Prop * \vtok{\bar{\var}} }{ \expr }{ \PropB * \vtok{\bar{\varB}} }
  \hspace{2.5em}
    \hoare{ \Prop }[^{\bar{\var}}]{ \Cmd }{ \PropB }[^{\bar{\varB}}] \pk8\defeq\pk8
      \hoare{ \Prop * \vtok{\bar{\var}} }{ \Cmd }{ \PropB * \vtok{\bar{\varB}} }
  \end{gather*} \vspace{-.4em}
  \begin{proofrules}
    \infer*[lab = hoare-unitary]{}{
      \hoare{\pk2 \bar{\qit} \mapsto \qket \pk2}
        { \Unitary[\bar{\qit}] }
        {\pk2 \bar{\qit} \mapsto \Unitary \qket \pk2}
    }
    \label{lrule:hoare-unitary}

    \infer*[lab = hoare-store]{
      \hoare{ \Prop }[^{\var}]{ \expr }
        { \rtok{\val} \pk1*\pk1 \PropB }^{\var}
    }{
      \hoare{\pk2 \Prop \pk2}[^{\var}]
        { \var \store \expr }
        {\pk2 \var \mapsto \val \, *\, \PropB \pk2}
    }
    \label{lrule:hoare-store}
  \end{proofrules} \vspace{-.5em}
  \begin{proofrules}
    \infer*[lab = hoare-measure]{}{
      \hoare{\pk2 \bar{\qit} \mapsto \qket \pk2}[^\var]
        { \var \store \Measure[\bar{\qit}] }
        {\pk2 \textstyle \bigoplus_i \pk2\paren{\pk2
          \var \mapsto i \pk3*\pk3
          \bar{\qit} \mapsto \Measure^{(i)} \nk2 \qket
          \pk2} \pk2}
    }
    \label{lrule:hoare-measure}
  \end{proofrules} \vspace{-.5em}
  \begin{proofrules}
    \infer*[lab = hoare-seq]{
      \hoare{ \Prop }{ \Cmd }{ \PropB }
    \hspace{1em}
      \hoare{ \PropB }{ \Cmd' }{ \PropC }
    }{
      \hoare{ \Prop }{ \Cmd; \Cmd' }{ \PropC }
    }
    \label{lrule:hoare-seq}

    \infer*[lab = hoare-if]{
      \hoare{ \Prop }{ \expr }
        {\pk1 \rtok{i} \pk1*\pk1 \PropB \pk1}
    \hspace{1.5em}
      \hoare{ \PropB }{ \Cmd_i }{ \PropC }
    }{
      \hoare{ \Prop }{ \ifelse{\expr}{\Cmd_1}{\Cmd_0} }{ \PropC }
    }
    \label{lrule:hoare-if}
  \end{proofrules} \vspace{.05em}
  \begin{proofrules}
    \infer*[lab = hoare-while]{
      \all{n} \pk1 \hoare{\pk2 \Prop_n \pk2}{ \expr }
        {\pk2 (\rtok{0} \pk1*\pk1 \PropB_n) \pk1\oplus\pk1
          (\rtok{1} \pk1*\pk1 \PropC_n) \pk2}
    \hspace{1.5em}
      \all{n} \pk1 \hoare{ \PropC_n }{ \Cmd }{ \Prop_{n + 1} }
    }{ \textstyle
      \hoare{\pk2 \Prop_0 \pk2}{ \whilex{\expr}{\Cmd} }
        {\pk2 \bigoplus_{n \in \NN} \PropB_n \pk2}
    }
    \label{lrule:hoare-while}

    \infer*[lab = hoare-frame]{
      \hoare{ \Prop }{ \Cmd }{ \PropB }
    }{
      \hoare{\pk2 \Prop * \PropC \pk2}{ \Cmd }{\pk2 \PropB * \PropC \pk2}
    }
    \label{lrule:hoare-frame}
  \end{proofrules} \vspace{.05em}
  \begin{proofrules}
    \infer*[lab = hoare-bigbmix]{
      \all{x \in A} \pk2
      \hoare{ \Prop_x }{ \Cmd }{ \PropB_x }
    }{ \textstyle
      \hoare{\pk2 \bigoplus_{x \in I} \Prop_x \pk2}{ \Cmd }{\pk2 \bigoplus_{x \in I} \PropB_x \pk2}
    }
    \label{lrule:hoare-bigbmix}

    \infer*[lab = hoare-post]{
      \hoare{ \Prop }{ \Cmd }{ \PropB }
    \hspace{1em}
      \PropB \vdash \PropB'
    }{
      \hoare{ \Prop }{ \Cmd }{ \PropB' }
    }
    \label{lrule:hoare-post}

    \infer*[lab = hoare-scale]{
      \hoare{ \Prop }{ \Cmd }{ \PropB }
    }{
      \hoare{\pk1 \scl{\alpha}{\Prop} \pk1}{ \Cmd }{\pk1 \scl{\alpha}{\PropB} \pk1}
    }
    \label{lrule:hoare-scale}
  \end{proofrules}
  \caption{Hoare triples and their proof rules.}
  \label{fig:hoare}
  \Description{}
\end{mathfig}

Now we are ready to introduce Hoare triples, as summarized in \cref{fig:hoare}.
We first define Hoare triples over pure expressions $\expr$.
For that, we introduce the \emph{logical semantics} $\msem{\pk1 \expr \pk1}(\mult) \pk1\in\pk1 \Mult(\Res)$ for $\mult \in \Mult(\Res)$
that adds the value of the expression
to the store at the special variable $\ret$ in every outcome.
This induces a notion of weakest precondition $\WP{\expr}{\Prop} \in \SLProp$, and we can derive the Hoare triple $\hoare{\Prop}{\expr}{\PropB}$ as an entailment towards the weakest precondition.
The definition satisfies the expected rules for triples.

We now turn to Hoare triples over \emph{commands} $\Cmd$.
We introduce an auxiliary function $\Leaves(\tree) \pk1\in\pk1 \Mult(\Res)$ for $\tree \in \BTree$, which collects all the leaves of the tree $\tree$ as a multiset of resources.
Then we define the \emph{logical semantics} $\msem{\pk1 \Cmd \pk1}(\mult) \pk1\in\pk1 \Mult(\Res)$ from the denotational semantics $\sem{\pk1 \Cmd \pk1}(\qket, \Store)$ by simply collecting all the leaves starting from each outcome in the input multiset.
Then the weakest precondition $\WP{\Cmd}{\Prop} \in \SLProp$ and the Hoare triple $\hoare{\Prop}{\Cmd}{\PropB}$ are derived as expected.
Notably, this Hoare triple provides a probabilistic version of total correctness, collecting the results of terminating branches only.
For utility, we put classical variables $\bar{\var}$ as the superscript of pre- or postconditions of Hoare triples to describe ownership of them $\vtok{\bar{\var}}$.

The Hoare triples satisfy the expected proof rules.
The rules for quantum operations \ref{lrule:hoare-unitary}, \ref{lrule:hoare-measure} and storing \ref{lrule:hoare-store} follow immediately from the model.
The sequential execution also satisfies the natural chain rule \ref{lrule:hoare-seq}.
Notably, the Hoare triple of \ourlogic{} naturally satisfies the frame rule \ref{lrule:hoare-frame}, the core source of modularity of separation logic, without any side conditions.
The scale rule \ref{lrule:hoare-scale} immediately follows from the frame rule.
Also, executions can be combined using bare mixing \ref{lrule:hoare-bigbmix}.
For example, to handle a conditional where both booleans may occur,
we can first prove
$\hoare{\Prop_i}{\expr}{\rtok{i} * \PropB_i}$
and $\hoare{\PropB_i}{\Cmd_i}{\PropC_i}$ for both $i \in \{0, 1\}$,
apply \ref{lrule:hoare-if},
and mix them with \ref{lrule:hoare-bigbmix} to obtain
$\hoare{\Prop_0 \oplus \Prop_1}
       {\ifelse{\expr}{\Cmd_1}{\Cmd_0}}
       {\PropC_0 \oplus \PropC_1}$.

\paragraph{Verifying loops generally}
\label{sect:logic:program:loop}

The key technical idea of \ourlogic{} is to represent, in the model, mixed states as \emph{multisets of outcomes}.
The adoption of multisets allows \ourlogic{} to support a very general
and expressive handling of \emph{unbounded loops} by the rule \ref{lrule:hoare-while}.
Here, we benefit from supporting the \emph{infinitary} mixing $\bigoplus_{n \in \NN} \Prop_n$, modelled as the infinite sum of multisets described by $\Prop_n$,
that is, the limit of all finite mixing $\bigoplus_{n \le N} \Prop_n$.
The rule uses two assertions indexed by the current iteration $n$:
$\Prop_n$ describes the state at the beginning of the $(n + 1)$-th iteration
(and consequently, at the end of the $n$-th);
$\PropB_n$ describes the states the system can be in at the exit of the loop.
Then the rule asserts that the postcondition of the whole loop is simply the infinite mixing, or the limit of all the finite mixing, of $\PropB_n$ assertions.
Although \ourlogic{} provides natural proof principles for reasoning about such infinite mixing within the logic, which works for useful examples,
more thoroughly exploring the elimination principles of $\Mix_{n \in \NN}$ is left for future work.

Notably, this achieves reasoning about \emph{probabilistic total correctness}.
Since all the outcomes need to be accounted for when predicating over
such resources, assertions can insist on the global probability mass
being some definite quantity by looking at the norms of all the outcomes.
This allows us to specify \emph{almost-sure termination},
or even specify the \emph{exact probability} of termination when non-termination can happen with non-zero probability.

\subsection{The Sum, the Heart of \ourlogic}
\label{sect:logic:sum}

\begin{mathfig}
  \begin{gather*}
    (\Qit \mapsto \qket) \pk2+_\Qstate\pk2 (\QitB \mapsto \qketB) \pk8\defeq\pk8
      \Qit \mapsto (\qket + \qketB)
      \hspace{.9em} \text{if\, $\Qit = \QitB$}
  \\[.3em]
    (\qket, \Store) \pk2+_\Res\pk2 (\qketB, \Store') \pk8\defeq\pk8
      (\qket +_\Qstate \qketB, \pk1 \Store)
      \hspace{.9em} \text{if\, $\Store = \Store'$}
  \end{gather*}
  \caption{Resource ring structure over $\Qstate$ and $\Res$.}
  \label{fig:resource-ring-res}
  \Description{}
\end{mathfig}

Now we also want the sum $\Prop + \PropB$ over SL assertions, the heart of \ourlogic.
Extending the PCM structure for separating conjunction $*$, we introduce what we dub the \emph{resource ring}, a new algebra equipped with the \emph{partial} sum $+$.

\begin{definition}[Resource ring]
\label{def:resource-ring}
  A resource ring $(A, \pk2 \unit_A \in A, \pk2 \cdot_A \col A \times A \pto A, \pk2 +_A \col A \times A \pto A)$ is a PCM $(A, \unit_A, \cdot_A)$ equipped with the partial sum $+_A$ (we may omit the subscript) such that $a +_A b \pk1=\pk1 b +_A a$, $(a +_A b) +_A c \pk1=\pk1 a +_A (b +_A c)$, and $(a +_A b) \cdot_A c \pk1=\pk1 a \cdot_A c +_A b \cdot_A c$.\footnote{We use `$=$' for the Kleene equality, where the left-hand side is defined if and only if the right-hand side is.}
\qed
\end{definition}

We can define the resource ring structure over $\Qstate$ and $\Res$ as presented in \cref{fig:resource-ring-res}.
As expected, the vector sum $\qket + \qketB$ is used for defining $+_\Qstate$.
The point is that the sum is defined only when the two arguments agree on the same classical information, namely, the domain of qubits and the store, which makes the sum partial.

Now, how can we define the sum $\Prop + \PropB$ over SL assertions?
Again, recall that SL assertions are a set of \emph{multisets} of resources.
Naively, we want something like the `parallel pointwise sum' of multisets $\mult$ and $\mult'$ as the sum for the resource ring.
For example, something like $\mset{\res, \res'} + \mset{\resB, \resB'} \pk1=\pk1 \mset{\res + \resB, \pk1 \res' + \resB'}$.
However, this makes the sum ill-defined, because the multiset forgets the `order' of the elements.
For the example above, the sum can also be $\mset{\res + \resB', \pk1 \res' + \resB}$.
In other words, to take the sum, the elements of the two multisets should be matched \emph{in parallel}, and there can be multiple ways to match the elements of two multisets.
This is quite different from separating conjunction $*$, where the product is just distributive over the multiset elements.

To formalize the matching over multisets, we introduce the notion of the \emph{multiset bijection}.

\begin{definition}[Multiset bijection]
\label{def:multiset-biject}
  A multiset bijection $\mbij \col \mult \mbiject \mult'$ between two multisets $\mult \in \Mult(A)$ and $\mult' \in \Mult(B)$ is a multiset $\mbij \in \Mult(A \times B)$ such that $\mult = \mset{ a \mid (a, b) \mIn \mbij }$ and $\mult' = \mset{ b \mid (a, b) \mIn \mbij }$.
  In other words, a multiset bijection between $\floor{f \col I \to A}$ and $\floor{g \col I \to B}$ is $\floor{h \col I \to A \times B}$ such that $(h\, i).0 = f\, i$ and $(h\, i).1 = g\, i$ (and there is no multiset bijection between multisets of different sizes).
\qed
\end{definition}

\begin{mathfig}
  \begin{gather*}
    \mult +_\mbij \mult' \pk8\defeq\pk8
      \mset{ \res + \resB \mid (\res, \resB) \mIn \mult }
    \hspace{.9em} \text{where $\mbij \col \mult \mbiject \mult'$}
  \\[.0em]
    \Prop + \PropB \pk8\defeq\pk8
      \set{ \mult +_\mbij \mult' }{
        \mult \in \Prop, \, \mult' \in \PropB, \,
        \mbij \col \mult \mbiject \mult' }
  \end{gather*} \vspace{-.3em}
  \begin{proofrules}
    \infer*[lab = sum-mono]{
      \Prop \vdash \Prop'
    \hspace{1.5em}
      \PropB \vdash \PropB'
    }{
      \Prop + \PropB \, \vdash\, \Prop' + \PropB'
    }
    \label{lrule:sum-mono}

    \infer*[lab = sum-comm]{}{
      \Prop + \PropB \, \dashvdash\, \PropB + \Prop
    }
    \label{lrule:sum-comm}

    \infer*[lab = sum-assoc]{}{
      (\Prop + \PropB) + \PropC \, \dashvdash\,
        \Prop + (\PropB + \PropC)
    }
    \label{lrule:sum-assoc}
  \end{proofrules} \vspace{-.2em}
  \begin{proofrules}
    \infer*[lab = qpoints-sum]{}{
      \bar{\qit} \mapsto \qket \pk3+\pk3
      \bar{\qit} \mapsto \qketB \pk5\dashvdash\pk5
        \bar{\qit} \mapsto (\qket + \qketB)
    }
    \label{lrule:qpoints-sum}
  \end{proofrules} \vspace{-.6em}
  \begin{proofrules}
    \infer*[lab = hoare-sum]{
      \hoare{ \Prop }{ \Cmd }{ \PropB }
    \hspace{1em}
      \hoare{ \Prop' }{ \Cmd }{ \PropB' }
    }{
      \hoare{ \Prop + \Prop' }{ \Cmd }{ \PropB + \PropB' }
    }
    \label{lrule:hoare-sum}

    \infer*[lab = hoare-frame-untangle]{
      \all{\qket} \pk2
      \hoare{\pk1 \bar{\qit} \mapsto \qket \pk1}{ \Cmd }
        {\pk1 \bar{\qit} \mapsto \Unitary \qket \pk1}
    }{
      \hoare{\pk1 (\bar{\qit}, \bar{\qitB}) \mapsto \qketB \pk1}{ \Cmd }
        {\pk1 (\bar{\qit}, \bar{\qitB}) \mapsto \Unitary_{\bar{\qit}} \qketB \pk1}
    }
    \label{lrule:hoare-frame-untangle}
  \end{proofrules} \vspace{.1em}
  \begin{proofrules}
\infer*[lab = sum-bigbmix]{}{ \textstyle
      \bigoplus_{x \in I} (\Prop_x + \PropB_x) \, \vdash\,
      (\bigoplus_{x \in I} \Prop_x) + (\bigoplus_{x \in I} \PropB_x)
    }
    \label{lrule:sum-bigbmix}

    \infer*[lab = sum-frame]{}{
      (\Prop + \PropB) * \PropC \, \vdash\,
        (\Prop * \PropC) + (\PropB * \PropC)
    }
    \label{lrule:sum-frame}
  \end{proofrules}
  \caption{Sum over $\Mult(\Res)$, the sum connective and its proof rules.}
  \label{fig:sum}
  \Description{}
\end{mathfig}

Using multiset bijections, we can define the sum of SL assertions, as presented in \cref{fig:sum}.
First, the partial sum $\mult +_\mbij \mult'$ over multisets can be defined once we fix the multiset bijection $\mbij \col \mult \mbiject \mult'$.
Note that $\Mult(\Res)$ itself does not form a resource ring, due to the extra parameter $\mbij$.
Now we define the sum $\Prop + \PropB$ of SL assertions by taking the sum of multisets using an arbitrary multiset bijection.
The sum satisfies natural rules \ref{lrule:sum-mono}, \ref{lrule:sum-comm} and \ref{lrule:sum-assoc}.
Quantum points-to tokens sum up according to the vector sum (\ref{lrule:qpoints-sum}), as the model suggests.

The core rule of \ourlogic{} is \ref{lrule:hoare-sum}, summing Hoare triples according to $+$.
Intuitively, this holds because quantum programs cannot change what to execute depending on quantum state vectors.
The precondition $\Prop + \Prop'$ takes the sum of two initial states, taken respectively from $\Prop$ and $\Prop'$, that agree on the classical states and can only differ in the quantum states.
The two executions from these initial states give behaviour trees of the same form.
The behaviour tree for the sum initial state can simply be obtained by summing the corresponding leaves of the two behaviour trees.
The reasoning principle \ref{lrule:hoare-sum} is remarkable.
For example, from this rule, \ref{lrule:hoare-scale} and \ref{lrule:hoare-frame}, we can derive the rule \ref{lrule:hoare-frame-untangle} that can reason about a (possibly) entangled state $\qketB$ from the behaviours of the program $\Cmd$ on each input $\qket$, by decomposing $\qketB$ to a \emph{linear combination} of \emph{disentangled states}.\footnote{
  Here, the universal quantification $\forall \qket$ can range just over some basis of a finite size.
}
This generalizes the reasoning showcased in \cref{sect:overview:superposition}.

Also, bare mixing over sum entails sum of bare mixing \ref{lrule:sum-bigbmix} and sum enjoys the frame rule \ref{lrule:sum-frame}.
However, the converses of these two rules do not hold in general and require careful side conditions.
Technically, this comes from the freedom of the multiset bijection $\mbij$ or the way to match the elements in the model of sum $\Prop + \PropB$.
Later, we will discuss the side conditions for the two converses.

\begin{remark}[Right adjoints of $\land$, $*$, $\oplus$ and $+$]
\label{rem:right-adjoint}
  As usual, usual and separating conjunctions $\land$, $*$ have the right adjoints $\limp$ and $\wand$ (magic wand).
  Interestingly, in \ourlogic{}, mixing $\oplus$ and sum $+$ also have the right adjoints $\pine$ and $\cross$.
  See \appendixref{app:sect:logic:right-adjoint} for the details.
\end{remark}

\paragraph{Incompatibility and unambiguity}

\begin{mathfig}
  \begin{gather*}
    (\qket, \Store) \pk1\hash\pk1 (\qketB, \Store') \pk9\defeq\pk9
      \ex{\var \pk1\in\pk1 \dom \Store \pk1\cap\pk1 \dom \Store'} \pk2
      \Store[\var] \pk1\ne\pk1 \Store'[\var]
  \\[.3em]
    \Prop \hash \PropB \pk9\defeq\pk9
      \all{\mult \in \Prop} \pk2 \all{\mult' \in \PropB} \pk2
      \all{\res \mIn \mult} \pk2 \all{\resB \mIn \mult'} \pk2
      \res \hash \resB
  \\[.3em]
    \frac{
      \val \neq \valB
    }{
      \var \mapsto \val \ \hash\ \var \mapsto \valB
    }
  \hspace{2.5em}
    \frac{
      \Prop \hash \PropB
    }{
      \PropB \hash \Prop
    }
  \hspace{2.5em}
    \frac{
      \Prop \hash\, \PropB
    \hspace{1em}
      \PropB' \vdash \PropB
    }{
      \Prop \hash \PropB'
    }
  \hspace{2.5em}
    \frac{
      \Prop \hash \PropB
    }{
      \Prop * \PropC \, \hash\, \PropB
    }
  \hspace{2.5em}
    \frac{
      \all{x} (\Prop_x \hash \PropB)
    }{
      \bigoplus_{x \in I} \Prop_x \, \hash\, \PropB
    }
  \\[.8em]
    \Prop \col \unambig \pk9\defeq\pk9
      \all{\mult \in \Prop} \pk2
      \all{\mset{\res, \resB} \msubseteq \mult} \pk4
      \res \hash \resB
  \hspace{3em}
    \Prop \col \nonnb \pk9\defeq\pk9
      \all{\mult \in \Prop} \pk2 \mult \neq \mset{}
  \\[.4em]
    \emp,\, \bar{\qit} \mapsto \qket,\,
    \var \mapsto \val
    \col \unambig
  \hspace{3em}
    \frac{
      \Prop \col \unambig
    \hspace{1.5em}
      \PropB \vdash \Prop
    }{
      \PropB \col \unambig
    }
  \hspace{3em}
    \frac{
      \Prop, \PropB \col \unambig
    }{
      \Prop * \PropB,\pk2
      \Prop + \PropB
      \col \unambig
    }
  \end{gather*} \vspace{.0em}
  \begin{proofrules}
    \infer*[lab = bigbmix-unambig]{
      \all{x \in I} \pk2
      \paren{\Prop_x \col \unambig}
    \hspace{1.5em}
      \all{x, y \in I \st x \ne y} \pk2
      \Prop_x \hash \Prop_y
    }{ \textstyle
      \bigoplus_{x \in I} \Prop_x \col \unambig
    }
    \label{lrule:bigbmix-unambig}

    \infer*[lab = sum-precise]{
      \Prop, \PropB \colon \precise
    \hspace{1.5em}
      \Prop \colon \unambig
    }{
      \Prop + \PropB \colon \precise
    }
    \label{lrule:sum-precise}
  \end{proofrules}
  \begin{proofrules}

    \infer*[lab = bigbmix-sum]{
      \all{x, y \in I \st x \ne y} \pk2
      \Prop_x \hash \PropB_y
    }{ \textstyle
      (\bigoplus_{x \in I} \Prop_x) \pk2+\pk2
      (\bigoplus_{x \in I} \PropB_x) \pk5\vdash\pk5
        \bigoplus_{x \in I} \pk2 (\Prop_x + \PropB_x)
    }
    \label{lrule:bigbmix-sum}

    \infer*[lab = sum-unframe]{
      \Prop, \PropC \col \unambig
    \hspace{1.5em}
      \PropC \col \precise, \nonnb
    }{
      (\Prop * \PropC) + (\PropB * \PropC) \, \vdash\,
        (\Prop + \PropB) * \PropC
    }
    \label{lrule:sum-unframe}
  \end{proofrules}
  \caption{Incompatibility and unambiguity.}
  \label{fig:incomp-unambig}
  \Description{}
\end{mathfig}

In order to ensure the uniqueness of the multiset bijection in the assertion sum $\Prop + \PropB$, we introduce a notion of \emph{incompatibility} and a derived notion of \emph{unambiguity}, as in \cref{fig:incomp-unambig}.
First, the incompatibility $\res \hash \resB$ over resources $\res, \resB \in \Res$ is defined as having different values for some variable.
Then we naturally lift that to the incompatibility over propositions $\Prop \hash \PropB$.
We have natural proof rules for the incompatibility.
If any pair of the assertions at different indices are incompatible, sum of mixing can be taken in parallel (\ref{lrule:bigbmix-sum}), which is the converse of \ref{lrule:sum-bigbmix}.

Using the incompatibility between resources, we also define the \emph{unambiguity} $\Prop \col \unambig$ of an SL assertion $\Prop$, meaning that for any multiset of the assertion, any two distinct elements of the multiset are incompatible with each other.
Intuitively, this means all the branches have different stored values.
Here, we introduce an auxiliary predicate $\Prop \col \nonnb$, meaning that any multiset in $\Prop$ is not empty (\ie contains some behaviours).
The predicate satisfies natural proof rules (see \appendixref{app:sect:logic:rules} for the details).
The rule \ref{lrule:sum-unframe} has three side conditions:
the precision of $\PropC$, the behaviour non-emptiness ($\nonnb$) of $\PropC$, and the unambiguity of the assertions $\Prop$ and $\PropC$.
The precision of $\PropC$ is needed for the same reason as \ref{lrule:bigbmix-unframe}.
The behaviour non-emptiness excludes a subtle corner case.
For example, if $\PropC = \nb$ (violating $\PropC \col \nonnb$), $\Prop = \var \mapsto 0$ and $\PropB = \var \mapsto 1$, then the left-hand side is equivalent to $\nb$, while the right-hand side entails the falsehood $\bot \defeq \empset$ because $\Prop + \PropB$ entails $\bot$.
The key side condition is the unambiguity of $\Prop$ and $\PropC$.
For example, if $\PropC = \alpha \oplus \beta$ (violating $\PropC \col \unambig$) and
$\Prop = \PropB = \emp$, the left-hand side contains $(\alpha + \beta) \pk1\oplus\pk1 (\beta + \alpha)$, while the right-hand side does not.
We similarly have unsoundness if both $\Prop$ and $\PropB$ are ambiguous.\footnote{
  For example, let us set $\Prop \pk1=\pk1 \PropB \pk2=\pk2 \lift{\alpha} \pk2\oplus\pk2 \lift{\beta}$ (neither $\Prop \col \unambig$ nor $\PropB \col \unambig$ holds) and $\PropC \pk1=\pk1 \hvar \mapsto 0 \oplus \hvar \mapsto 1$.
  Then the left-hand side of \ref{lrule:sum-unframe} contains
  $\hvar \mapsto 0 \pk2*\pk2 \paren{\lift{\alpha + \beta} \oplus \lift{\beta + \alpha}} \pk3\oplus\pk3
    \hvar \mapsto 1 \pk2*\pk2 \paren{\lift{\alpha + \alpha} \oplus \lift{\beta + \beta}}$
  while the right-hand side does not.
}
The rule \ref{lrule:sum-precise} also requires the unambiguity for a similar reason.
We carefully designed the side conditions to achieve both soundness and flexibility.

\subsection{Tagged Mixing}
\label{sect:logic:mix}

\begin{mathfig}
  \[
    \Prop \mix{\hvar} \PropB \pk6\defeq\pk6
      \paren{\hvar \mapsto 0 \pk2*\pk2 \Prop} \pk3\oplus\pk3
      \paren{\hvar \mapsto 1 \pk2*\pk2 \PropB}
  \hspace{3em} \textstyle
    \bigoplus^\hvar_{x \in I} \Prop_x \pk6\defeq\pk6
      \bigoplus_{x \in I} \pk2 \paren{\hvar \mapsto x \pk2*\pk2 \Prop_x}
  \]\vspace{-.2em}
  \begin{proofrules}
    \infer*[lab = bigmix-sum]{}{ \textstyle
      (\bigoplus^\hvar_{x \in I} \Prop_x) \pk2+\pk2
      (\bigoplus^\hvar_{x \in I} \PropB_x) \pk5\dashvdash\pk5
        \bigoplus^\hvar_{x \in I} \pk2 (\Prop_x + \PropB_x)
    }
    \label{lrule:bigmix-sum}

    \infer*[lab = bigmix-unambig]{
      \all{x \in I} \pk2
      \paren{\Prop_x \col \unambig}
    }{ \textstyle
      \bigoplus^\hvar_{x \in I} \Prop_x \col \unambig
    }
    \label{lrule:bigmix-unambig}
  \end{proofrules} \vspace{.1em}
  \begin{proofrules}
    \infer*[lab = hoare-measure-mix]{}{
      \hoare{\pk3 \bar{\qit} \mapsto \qket \pk2}[^\hvar]
        { \Measure^\hvar[\bar{\qit}] }
        {\pk3 \textstyle \bigoplus^\hvar_i \pk3
          \bar{\qit} \mapsto \Measure^{(i)} \nk2 \qket \pk2}
    }
    \label{lrule:hoare-measure-mix}
  \end{proofrules}
  \caption{Tagged mixing and its derived proof rules.}
  \label{fig:mix}
  \Description{}
\end{mathfig}

Finally, we formally model the \emph{tagged mixing}, introduced in the overview \cref{sect:overview:measurement}.
\Cref{fig:mix} shows its definition and proof rules.
Generalizing binary tagged mixing $\Prop \mix{\hvar} \PropB$, we also introduce tagged mixing $\bigoplus^\hvar_{x \in I} \Prop_x$ indexed over any set $I$.
Tagged mixing is derived from bare mixing \cref{sect:logic:bmix} by tagging each argument with a classical points-to token $\hvar \mapsto x$
whose value $x$ allows for the identification of the outcomes.
Tagged mixing is very well-behaved with respect to the other connectives.
Remarkably, sum $+$ can be taken over tagged mixing in parallel (\ref{lrule:bigmix-sum}), thanks to the classical points-to token automatically ensuring the incompatibility conditions of \ref{lrule:bigbmix-sum}.
This is one of the key ideas of \ourlogic{}, as explained in \cref{sect:overview:measurement}.
Also, tagged mixing is unambiguous simply when the arguments are all unambiguous (\ref{lrule:bigmix-unambig}).
Notably, the Hoare rule for measurement (\ref{lrule:hoare-measure}) can be reformulated using tagged mixing (\ref{lrule:hoare-measure-mix}).
Tagged mixing also enjoys natural proof rules derived from rules on bare mixing;
see \appendixref{app:sect:logic:mix} for the details.
In summary, using tagged mixing, we can enjoy natural reasoning about mixed quantum states introduced by measurements.

\subsection{Handling Complexity}
\label{sect:logic:complexity}

\paragraph{Frameability}

\begin{mathfig}
  \begin{gather*} \textstyle
    \Prop \col \frameable \pk8\defeq\pk8
      \Prop \col \precise, \pk2 \unambig, \pk2 \nonnb
  \\[.3em]
    \frac{
      \Prop \col \frameable
    \quad
      \PropB \vdash \Prop
    }{
      \PropB \col \frameable
    }
  \hspace{3em}
    \frac{
      \Prop, \PropB \col \frameable
    }{
      \Prop * \PropB,\ \Prop + \PropB \col \frameable
    }
  \hspace{3em}
    \frac{
      I \neq \empset
    \quad
      \all{i \in I}\,
      \Prop_i \col \frameable
    }{
      \bigoplus^\iota_{i\in I} \Prop_i \col \frameable
    }
  \\[.6em]
    \begin{aligned}
      \text{Frameable subclass} \quad
      \dFProp \pk2\ni\pk2 \FProp
      &\sdefeq \var \mapsto \val
      \sor \qit \mapsto \qket
      \sor \all{x \in A} \FProp_x
    \\[-.2em] & \hspace{.9em} \textstyle
      \sor \FProp * \FPropB
      \sor \FProp + \FPropB
      \sor \bigoplusn^\iota_{i\in I (\neq \empset)} \FProp_{i}
    \end{aligned}
  \hspace{3em}
    \frac{
      \FProp \in \dFProp
    }{
      \FProp \col \frameable
    }
  \end{gather*}\vspace{.5em}
  \begin{proofrules}
    \infer*[lab=bigbmix-frame-frameable]{
      \PropB \col \frameable
    }{ \textstyle
      \bigoplus_{x \in I} \pk2 (\Prop_x * \PropB) \, \dashvdash\,
        (\bigoplus_{x \in I} \Prop_x) \pk2*\pk2 \PropB
    }
    \label{lrule:bigbmix-frame-frameable}

    \infer*[lab=sum-frame-frameable]{
\Prop, \PropC \col \frameable
    }{
      (\Prop * \PropC) + (\PropB * \PropC) \, \dashvdash\,
        (\Prop + \PropB) * \PropC
    }
    \label{lrule:sum-frame-frameable}
  \end{proofrules}
  \caption{Frameability.}
  \label{fig:frameable}
  \Description{}
\end{mathfig}

So far, we have introduced various proof rules to \ourlogic.
Our primary goal has been to identify the most general sound rules that achieve the desired modularity.
It is unsurprising that some of them (e.g., \ref{lrule:sum-unframe}) carry intricate side conditions, but one might worry that they are too complex for practical use.

In fact, these complex side conditions can be greatly simplified by focusing on a well-behaved class of SL assertions, which we call \emph{frameable}.
The rules are summarized in \cref{fig:frameable}.
Frameability $\frameable$ is simply the conjunction of $\precise$, $\unambig$ and $\nonnb$.
As shown in \cref{fig:frameable}, frameable assertions have nice closure properties.
This enables us to define a handy class $\dFProp$ of frameable SL assertions, which are obviously frameable by construction.\footnote{
  For simplicity, we describe the class syntactically in BNF.
  The mixing here is tagged and runs over a non-empty domain.
}
Notably, frameable assertions can be a frame for mixing and sum (\ref{lrule:bigbmix-frame-frameable}, \ref{lrule:sum-frame-frameable}), simplifying the unframing rules \ref{lrule:bigbmix-unframe} and \ref{lrule:sum-unframe}.

\paragraph{Abstraction}

We can go further and provide \emph{abstraction}, as discussed in \cref{sect:overview:abstraction}.
We want to abstract away the complicated `factor' $\Prop_{\oplus}$ of the outcome of the program $\mCNOT$, so that clients of $\mCNOT$ can safely forget the exact form of $\Prop_{\oplus}$ and equate $\mCNOT$ with $\gCNOT$.

Frameability is a great match for this purpose.
As long as clients know that $\Prop_{\oplus}$ is frameable, they can treat it as a frame for mixing and sum to achieve the outcome- and basis-locality (we will explain this in more detail later in \cref{sect:cases:meas-cnot}).

\begin{mathfig}
  \[
    \Prop \col \Prob \prob \pk8\defeq\pk8
      \all{\mult \in \Prop}
      \sum_{(\qket, \Store) \pk1\mIn\pk1 \mult}\nk5 \norm{\qket}^2 \pk2=\pk2 \prob
  \hspace{2em}
    \frac{
      \Prop \col \Prob \prob
    \hspace{1em}
      \PropB \col \Prob \probB
    }{
      \Prop \pk1*\pk1 \PropB \col \Prob \prob \probB
    }
  \hspace{2em}
    \frac{
      \Prop_i \col \Prob \prob_i
    }{
      \bigoplusn_i \Prop_i \col \Prob \sum_i\prob_i
    }
  \]
  \caption{Probability.}
  \label{fig:prob}
  \Description{}
\end{mathfig}

For a detailed analysis of probability, we introduce the probability predicate $\Prop \col \Prob \prob$, shown in \cref{fig:prob}.\footnote{
  Recall that \ourlogic{} can verify the termination probability, which is non-trivial for programs with loops, as discussed in \nameref{sect:logic:program:loop}, \cref{sect:logic:program}.
}
The predicate simply means that every state represented by $\Prop$ has the total probability $\prob$.
A factor like $\mCNOT$'s $\Prop_\oplus$ typically satisfies $\Prob 1$.
So in a simple setting we can abstract such a factor as $\frameable$ and $\Prob 1$.
The probability behaves well over separating conjunction $*$ and mixing $\bigoplus$.
To reason about the probability of sums $+$, we can also introduce the inner product (and orthogonality as a special case) of SL assertions, naturally extending the usual vector calculus; see \appendixref{app:sect:logic:inner-prod} for the details.
 \section{Case Studies}
\label{sect:cases}

This section presents a number of practical examples of \ourlogic{} verification of quantum programs from the literature,
confirming that our logic:
\begin{enumerate}[label = \textbf{G\arabic*:}, ref = \textbf{G\arabic*}]
\item \label{g:modular}
  Enables modular reasoning by using the three locality principles;
\item \label{g:sum-mixing}
  Effectively applies basis-locality to real-world programs involving measurements;
\item \label{g:hiding}
  Demonstrates that abstracting the global phase, as presented in \cref{sect:overview:abstraction}, simplifies the specification and makes the reasoning scalable;
\item \label{g:prob_while}
  Supports standard probabilistic features of quantum programs, such as proving almost-sure termination of while loops.
\end{enumerate}
We start with relatively simple case studies and gradually move to more complex ones.
All the details of the proofs and additional examples are available in \appendixref{app:sect:cases}.
Here, we sketch the key ideas of the proofs and highlight the significance of the examples.

\begin{remark}[Utility tagged mixing notation]
  For brevity, we introduce the following shorthand for tagged mixing, abusing a classical variable $\hvar$ to represent the value it stores:
  $\bigoplus^\hvar \Prop \pk6\defeq\pk6
    \bigoplus^\hvar_x \Prop[x / \hvar]$.
  Here, the domain of the value $x$ should be properly inferred and is typically set to $\{0, 1\}$.
  Precisely speaking, we think of some syntax for propositions $\Prop$ to define the substitution $\Prop[x / \hvar]$ here.
  Also, we write $\bigoplus^{\hvar_1, \ldots, \hvar_n} \Prop$ for $\bigoplus^{\hvar_1} \cdots \bigoplus^{\hvar_n} \Prop$.
\end{remark}

\subsection{Dirty Qubit: Implementation of CCCX by Toffoli Gates}
\label{sect:cases:dirty}

{\setlength\textfloatsep{0pt}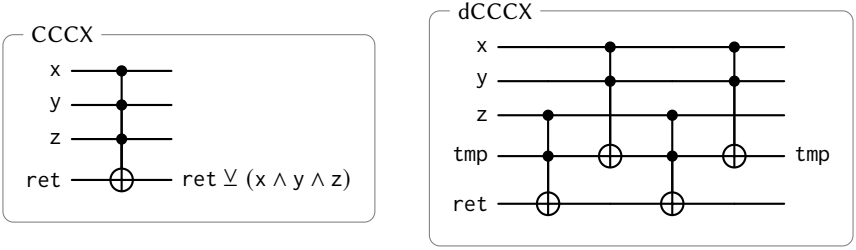
\begin{figure}[t]
  \adjustfigure[\small]
  \begin{quantikz}[with border, pic label = \gCCCX, row sep = 0.3cm]
    \lstick{\qit}    & \ctrl{3} & \\
    \lstick{\qitB}   & \ctrl{2} & \\
    \lstick{\qitC}   & \ctrl{1} & \\
    \lstick{\qitRet} & \targ{ } & \rstick{$\qitRet \xor (\qit \land \qitB \land \qitC)$}
  \end{quantikz}
  \qquad
  \begin{quantikz}[with border, pic label = \gdCCCX, row sep = 0.3cm]
    \lstick{\qit}    &          & \ctrl{3} &          & \ctrl{3} & \\
    \lstick{\qitB}   &          & \ctrl{2} &          & \ctrl{2} & \\
    \lstick{\qitC}   & \ctrl{2} &          & \ctrl{2} &          & \\
    \lstick{\qitTmp} & \ctrl{1} & \targ{ } & \ctrl{1} & \targ{ } & \rstick{\qitTmp} \\
    \lstick{\qitRet} & \targ{ } &          & \targ{ } &          &
  \end{quantikz}
  \caption{A \gCCCX{} gate (left) and its encoding \gdCCCX{} using dirty qubit \qitTmp{} (right).}
  \label{fig:dcccx-simple}
  \Description{}
\end{figure}}

The first example is an implementation of the \gCCCX{} gate using a dirty qubit, which is a concrete example of the program we mentioned in \cref{sect:overview:superposition}.
We show how \ourlogic{} enables the context-independent identification of
\gCCCX{} gate with its implementation (\ref{g:modular}).

As depicted in \cref{fig:dcccx-simple} (left), the \gCCCX{} gate is a 4-qubit multi-controlled-NOT gate that computes $\qitRet \leftarrow \qitRet \xor(\qit \land \qitB \land \qitC)$, where $\xor$ is the exclusive-OR operation.
The \gCCCX{} gate can be implemented using the \gCCX{} (called Toffoli) gate with a dirty qubit serving as the auxiliary qubit, as illustrated in \cref{fig:dcccx-simple} (right).
Note that the \gCCX{} gate computes $\qitRet \leftarrow \qitRet \xor (\qit \land \qitB)$.

It is important is that we must return a dirty qubit to its original state after its temporary use, because it may be required by another computation.
Thus, the goal here is to verify that \cref{fig:dcccx-simple} (right) enjoys this condition and implements \gCCCX{} correctly;
however, two challenges arise here.
First, the state of \qitTmp{} is unknown. Second, \qitTmp{} may be entangled with \qit{}, \qitB{}, \qitC{}, or other qubits.

The idea for overcoming these challenges is to decompose \qitTmp{} into a basis using \ref{lrule:hoare-sum}.
After the decomposition, \qitTmp{} can be regarded as disentangled from the other qubits for each basis state.
Consequently, the qubits other than \qit{}, \qitB{}, \qitC{} and \qitTmp{} can later be composed by using the frame rule.
Moreover, we can obtain reasoning that is independent of the initial state of \qitTmp{} because the reasoning results for each basis state can be linearly combined.

In fact, this idea makes it possible to verify the correctness of \cref{fig:dcccx-simple} (right).
First, the specification of \cref{fig:dcccx-simple} (right) for each basis can be expressed as
\begin{align*} &
  \pkcurly[\big]{
    (\qit, \qitB, \qitC) \mapsto \ket{i j k}
    * \qitTmp \mapsto \ket{\ell}
    * \qitRet \mapsto \ket{m}
  }
  \ \ {\gdCCCX[\qit, \qitB, \qitC, \qitTmp, \qitRet]}
\\ &
  \pkcurly[\big]{
    (\qit, \qitB, \qitC) \mapsto \ket{i j k}
    * \qitTmp \mapsto \ket{\ell}
    * \qitRet \mapsto \ket{m \xor (i \land j \land k)}
  }.
\end{align*}
Then, we scale the global phase \labelcref{dcccx:frame} and sum up derivations \labelcref{dcccx:sum} to prove for any entangled case:
\begin{derivation}
  \infer*[Right = {\labelstep{dcccx:frame}}]{
      \all{i, j, k, \ell, m \in \{0, 1\}}
    \\
      \pkcurly{ (\qit, \qitB, \qitC, \qitTmp, \qitRet) \mapsto \ket{i j k \ell m} }
          \pk8{ \gdCCCX }\pk8
    \hspace{6em}\\\\\hspace{10em}
      \pkcurly{ (\qit, \qitB, \qitC, \qitTmp, \qitRet) \mapsto
                (\gCCCX_{\qit, \qitB, \qitC, \qitRet} \otimes \gid_\qitTmp) \ket{i j k \ell m} }
    }{
  \infer*[Right = {\labelstep{dcccx:sum}}]{
      \all{i, j, k, \ell, m \in \{0, 1\}}
    \\\\
      \pkcurly{
        (\qit, \qitB, \qitC, \qitTmp, \qitRet) \mapsto
          \alpha_{i j k \ell m} \ket{i j k \ell m} *
        \bar{\qitD} \mapsto \qket[_{i j k \ell m}]
      }
      \ \ \gdCCCX
    \hspace{6em}\\
      \pkcurly{
        (\qit, \qitB, \qitC, \qitTmp, \qitRet) \mapsto
          (\gCCCX_{\qit, \qitB, \qitC, \qitRet} \otimes \gid_\qitTmp) \alpha_{i j k \ell m} \ket{i j k \ell m}
        * \bar{\qitD} \mapsto \qket[_{i j k \ell m}]
      }
    }{ \textstyle
    \pkcurly[Big]{
      (\qit, \qitB, \qitC, \qitTmp, \qitRet, \bar{\qitD}) \mapsto
        \sum_{i, j, k, \ell, m} \alpha_{i j k \ell m} \ket{i j k \ell m} \qket[_{i j k \ell m}]
    }
    \ \ \ \gdCCCX
  \hspace{6.5em}\\\textstyle
    \pkcurly[Big]{
      (\qit, \qitB, \qitC, \qitTmp, \qitRet, \bar{\qitD}) \mapsto
        (\gCCCX_{\qit, \qitB, \qitC, \qitRet} \otimes \gid_{\qitTmp, \bar{\qitD}})
        \sum_{i, j, k, \ell, m} \alpha_{i j k \ell m} \ket{i j k \ell m} \qket[_{i j k \ell m}]
    }
  }}
\end{derivation}
The state
$\sum_{i, j, k, \ell, m} \alpha_{i j k \ell m} \ket{i j k \ell m} \qket[_{i j k \ell m}]$
can represent any entangled state of the qubits, not only including the dirty qubit \qitTmp{},
but also uninvolved qubits $\bar{\qitD}$ introduced as a frame,
proving that the \gdCCCX{} gate is indeed a valid implementation of \gCCCX{} regardless of the context.

\subsection{Quantum Teleportation}
\label{sect:cases:teleportation}

The second example we demonstrate is the quantum teleportation protocol.
In this protocol, there are two parties: Alice and Bob,
and they communicate two bits of classical information
to teleport one qubit of quantum information from Alice to Bob.
Here, we prove the correctness of the protocol in a modular way (\ref{g:modular}).
That is, we analyse the behaviour of Alice and Bob separately
and then combine the results to prove the correctness of the whole protocol.
This protocol involves the use of a shared entangled state and measurements (\ref{g:sum-mixing}).

\newcommand*{\Bell}{\mathsf{Bell}}
\newcommand*{\Alice}{\mathsf{Alice}}
\newcommand*{\Bob}{\mathsf{Bob}}
\newcommand*{\Teleport}{\mathsf{Teleport}}
The quantum teleportation protocol can be implemented as follows:
\begin{align*}
  \Bell[\qit, \qitB] \ \ &\defeq\ \
    \gH[\qit];\, \gCX[\qit, \qitB]
\\
  \Alice^{\var, \varB}[\qit, \qitB] \ \ &\defeq\ \
    \gCX[\qit, \qitB];\, \gH[\qit];\,
    \var \store \gMZ[\qit];\, \varB \store \gMZ[\qitB]
\\
  \Bob^{\var, \varB}[\qitC] \ \ &\defeq\ \
    \ifonly{ \var }{ \gX[\qitC] };\ \ifonly{ \varB }{ \gZ[\qitC] }
\\
  \Teleport^{\var, \varB}[\qit, \qitB, \qitC] \ \ &\defeq\ \
    \Bell[\qitB, \qitC];\, \Alice^{\var, \varB}[\qit, \qitB];\, \Bob^{\var, \varB}[\qitC].
\end{align*}
The protocol begins by preparing the Bell state $\ketBell \defeq \tfrac{1}{\sqrt{2}}(\ket{00} + \ket{11})$
shared between Alice and Bob by using $\Bell[\qitB, \qitC]$.
Although Alice's qubit $\qitB$ and Bob's qubit $\qitC$ are entangled,
in \ourlogic{}, we can reason about $\Alice^{\var, \varB}[\qit, \qitB]$ and $\Bob^{\var, \varB}[\qitC]$ separately
by considering the case in which $\qitB$ and $\qitC$ are in some classical state.
Consequently, if we introduce a Boolean variable $x$ to represent the basis of the qubit $\qit$, we obtain:\footnote{
  We write $\delta_{a, b}$ for the Kronecker delta, returning $1$ if $a = b$ and $0$ otherwise.
}
\begin{gather*}
  \hoare{ (\qit, \qitB) \mapsto \ket{x i} }[^{\var, \varB}]
    { \Alice^{\var,\varB}[\qit, \qitB] }
    { \tfrac{1}{\sqrt2} \bigoplusn^{\var, \varB}
      (-1)^{x \var} \cdot \delta_{x \xor i, \varB} \cdot
      (\qit, \qitB) \mapsto \ket{\var \varB} }
\\
  \hoare{ \qitC \mapsto \ket{i} \,*\, \var \mapsto a \,*\, \varB \mapsto b }
    { \Bob^{\var, \varB}[\qitC] }
    { (-1)^{xa} (\qitC \mapsto \ket{b \xor i}) \,*\,
      \var \mapsto a \,*\, \varB \mapsto b }
\end{gather*}
Now, we verify the correctness of the whole protocol.
\begin{derivation}
  \infer*[Right = {\labelstep{teleport:bell-sum}}]{
    \hoare{ (\qit, \qitB, \qitC) \mapsto \ket{x i i} }[^{\var, \varB}]
          { \Alice[\qit, \qitB];\, \Bob^{\var, \varB}[\qitC] }
          { \tfrac{1}{\sqrt2} \bigoplusn^{\var, \varB} \delta_{x \xor i, \varB} * (\qit, \qitB, \qitC) \mapsto \ket{\var \varB (\varB \xor i)} }
    }{
  \infer*[Right = {\ \labelstep{teleport:input-sum}}]{
    \hoare{ (\qit, \qitB, \qitC) \mapsto \sumn_{i = 0, 1} \tfrac{1}{\sqrt2} \ket{x i i} }[^{\var, \varB}]
          { \Alice[\qit, \qitB];\, \Bob^{\var, \varB}[\qitC] }
          { \tfrac{1}{2} \bigoplusn^{\var, \varB} (\qit, \qitB, \qitC) \mapsto \ket{\var \varB x} }
    }{
  \infer*[Right = {\ \ \labelstep{teleport:mix-unframe}}]{
    \hoare{ \qit \mapsto \qket * (\qitB, \qitC) \mapsto \ketBell }[^{\var, \varB}]
          { \Alice[\qit, \qitB];\, \Bob^{\var, \varB}[\qitC] }
          { \tfrac{1}{2} \bigoplusn^{\var, \varB} (\qit, \qitB) \mapsto \ket{\var \varB} * \qitC \mapsto \qket }
    }{
    \hoare{ \qit \mapsto \qket * (\qitB, \qitC) \mapsto \ketBell }[^{\var, \varB}]
      { \Alice[\qit, \qitB]; \Bob^{\var, \varB}[\qitC] }
      { \qitC \mapsto \qket *
        \tfrac{1}{2} \bigoplusn^{\var, \varB} (\qit, \qitB) \mapsto \ket{\var \varB} }
    }}}
\end{derivation}
We frame $\qitC \mapsto \ket{i}$ into Alice's Hoare triple
and frame Alice's postcondition into Bob's Hoare triple to obtain the first line.
At \labelcref{teleport:bell-sum},
we use \ref{lrule:hoare-sum} to combine
them to obtain the Bell state in the precondition.
To generalize the input state $\ket{x}$ to $\qket = \alpha_0 \ket{0} + \alpha_1 \ket{1}$,
we multiply by the scalar $\alpha_{x}$ and sum over $x$ at \labelcref{teleport:input-sum}.
We use \ref{lrule:bigbmix-unframe} at \labelcref{teleport:mix-unframe}
to separate off the state
$\Prop \defeq \tfrac{1}{2}
\bigoplusn^{\var, \varB} (\qit, \qitB) \mapsto \ket{\var \varB} $,
which is $\frameable$ and $\Prob 1$.

\subsection{Lattice Surgery: Implementation of CNOT with Measurements}
\label{sect:cases:meas-cnot}

As we sketched in \cref{sect:overview:abstraction}, we can verify a sort of equivalence
between the \gCNOT{} gate and its implementation using 2-qubit measurements in \cref{fig:mcnot}.
This technique of replacing a \gCNOT{} gate with 1-qubit gates and 1- or 2-qubit measurements
is the core idea of lattice surgery~\cite{FowlerG19-lattice-surgery},
which can be used for fault-tolerant quantum computing.
This example illustrates how \ourlogic{}
achieves the verification of the program with measurements (\ref{g:sum-mixing}),
and how we can hide the information of the global phase that depends on the measurement outcomes
but does not affect the final state (\ref{g:hiding}).

The program we verify, $\mCNOT$, is defined by
\[\begin{aligned}
  \mCNOT^{\hvar, \hvarB, \hvarC}[\qit, \qitB, \qitC] \ \ \defeq\ \ \ &
    \ifonly{ \gMXX^\hvar[\qitB, \qitC] }{ \gZ[\qit] };
\\[-.2em] &
    \ifonly{ \gMZZ^\hvarB[\qit, \qitB] }{ \gX[\qitC] };\,\
    \gH[\qitB];\,\
    \ifonly{ \gMZ^\hvarC[\qitB] }{ \gZ[\qit] }.
\end{aligned}\]
We introduce shorthand $\ifonly{ \Measure^\hvar[\bar{\qit}] }{ \Cmd } \,\defeq\, \Measure^\hvar[\bar{\qit}];\, \ifonly{ \hvar }{ \Cmd }$.
Here, classical variables $\hvar$, $\hvarB$ and $\hvarC$ are used for storing the results of measurements.
Using our logic, we can easily derive the following Hoare triple by step-by-step reasoning:
\begin{align*}
  \all{a, b \in \{0, 1\}} \quad &
  \pkcurly[\big]{
    \qit \mapsto \ket{a} * \qitB \mapsto \ket{0} * \qitC \mapsto \gH \ket{b}
  }^{\hvar, \hvarB, \hvarC}
  \quad
  \mCNOT^{\hvar, \hvarB, \hvarC}[\qit, \qitB, \qitC]
\\ &
  \pkcurly[\big]{
    \tfrac{1}{2 \sqrt2}\, \bigoplusn^{\hvar, \hvarB, \hvarC}
      (-1)^{a b + \hvar \hvarB + \hvarB \hvarC} \cdot
      (\qit, \qitB) \mapsto \ket{a \hvarC} \,*\,
      \qitC \mapsto \gH \ket{b}
  }
\end{align*}
The postcondition can be simplified as follows:
\begin{align*} &
  \tfrac{1}{2 \sqrt2}\, \bigoplusn^{\hvar, \hvarB, \hvarC}
    (-1)^{a b + \hvar \hvarB + \hvarB \hvarC} \cdot
    (\qit, \qitB) \mapsto \ket{a \hvarC} \,*\,
    \qitC \mapsto \gH \ket{b}
\\ \vdash \quad &
  (-1)^{a b} \tfrac{1}{2 \sqrt2} \cdot
  (\qit, \qitC) \mapsto \ket{a} \otimes \gH \ket{b} \,*\,
  \bigoplusn^{\hvar, \hvarB, \hvarC}
    (-1)^{\hvar \hvarB + \hvarB \hvarC} \cdot \qitB \mapsto \ket{\hvarC}
\\ \vdash \quad &
  (\qit, \qitC) \mapsto \gCX\pk2 (\ket{a} \otimes \gH \ket{b}) \,*\,
  \tfrac{1}{2 \sqrt2} \bigoplusn^{\hvar, \hvarB, \hvarC}
    (-1)^{\hvar \hvarB + \hvarB \hvarC} \cdot \qitB \mapsto \ket{\hvarC}.
\end{align*}
Now, let $\Prop_\oplus \defeq \tfrac{1}{2 \sqrt2} \bigoplusn^{\hvar, \hvarB, \hvarC}
  (-1)^{\hvar \hvarB + \hvarB \hvarC} \cdot \qitB \mapsto \ket{\hvarC}$.
It satisfies the nice property $\Prop_\oplus \col \frameable,\pk1 \Prob 1$, \ie it is frameable and its total probability is $1$ (recall \cref{fig:frameable,fig:prob}, \cref{sect:logic:sum}).
With \ref{lrule:hoare-scale} and \ref{rule:hoare-sum}, we can derive the following general specification of $\mCNOT$ for any input state $\qket$:
\[
  \all{\qket}\
  \hoare{\pk2
    (\qit, \qitC) \mapsto \qket \,*\, \qitB \mapsto \ket{0}
  \pk2}[^{\hvar, \hvarB, \hvarC}]
    { \mCNOT^{\hvar, \hvarB, \hvarC}[\qit, \qitB, \qitC] }
    {\pk2 (\qit, \qitC) \mapsto \gCX \qket \,*\, \Prop_\oplus \pk2}.
\]

From this assertion, we can further derive the following, saying that \mCNOT{} behaves like \gCNOT{} even if $\qit$ and $\qitC$ are entangled with any other qubits:
\[
  \all{\qket}\
  \hoare{\pk2
    (\qit, \qitC, \bar{\qitD}) \mapsto \qket \,*\, \qitB \mapsto \ket{0}
  \pk2}[^{\hvar, \hvarB, \hvarC}]
    { \mCNOT^{\hvar, \hvarB, \hvarC}[\qit, \qitB, \qitC] }
    {\pk2 (\qit, \qitC, \bar{\qitD}) \mapsto \gCX_{\qit, \qitC} \qket \,*\, \Prop_\oplus \pk2}.
\]
Remarkably, for the derivation (which uses \ref{lrule:hoare-sum}), it suffices to know that $\Prop_\oplus$ is \emph{some} assertion satisfying $\frameable$, without knowing the exact form of $\Prop_\oplus$, thanks to the fact that frameable assertions can freely frame into and out of sum $+$ (\ref{lrule:sum-frame-frameable}).
We can further enrich this for mixing $\bigoplus$ (using \ref{lrule:hoare-bigbmix} and \ref{lrule:bigbmix-frame-frameable}).
This demonstrates the power of abstraction (\ref{g:hiding}).
Note that, when we have some succeeding program $\Cmd$ and want to prove the whole specification,
just framing $\Prop_\oplus$ suffices:
\begin{derivation}
  \infer*[Right = \ref{lrule:hoare-seq}]{
    \infer*[Right = \ref{lrule:hoare-frame}]{
      \hoare{ (\qit, \qitC, \bar{\qitD}) \mapsto \gCX_{\qit, \qitC} \qket }
        { \Cmd }{ \PropB }
    }{
      \hoare{ (\qit, \qitC, \bar{\qitD}) \mapsto \gCX_{\qit, \qitC} \qket \pk1*\pk1 \Prop_\oplus }
        { \Cmd }{ \PropB \pk1*\pk1 \Prop_\oplus }
    }}{
    \hoare{ (\qit, \qitC, \bar{\qitD}) \mapsto \qket \pk1*\pk1 \qitB \mapsto \ket{0} }[^{\hvar, \hvarB, \hvarC}]
      { \mCNOT^{\hvar, \hvarB, \hvarC}[\qit, \qitB, \qitC];\, \Cmd }
      { \PropB \pk1*\pk1 \Prop_\oplus }
  }
\end{derivation}

\subsection{Error Correction: Bit-Flip Code}
\label{sect:cases:bit-flip-code}

Fault tolerance is essential for achieving
reliable quantum computation, and error correction lies at its core.
From the perspective of program verification,
verifying such quantum error-correcting codes (QECCs) is a crucial task that must be addressed.
In this subsection and the next \cref{sect:cases:shor-code}, we present case studies of verifying two simple but fundamental QECCs using \ourlogic{},
the \emph{bit-flip code} and the \emph{Shor code},
while highlighting how \ourlogic{} achieves our goals (\ref{g:modular}, \ref{g:sum-mixing}, \ref{g:hiding}).

The \emph{bit-flip code} is the simplest QECC
that can correct a single bit-flip error (a.k.a. $\gX$ error), \ie a quantum error that flips one qubit flips from $\ket{0}$ to $\ket{1}$ or vice versa.
In order to make quantum information tolerant to such errors,
we encode a single qubit of quantum information into three actual qubits.
Concretely, we encode the state $\ket{0}$ as $\ket{000}$,
and $\ket{1}$ as $\ket{111}$.
The quantum information we want to protect is called a \emph{logical qubit},
and the three qubits that store the logical qubit are called \emph{physical qubits}.
The state of the physical qubits has some robustness,
and such robustness enables us to detect and correct the bit-flip error.

\tikzset{
  tight/.style = {inner ysep = -1pt},
  error/.style = {tight, dashed}
}
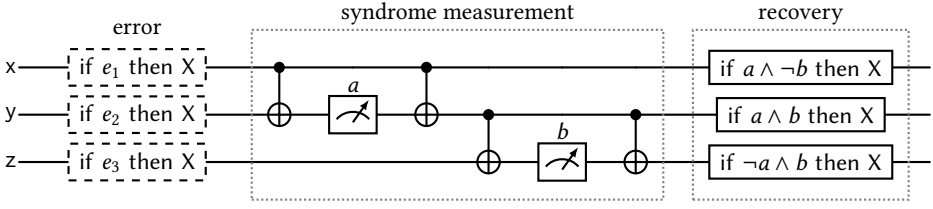
\begin{figure}[t]
  \adjustfigure[\small]
  \begin{quantikz}[row sep = 2pt, style = {inner sep = 0pt}]
    \lstick{\qit} &[.5em]
    \gate[style = error]{ \ifonly{ e_1 }{ \gX } }
      \gategroup[3, style = {draw = none}, label style = {yshift = -.4em}]{error} &[1em]
    \ctrl{1}
      \gategroup[3, steps = 6, style = {inner sep = 6pt, gray, densely dotted}]{syndrome measurement}
    & & \ctrl{1} & & &
    &[1em] \gate[style = tight]{ \ifonly{ \var \land \neg \varB }{ \gX } }
      \gategroup[3, steps = 1, style = {inner sep = 6pt, gray, densely dotted}]{recovery} &
  \\
    \lstick{\qitB} & \gate[style = error]{ \ifonly{ e_2 }{ \gX } } &
    \targ{} & \meter[label style = {yshift = .1em}]{\var} & \targ{} & \ctrl{1} & & \ctrl{1} &
    \gate[style = tight]{ \ifonly{ \var \land \varB }{ \gX } } &
  \\
    \lstick{\qitC} & \gate[style = error]{ \ifonly{ e_3 }{ \gX } } &
    & & & \targ{} & \meter[label style = {yshift = .1em}]{\varB} & \targ & &
    \gate[style = tight]{ \ifonly{ \neg \var \land \varB }{ \gX } } &
  \end{quantikz}
  \caption{Bit-flip code error correction procedure.}
  \label{fig:bitflip-code}
  \Description{}
\end{figure}

\newcommand*{\BitEC}{\mathsf{BitEC}}
The circuit for the bit-flip code error correction procedure is shown in
\cref{fig:bitflip-code}.
The left-most column models the occurrence of bit-flip errors,
where each $\ifonly{e_i}{\gX}$ applies an $\gX$ gate to the qubit if the error $e_i \in \{0, 1\}$ occurs.
We assume $e_1 + e_2 + e_3 \leq 1$ to ensure that at most one qubit is flipped.
The procedure of error correction is as follows:
we first measure the \emph{syndrome}---the effect caused by the error---and then
use the measurement outcomes to recover the logical qubit.
After the recovery procedure,
the syndrome affects only the global phase of the logical qubit,
and thus we can ignore it.

As we did in the previous example,
we can specify and verify the correctness of the bit-flip code error correction procedure $\BitEC$,
using the power of \ourlogic{}:
\[\begin{aligned}
  \ex{\, \Prop \col \frameable,\pk1 \Prob 1}\,\
  \all{\alpha, \beta} \hspace{1em} &
  \pkcurly[\big]{\pk2
     (\qit, \qitB, \qitC) \mapsto
      \gX^{e_1} \gX^{e_2} \gX^{e_3} (\alpha \ket{000} + \beta \ket{111})
  \pk2}^{\var, \varB}
\\[-.2em] & \hspace{2em}
    \BitEC^{\var, \varB}[\qit, \qitB, \qitC] \hspace{1em}
    \pkcurly[\big]{\pk2
      (\qit, \qitB, \qitC) \mapsto \paren{\alpha \ket{000} + \beta \ket{111}} \,*\,
      \Prop
    \pk2}.
\end{aligned}\]
The postcondition $(\qit, \qitB, \qitC) \mapsto (\alpha \ket{000} + \beta \ket{111}) \pk1*\pk1 \Prop$ describes the state of the system after the error correction procedure.
Here, $(\qit, \qitB, \qitC) \mapsto (\alpha \ket{000} + \beta \ket{111})$ captures the restored logical state of the qubits, while $\Prop$ abstracts the effect of the measured syndrome---that contains the difference of global phase induced by measurement outcomes (\ref{g:hiding}).

Crucially, the use of separating conjunction $*$ makes the distinction between the logical state and the syndrome information explicit and formal.
This separation reflects the fact that the syndrome affects only the global phase or auxiliary context, and not the logical content of the qubits.
Such a formulation not only captures the correctness of the procedure but also demonstrates how separation in the logic naturally mirrors separation in the underlying physical system.

\newcommand*{\PhaseEC}{\mathsf{PhaseEC}}
We can also think of a \emph{phase-flip error}, applying the $\gZ$ gate instead of $\gX$.
A phase-flip error-correction code can be obtained immediately from the bit-flip code.
Instead of the $\gZ$ basis $\angl{\ket{0}, \ket{1}}$, the new code uses the $\gX$ basis $\angl{\ket{\plus}, \ket{\minus}}$, encoding a qubit $\ket{\plus}$ as $\ket{\plus\plus\plus}$ and $\ket{\minus}$ as $\ket{\minus\minus\minus}$.
The error correction procedure $\PhaseEC$ can be obtained by adding Hadamard $\gH$ gates before and after $\BitEC$ for the basis transformation.
Its specification can be naturally derived from that of $\BitEC$.

\subsection{Error Correction: Shor's Code}
\label{sect:cases:shor-code}

The \emph{Shor code}~\cite{Shor95-scheme-reducing} is a QECC that
encodes one logical qubit with nine physical qubits and can correct any single-qubit unitary error.
In general, if a QECC can correct both a single $\gX$ error and a single phase-flip error (a.k.a. $\gZ$ error),
it can correct any single-qubit error $\Unitary$, because $\Unitary$ can be expressed as a linear combination of $\gid$, $\gX$, $\gZ$, and $\gX\gZ$.
Shor’s code achieves the correction of both $\gX$ and $\gZ$ errors by concatenating a 3-qubit bit-flip code with a 3-qubit phase-flip code:
first it encodes three $\gZ$-error-tolerant qubits using the 3-qubit phase-flip code, and then applies the 3-qubit bit-flip code to those encoded qubits to obtain a single logical qubit.
Let us denote the logical qubit state that encodes the state $\qket$ in the phase-flip code as $\qket[_\Log]$.
Concretely, the Shor code encodes the logical qubit state $\alpha \ket{0} + \beta \ket{1}$ into
\[
  \alpha \ket{0_\Log 0_\Log 0_\Log} + \beta \ket{1_\Log 1_\Log 1_\Log}.
\]
As a result, any $\gX$ error on the logical qubit can be corrected by the top-level bit-flip code, and any $\gZ$ error can be corrected by the bottom-level phase-flip code.

\newcommand*{\ShorEC}{\mathsf{ShorEC}}
The error correction procedure of the Shor code is roughly given as follows,
where $\PhaseEC$ is the $\gZ$ error-correction procedure defined above and $\BitEC_\Log$ is a variant of the $\gX$ error-correction procedure $\BitEC$ that works on logical qubits (we omit classical variables here for simplicity):
\[
  \ShorEC[\bar{\qit}, \bar{\qitB}, \bar{\qitC}] \ \ \defeq\ \
    \PhaseEC[\bar{\qit}];\
    \PhaseEC[\bar{\qitB}];\
    \PhaseEC[\bar{\qitC}];\
    \BitEC_\Log[\bar{\qit}, \bar{\qitB}, \bar{\qitC}].
\]
The Shor code error correction procedure can be proved by combining the specifications of $\PhaseEC$ and $\BitEC_\Log$ (\ref{g:modular}).
In each error correction procedure, we perform syndrome measurement and accumulate some global phase information.
However, framing such a syndrome by separating conjunction allows us to abstract the global phase information,
enabling scalable reasoning without any blow-up in the size of the formula (\ref{g:hiding}).

\subsection{Probabilistic Choice and Almost Sure Termination}
\label{sect:cases:coin-toss}

Finally, we present a case study of verifying a quantum program
that involves probabilistic choice and almost sure termination (\ref{g:prob_while}).

Since we can use the norm of a quantum state as a probability, we can define the probabilistic choice connective $\Prop \oplus_{\prob} \PropB$ by $\sqrt{\prob}\, \Prop \oplus \sqrt{1 - \prob}\, \PropB$, choosing the state $\Prop$ with probability $\prob$ and $\PropB$ with probability $1 - \prob$ for $\prob \in [0, 1]$.
This $\oplus_\prob$ satisfies the following as expected:
\begin{gather*}
  \Prop \oplus_\prob \PropB \, \dashvdash\,
    \PropB \oplus_{1 - \prob} \Prop
\hspace{3em}
  (\Prop \oplus_\prob \PropB) \oplus_\probB \PropC \, \dashvdash\,
    \Prop \oplus_{\prob \probB} (\PropB \oplus_{\frac{\probB}{1 - \prob \probB}} \PropC)
\\[.3em]
  \frac{
    \Prop \vdash \PropB
  \hspace{1.5em}
    \Prop' \vdash \PropB'
  }{
    \Prop \oplus_\prob \Prop' \, \vdash\, \PropB \oplus_\prob \PropB'
  }
\hspace{3em}
  \frac{
    \hoare{ \Prop }{ \Cmd }{ \PropB }
  \hspace{1.5em}
    \hoare{ \Prop' }{ \Cmd }{ \PropB' }
  }{
    \hoare{ \Prop \oplus_\prob \Prop' }
      { \Cmd }{ \PropB \oplus_\prob \PropB' }
  }
\end{gather*}

Also, the probabilistic choice program $\Cmd \oplus^\var_\prob \Cmd'$ can be defined as follows.
First, we introduce a primitive $\coin^\var_\prob$ that stores $0$ or $1$ to $\var$ with probability $\prob$ or $1 - \prob$, respectively.\footnote{
  Technically, we do not need to extend our program language, because we can mathematically encode $\coin^\var_\prob$ as $\gM_\prob^\var[]$, where $\gM_\prob$ is the 0-qubit measurement that applies $\sqrt{\prob}\, \gid$ returning $0$ or $\sqrt{1 - \prob}\, \gid$ returning $1$.
}
We have
\[
  \hoare{\pk2 \emp \pk2}[^\var]{\pk2 \coin^\var_\prob \pk2}
    {\pk2 \var \mapsto 0 \,\oplus_\prob\, \var \mapsto 1 \pk2}.
\]
From this, we can derive the probabilistic choice by $\Cmd \oplus^\var_\prob \Cmd' \,\defeq\, \coin^\var_\prob;\ \ifelse{ \var }{ \Cmd' }{ \Cmd }$,
for which the following rule can be derived:
\[
  \frac{
    \hoare{ \Prop }{ \Cmd }{ \PropB }
  \hspace{1.5em}
    \hoare{ \Prop }{ \Cmd' }{ \PropC }
  }{
    \hoare{\pk2 \Prop \pk2}[^\var]
      {\pk2 \Cmd \oplus^\var_\prob \Cmd' \pk2}
      {\pk2 \var \mapsto 0 \pk2*\pk2 \PropB \,\oplus_\prob\,
        \var \mapsto 1 \pk2*\pk2 \PropC \pk2}
  }
\]

\newcommand*{\cointoss}{\mathsf{cointoss}}
For example, let us think of the following repeat-until-success program:
\[
  \cointoss_\prob \ \ \defeq\ \
    \coin^\var_\prob;\,\
    \whilex{ \var }{ \paren[\big]{\pk2
      \coin^\var_\prob;\, \varC \store \varC + 1
    \pk2} }.
\]
Intuitively, the program repeats tossing a coin that returns $0$ with probability $\prob \ne 0$ until the coin finally returns $0$.
The counter variable $\varC$ stores the number of iterations.
Our logic can prove that this program almost surely terminates.
More specifically, we can prove the following, abstracting over an assertion $\Prop$ that is frameable and has the probability $1$:
\[
  \ex{\, \Prop \col \frameable,\pk1 \Prob 1}\
  \hoare{\pk2 \varC \mapsto 0 \pk2}[^\var]
    { \cointoss_\prob }
    {\pk2 \var \mapsto 0 \pk2*\pk2 \Prop \pk2}.
\]
 \section{Discussion}
\label{sect:discussion}

\paragraph{Automation}
One point worth noting about the case studies in the previous section
is that most of the proofs follow a uniform three-step pattern:
(i) prove the specification on the computational basis (or any chosen basis)
via simple symbolic execution;
(ii) apply \ref{lrule:hoare-scale} and \ref{lrule:hoare-frame};
and (iii) sum up the derivations with the \ref{lrule:hoare-sum} to lift the result
to general preconditions.

This observation also suggests a route to automation.
Step (i) is essentially a standard symbolic-execution pass
augmented with basic complex number arithmetic;
steps (ii)--(iii), despite applying rules with intricate side conditions,
can still be automated using the frameability notion of \cref{fig:frameable}, \cref{sect:logic:complexity}.
This kind of pattern appears in recent work on automated verification of quantum programs, such as AutoQ 2.0~\cite{ChenCHHLLT25-AutoQ2};
see \cref{sect:related} for the details.

\paragraph{Density matrices}

A limitation of such a simple methodology is that the size of the propositions
can grow exponentially with the number of measurements.
As we explained in \cref{sect:overview:abstraction}, the abstraction method can help mitigate the growth,
or we typically find compact representations
(\eg $\Mix^\iota_{i=0,1} (-1)^{\iota} \cdot \qit \mapsto \ket{a \xor \iota}$)
as shown in the case studies considered so far.
This method should works for any Clifford circuits, but is not guaranteed to work for general programs.
In such situations where an explosion occurs,
it may be preferable to work with \emph{density matrices} instead.

Although we adopt a global-phase-sensitive logic
rather than work with density matrices,
this choice does not preclude encoding density matrices in \ourlogic{}.
One such encoding is:
\[
  \qit \mapsto^{\iota} \dens \quad\defeq\quad
    \ex{I}\ \
    \ex{\pkcurly{ \qket[_i] }_{i \in I} }\ \
    \dens = \sum\nolimits_{i\in I} \ketbra{\psi_i}{\psi_i} \ *\
    \bigoplusn^{\iota}_{i\in I} \qit \mapsto \qket[_i].
\]
We can switch between the usual vector representation and this density-matrix representation in \ourlogic{}.
Where basis-locality is not necessary, one can use the density matrix representation $\qit \mapsto^{\iota} \dens$.
This roughly amounts to the existing approach of \citet{DengWX24-qsl} and enjoys entanglement- and outcome-locality.
Still, the existential quantification in $\qit \mapsto^{\iota} \dens$ makes the assertion non-precise, which is at odds with some advanced proof rules (\eg \ref{lrule:bigbmix-unframe}).
 \section{Related Work}
\label{sect:related}

\paragraph{Quantum Hoare logics}

The pioneering work on Hoare logic~\cite{Ying11-qhl, ZhouYY19-applied-qhl, Unruh19-qhl-ghost, LewisSZ21-qhl-survey} established a good foundation for
deductive verification of quantum programs.
In particular, \citet{LiY18-termination-qhl} has proposed a Hoare logic that can specify almost sure termination of quantum programs, as we also do.
However, as in the classical case, Hoare logic does not scale
for more complex programs, due to its very global style of specification.
This problem has been attacked by developing separation logics, as discussed below.

\paragraph{Quantum separation logics}

To bring local reasoning to quantum verification,
disentanglement has been proposed as a suitable notion of separation,
obtaining Quantum Separation Logic (QSL)~\cite{ZhouBHYY21-qsl, LeLSS22-qsl, DengWX24-qsl}.
As we remarked in \cref{sect:overview}, none of the existing QSLs have achieved all three locality principles we proposed (\cref{fig:locality-rules}, \cref{sect:intro}) in one logic.

Of particular note is \citet{DengWX24-qsl}'s logic, which is the only existing QSL that supports a probabilistic mixture of assertions $\oplus$.
Unlike our mixing $\oplus$, modelled as multiset sum, their model of $\oplus$ is the sum of density matrices, insensitive to global phases and the branching structure introduced by measurements.
Consequently, as discussed in \cref{sect:overview:measurement}, their approach to $\oplus$ is incompatible with our basis-locality principle, and a simple extension of their logic cannot achieve that locality.

Also, \citet{SuZFY24-qsl} have recently proposed a separation logic
for the modular verification of algorithms that utilize dirty qubits.
Their model is global-phase-insensitive, and thus does not support
our locality principles.
In particular, because of the lack of basis-locality, their framework does not handle cases where dirty qubits are entangled with the input, unlike \ourlogic{}.

\paragraph{Automated verification of quantum programs}

Automated verification of quantum programs is a highly important yet challenging research problem, and various approaches have been explored.
Qbricks~\cite{CharetonBBPV21-AutomatedDeductive} demonstrated some degree of automation by employing the path-sum representation together with Why3~\cite{FilliatreP13-Why3}, a semi-automated verification platform for functional programs.
Qafny~\cite{LiZCNLCW24-Qafny} automatically verifies annotated quantum programs by translating them into classical separation logic for arrays implemented in Dafny~\cite{LeinoM14-Dafny}.\footnote{
  The work by \citet{LiZCNLCW24-Qafny} was not cited in the published POPL version, as we became aware of it only after the version had been finalized.
}
\citet{WangY24-symbolic-execution} proposed a symbolic execution framework for quantum programs and applied it for automated verification of quantum error-correction codes.

Particularly relevant to our work is AutoQ 2.0~\cite{ChenCHHLLT25-AutoQ2}.
It is a framework for verifying Hoare-style assertions on quantum programs with measurement, branching, and loop constructs, with automation using level-synchronized tree automata~\cite{AbdullaCCHLLLT25-quantum-lsta}.
Much like our treatment, it represents post-measurement states as a map of non-normalized vectors and employs reasoning principles similar to \ourlogic{}'s outcome- and basis-locality.

\paragraph{Quantum relational logics}

Quantum relational logics~\cite{BartheHYYZ20-qrl, Unruh19-qrhl} verify relations between program behaviours rather than functional specifications of single programs.
For modularity, relational \emph{separation} logics have been studied in the classical probabilistic setting~\cite{BaoDF25-bluebell}, but not yet in the quantum setting.

\paragraph{Outcome logics}

Notably, the need to talk about a collection of outcomes at once
has emerged before in a completely different context:
Outcome Logic~\cite{ZilbersteinDS23-outcome-logic}
studied general reasoning principles about branching effects---\ie non-deterministic and probabilistic (but not quantum) computation---when assertions are over the whole collection of possible outcomes.
What we need here is a similar jump, conceptually.
Technically, however, there are serious challenges,
as Outcome Logic only considers one layer of locality,
while we have to handle three, deeply interacting, layers.
Outcome Separation Logic~\cite{ZilbersteinSS24-outcome-sl}
makes some first steps in incorporating separation of heaps
in a language with probabilistic branching,
but it uses a very syntactic approach to framing and does not hint at
design principles that might apply more generally.
In \ourlogic, we harmonize general framing with the other two connectives.
To make the logic work, in particular,
the connective that composes outcomes, mixing, must be non-commutative
and satisfy a strong interchange law with sum.

\paragraph{Quantum programming languages}

\emph{Quantum control} has been actively studied in the form of \textsf{qif} and symmetric pattern matching~\cite{SabryVV18-qif},
and adopted by several quantum languages~\cite{AltenkirchG05-QML, SvoreGTAGHKMPR18-Qsharp, BichselBGV20-Silq, HirataH25-Qurts, HeunenLMR2025-qif}.
Roughly speaking, a program $\qifelse{\expr}{\Cmd_1}{\Cmd_0}$ executes both branches $\Cmd_0$ and $\Cmd_1$
under negative or positive quantum control over the qubit $\expr$.
We expect that programs with such quantum control can be effectively verified using basis-locality of \ourlogic{}.

Also, in light of the no-cloning theorem, some recent quantum languages~\cite{KochBRLSSD25-Guppy, HirataH25-Qurts} adopt ownership types, popularized by Rust~\cite{MatsakisK14-Rust}.
We expect that \ourlogic{} can serve as a semantic foundation for such ownership types.
 \section{Conclusion and Future Work}
\label{sect:concl}

We proposed \ourlogic{}, the first quantum program logic to unify the three locality principles---entanglement-locality, outcome-locality, and basis-locality.
We proved its soundness and demonstrated its power by verifying several practical quantum programs.

Although we have provided a pen-and-paper proof of the soundness of our logic, it would be better to mechanize the proof, as has been done for some quantum program logics~\cite{ZhouBSLY23-CoqQ, DengWX25-qsl-arXiv}.
It would help explore advanced features without the fear of making mistakes.

One important direction for future work is the application of \ourlogic{} for automated verification.
We have started exploring this direction, as discussed in \cref{sect:discussion}.
More specifically, we expect that the insights from \ourlogic{} can be used to extend AutoQ 2.0~\cite{ChenCHHLLT25-AutoQ2}, discussed in \cref{sect:related}, with the entanglement-locality principle for more scalable automation.

We are also interested in extending our logic to relational verification.
A possible direction for future work is to adopt the perspective of Bluebell~\cite{BaoDF25-bluebell}, which unified
unary and relational reasoning in a single separation logic for probabilistic programs.
 
\begin{acks}
We would like to thank Yu-Fang Chen, Takeshi Tsukada and Ugo Dal Lago for their insightful discussions and valuable feedback.
This research was supported in part by the Hakubi Project at Kyoto University and JSPS KAKENHI Grant Number JP24KJ0133 for the first author, JST SPRING, Grant Number JPMJSP2110 and JST ACT-X, Grant Number JPMJAX23CT for the third author.
\end{acks}
 
\bibliographystyle{ACM-Reference-Format}

\appendix
\section{More on the Logic}
\label{app:sect:logic}

\subsection{Standard Connectives}
\label{app:sect:logic:standard}

We have the following standard connectives.
They satisfy the usual proof rules, presented below:
\begin{gather*}
  \textstyle
  \all{x \in A} \Prop_x \pk6\defeq\pk6
  \bigcap_{x \in A} \Prop_x
\hspace{2.5em}
  \ex{x \in A} \Prop_x \pk6\defeq\pk6
  \bigcup_{x \in A} \Prop_x
\\[.2em]
  \Prop_0 \land \Prop_1 \pk6\defeq\pk6
  \all{\pk1 i \in \nk1\curly{0, 1}} \pk1 \Prop_i
\hspace{2.5em}
  \Prop_0 \lor \Prop_1 \pk6\defeq\pk6
  \ex{\pk1 i \in \nk1\curly{0, 1}} \pk1 \Prop_i
\\[.2em]
  \neg \Prop \pk6\defeq\pk6 \Prop^{\pk1 \complement}
\hspace{2.5em}
  \lift{\varphi} \pk6\defeq\pk6 \set{ \mult }{ \varphi }
\\[.2em]
  \ffrac{
    \all{x \in A} \pk2 (\Prop \vdash \PropB_x)
  }{
    \Prop \vdash \all{x \in A} \PropB_x
  }
\hspace{3em}
  \ffrac{
    \all{x \in A} \pk2 (\Prop_x \vdash \PropB)
  }{
    (\ex{x \in A} \Prop_x) \vdash \PropB
  }
\\[.2em]
  \Prop \land \neg \Prop \, \vdash\, \PropB
\hspace{3em}
  \PropB \, \vdash\, \Prop \lor \neg \Prop
\hspace{3em}
  \ffrac{
    \varphi \, \limp\, \paren{\Prop \vdash \PropB}
  }{
    \lift{\varphi} \land \Prop \, \vdash\, \PropB
  }
\end{gather*}

\subsection{Right Adjoint of SL Connectives}
\label{app:sect:logic:right-adjoint}

In \cref{rem:right-adjoint}, we mentioned that the connectives $\land$, $*$, $\oplus$ and $+$ all have the right adjoint, written as $\limp$, $\wand$, $\pine$ and $\cross$.

The following give the models for these connectives:\footnote{
  We use $\downarrow$ to denote definedness (\eg $\mult \cdot \mult' \defined$ means that $\mult \cdot \mult'$ is defined).
}
\begin{gather*}
  \Prop \limp \PropB \pk6\defeq\pk6
  \Prop^{\pk1 \complement} \cup \PropB
\hspace{2.5em}
  \Prop \wand \PropB \pk6\defeq\pk6
  \set{ \mult }{
    \all{\mult' \in \Prop \st \mult \cdot \mult' \defined} \pk3
    \mult \cdot \mult' \in \PropB }
\\[.3em]
  \Prop \pine \PropB \pk6\defeq\pk6
  \set{ \mult }{ \all{\mult' \in \Prop} \pk2 \mult \uplus \mult' \in \PropB }
\\[.3em]
  \Prop \cross \PropB \pk6\defeq\pk6
  \set{ \mult }{ \all{\mult' \in \Prop} \pk2
    \all{\mbij \col \mult \mbiject \mult' \st \mult +_\mbij \mult' \defined} \pk2
    \mult +_\mbij \mult' \in \PropB }
\end{gather*}

As expected, we have the following adjunction rules.
\begin{gather*}
  \ffrac{
    \Prop \land \PropB \pk2\vdash\pk2 \PropC
  }{
    \Prop \pk2\vdash\pk2 \PropB \limp \PropC
  }
  \hspace{3em}
  \ffrac{
    \Prop * \PropB \pk2\vdash\pk2 \PropC
  }{
    \Prop \pk2\vdash\pk2 \PropB \wand \PropC
  }
  \hspace{3em}
  \ffrac{
    \Prop \oplus \PropB \pk2\vdash\pk2 \PropC
  }{
    \Prop \pk2\vdash\pk2 \PropB \pine \PropC
  }
  \hspace{3em}
  \ffrac{
    \Prop + \PropB \pk2\vdash\pk2 \PropC
  }{
    \Prop \pk2\vdash\pk2 \PropB \cross \PropC
  }
\end{gather*}

\subsection{Additional Proof Rules}
\label{app:sect:logic:rules}

We have the following proof rules for the quantum and classical points-to tokens.
\begin{proofrules*}
  \infer*[lab = qpoints-swap]{}{
    (\bar{\qitC}, \qit, \qitB, \bar{\qitC'}) \mapsto \qket \pk6\dashvdash\pk6
    (\bar{\qitC}, \qitB, \qit, \bar{\qitC'}) \mapsto \gSWAP_{\qit, \qitB} \qket
  }
  \label{arule:qpoints-swap}
\\
  \infer*[lab = qpoints-distinct]{}{
    \bar{\qit} \mapsto \qket \pk3\vdash\pk3
    \distinct(\bar{\qit})
  }
  \label{arule:qpoints-distinct}

  \infer*[lab = cpoints-disj]{}{
    \var \mapsto \val \pk3*\pk3 \varB \mapsto \valB
    \pk2\vdash\pk2 \var \ne \varB
  }
  \label{arule:cpoints-disj}
\end{proofrules*}
The rule \ref{arule:qpoints-swap} is for permuting the qubits of a quantum points-to token.
Here, $\gSWAP$ is the unitary operator that maps $\ket{i j}$ to $\ket{ji}$ for each $i, j \in \curly{0, 1}$.

We have the following proof rules for $\Prop \col \nonnb$, defined in \cref{fig:incomp-unambig}.
\begin{gather*}
  \emp, \, \bar{\qit} \mapsto \qket, \, \var \mapsto \val \col \nonnb
\hspace{3em}
  \frac{
    \PropB \col \nonnb
    \hspace{1.5em}
    \Prop \vdash \PropB
  }{
    \Prop \col \nonnb
  }
\\[.5em]
  \frac{
    \all{x \in I} \pk2 \paren{\Prop_x \col \nonnb}
    \quad
    I \neq \empset
  }{
    \bigoplus_{x \in I} \Prop_x \col \nonnb
  }
\hspace{3em}
  \frac{
    \Prop, \PropB \col \nonnb
  }{
    \Prop * \PropB \col \nonnb
  }
\hspace{3em}
  \frac{
    \Prop \col \nonnb
  }{
    \Prop + \PropB \col \nonnb
  }
\end{gather*}

We have the following proof rules for $\Prop \col \Prob \prob$, defined in \cref{fig:prob}.
\begin{gather*}
  \emp, \var \mapsto \val \col \Prob 1
  \hspace{3em}
  \bar{\qit} \mapsto \qket \col \Prob\pk2 \norm{\qket}^2
\hspace{3em}
  \frac{
    \PropB \col \Prob \prob
    \hspace{1.5em}
    \Prop \vdash \PropB
  }{
    \Prop \col \Prob \prob
  }
\\[.5em]
  \frac{
    \all{x} \pk2 \paren{\Prop_x \col \Prob \prob_x}
  }{
    \bigoplus_{x \in I} \Prop_x \col \Prob\pk2 \paren{\sum_{x \in I} \prob_x}
  }
\hspace{3em}
  \frac{
    \Prop \col \Prob \prob
    \hspace{1.5em}
    \PropB \col \Prob \probB
  }{
    \Prop * \PropB \col \Prob \prob \probB
  }
\end{gather*}

\subsection{More on Tagged Mixing}
\label{app:sect:logic:mix}

We have the following derived rules for tagged mixing, other than those in \cref{fig:mix}:
\begin{proofrules*}
  \infer*[lab = bigmix-mono]{
    \all{x \in I}\, \paren{\Prop_x \vdash \PropB_x}
  }{ \textstyle
    \bigoplus^\hvar_{x \in I} \Prop_x \, \vdash\,
    \bigoplus^\hvar_{x \in I} \PropB_x
  }
  \label{arule:bigmix-mono}

  \infer*[lab = bigmix-comm]{
    \text{$f \col I \to J$ is a bijection}
  }{ \textstyle
    \bigoplus^\hvar_{x \in I} \Prop_{\pk1 f\pk2 x} \, \dashvdash\,
    \bigoplus^\hvar_{y \in J} \Prop_y
  }
  \label{arule:bigmix-comm}

  \infer*[lab = bigmix-bigmix]{}{ \textstyle
    \bigoplus^\hvar_{x \in I} \bigoplus^\hvarB_{y \in J} \Prop_{x, y} \, \dashvdash\,
    \bigoplus^\hvarB_{y \in J} \bigoplus^\hvar_{x \in I} \Prop_{x, y}
  }
  \label{arule:bigmix-bigmix}

  \infer*[lab = nb-bigmix]{}{ \textstyle
    \nb \, \dashvdash\, \bigoplus^\hvar_{\_ \in \empset}
  }
  \label{arule:nb-bigmix}

  \infer*[lab = mix-bigmix]{}{ \textstyle
    \Prop_0 \mix{\hvar} \Prop_1 \, \dashvdash\,
    \bigoplus^\hvar_{i \in \{0, 1\}} \Prop_i
  }
  \label{arule:mix-bigmix}

  \infer*[lab = bigmix-scale]{}{ \textstyle
    \scl{\alpha}{(\bigoplus^\hvar_{x \in I} \Prop_x)} \, \dashvdash\,
    \bigoplus^\hvar_{x \in I} \pk2 \scl{\alpha}{\Prop_x}
  }
  \label{arule:bigmix-scale}

  \infer*{
    \all{x} \pk2 \paren{\Prop_x \col \precise}
  }{ \textstyle
    \bigoplus^\hvar_{x \in I} \Prop_x \col \precise
  }

  \infer*[lab = hoare-bigmix]{
    \all{x \in I} \pk2
    \hoare{ \Prop_x }{ \Cmd }{ \PropB_x }
  }{ \textstyle
    \hoare{\pk2 \bigoplus^\hvar_{x \in I} \Prop_x \pk2}{ \Cmd }{\pk2 \bigoplus^\hvar_{x \in I} \PropB_x \pk2}
  }
  \label{arule:hoare-bigmix}

  \infer*[lab = bigmix-frame]{}{ \textstyle
    (\bigoplus^\hvar_{x \in I} \Prop_x) \pk2*\pk2 \PropB \, \vdash\,
      \bigoplus^\hvar_{x \in I} \pk2 (\Prop_x * \PropB)
  }
  \label{arule:bigmix-frame}

\infer*[lab = bigmix-unframe]{
    \PropB \col \precise
  }{ \textstyle
    \bigoplus^\hvar_{x \in I} \pk2 (\Prop_x * \PropB) \, \vdash\,
      (\bigoplus^\hvar_{x \in I} \Prop_x) \pk2*\pk2 \PropB
  }
  \label{arule:bigmix-unframe}

  \infer*[lab=bigmix-frame-frameable]{
    \PropB \col \frameable
  }{ \textstyle
    \bigoplus^\hvar_{x \in I} \pk2 (\Prop_x * \PropB) \, \dashvdash\,
      (\bigoplus^\hvar_{x \in I} \Prop_x) \pk2*\pk2 \PropB
  }
  \label{arule:bigmix-frame-frameable}

  \infer*{
    \all{x \in I} \pk2 \paren{\Prop_x \col \nonnb}
    \quad
    I \neq \empset
  }{ \textstyle
    \bigoplus^\hvar_{x \in I} \Prop_x \col \nonnb
  }

  \infer*{
    \all{x} \pk2 \paren{\Prop_x \col \Prob \prob_x}
  }{ \textstyle
    \bigoplus^\hvar_{x \in I} \Prop_x \col \Prob\pk2 \paren{\sum_{x \in I} \prob_x}
  }
\end{proofrules*}

\subsection{More on Incompatibility Relation}
\label{app:sect:logic:incomp}

In \cref{fig:incomp-unambig}, we defined the incompatibility relation of resources by
\[
  (\qket, \Store) \pk1\hash\pk1 (\qketB, \Store') \pk9\defeq\pk9
  \ex{\var \pk1\in\pk1 \dom \Store \pk1\cap\pk1 \dom \Store'} \pk2
  \Store[\var] \pk1\ne\pk1 \Store'[\var].
\]
We state the following property that characterizes it algebraically.

\begin{lemma}[Algebraic characterization of incompatibility]
  \label{lem:incom-basic-prop}
  For any $x, y$ in the resource ring $R = \Res$,
  the following three conditions are equivalent:
  \begin{enumerate}[label=(\arabic*)]
    \item $x \hash y$,
    \label{cond:incom-basic-prop:0}
    \item $\all{k, h \in R}\  k \cdot x + h \cdot y = \bot$,
    \label{cond:incom-basic-prop:1}
    \item $\all{z \in R}\ (\all{k \in R}\ k \cdot x + z = \bot) \lor (\all{h \in R}\ h \cdot y + z = \bot)$.
    \label{cond:incom-basic-prop:2}
  \end{enumerate}
\end{lemma}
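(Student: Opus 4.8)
The plan is to establish the cycle of implications $(\ref{cond:incom-basic-prop:0}) \Rightarrow (\ref{cond:incom-basic-prop:1}) \Rightarrow (\ref{cond:incom-basic-prop:2}) \Rightarrow (\ref{cond:incom-basic-prop:0})$, reasoning directly from the resource-ring structure on $\Res = \Qstate \times \dStore$ recalled in \cref{fig:res-pcm,fig:resource-ring-res}. Writing $x = (\qket, \Store)$ and $y = (\qketB, \Store')$, the facts I will use are: $a \cdot b$ is defined exactly when the qubit-domains and the store-domains of $a$ and $b$ are respectively disjoint, and is then their union (tensoring the vectors); $a + b$ is defined exactly when $a$ and $b$ carry the same store and the same qubit-domain, and is then obtained by summing the vectors; and $\bot$ denotes undefinedness, in which both operations are strict.

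For $(\ref{cond:incom-basic-prop:0}) \Rightarrow (\ref{cond:incom-basic-prop:1})$, fix $\var \in \dom\Store \cap \dom\Store'$ with $\Store[\var] \ne \Store'[\var]$ and take arbitrary $k, h \in \Res$. If $k \cdot x$ or $h \cdot y$ is undefined, so is $k \cdot x + h \cdot y$ by strictness; otherwise, disjointness of store-domains forces $k \cdot x$ to carry $\var$ with value $\Store[\var]$ and $h \cdot y$ to carry $\var$ with value $\Store'[\var]$, so the two stores disagree at $\var$ and the sum is again undefined. For $(\ref{cond:incom-basic-prop:1}) \Rightarrow (\ref{cond:incom-basic-prop:2})$, fix $z$ and suppose both disjuncts of (\ref{cond:incom-basic-prop:2}) fail, so there are $k_0, h_0$ with $k_0 \cdot x + z$ and $h_0 \cdot y + z$ both defined. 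Then $k_0 \cdot x$, $h_0 \cdot y$ and $z$ all share the same store and the same qubit-domain, whence $k_0 \cdot x + h_0 \cdot y$ is defined --- contradicting (\ref{cond:incom-basic-prop:1}).

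For $(\ref{cond:incom-basic-prop:2}) \Rightarrow (\ref{cond:incom-basic-prop:0})$ I argue contrapositively: assume $\neg(x \hash y)$, i.e.\ $\Store$ and $\Store'$ agree on $\dom\Store \cap \dom\Store'$, so that $\Store \cup \Store'$ is a well-defined store. Let $Q_x, Q_y$ be the qubit-domains of $\qket, \qketB$, put $z = (\mathbf{0}, \Store \cup \Store')$ on qubit-domain $Q_x \cup Q_y$, let $k$ have store equal to the restriction of $\Store'$ to $\dom\Store' \setminus \dom\Store$ and qubit-domain $Q_y \setminus Q_x$ (vector arbitrary), and symmetrically let $h$ have store the restriction of $\Store$ to $\dom\Store \setminus \dom\Store'$ and qubit-domain $Q_x \setminus Q_y$. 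A direct check shows $k \cdot x$ and $h \cdot y$ are both defined and both carry store $\Store \cup \Store'$ and qubit-domain $Q_x \cup Q_y$, so $k \cdot x + z$ and $h \cdot y + z$ are both defined, exhibiting a $z$ that violates (\ref{cond:incom-basic-prop:2}).

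I expect no conceptual obstacle here; the argument is pure bookkeeping with partial operations on stores and qubit-sets. The one point that needs a little attention is the witness construction in the last implication, where $x$ and $y$ may live on different qubit-sets: this is precisely why $k$ and $h$ must be chosen asymmetrically, each ``padding'' its factor up to the common qubit-set $Q_x \cup Q_y$ and up to the joined store $\Store \cup \Store'$. Keeping the $\bot$/undefinedness cases explicit --- rather than reasoning as if $\cdot$ and $+$ were total --- is the other routine care required.
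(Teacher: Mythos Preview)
Your proof is correct and follows essentially the same cycle of implications as the paper's own argument. The only cosmetic differences are that the paper phrases $(\ref{cond:incom-basic-prop:1}) \Rightarrow (\ref{cond:incom-basic-prop:2})$ via the abstract observation ``if $a + b \ne \bot$ then $a + w = \bot \iff b + w = \bot$'' (which is exactly your transitivity-through-$z$ reasoning), and in $(\ref{cond:incom-basic-prop:2}) \Rightarrow (\ref{cond:incom-basic-prop:0})$ the paper happens to choose the zero vector for the padding frames $k$ and $h$ where you note the vector can be arbitrary --- both choices work for the same reason.
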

\begin{proof}
  For simplicity, we denote $x[a]$ to mean $\Store[a]$ for $x = (\qket, \Store)$.

  \labelcref{cond:incom-basic-prop:0} $\Rightarrow$ \labelcref{cond:incom-basic-prop:1}.
  If there exists $a$ such that $x[a] \neq y[a]$ and both $k\cdot x$ and $h\cdot y$ are defined,
  then $(k\cdot x)[a] = x[a] \neq y[a] = (h\cdot y)[a]$. So $k \cdot x + h \cdot y$ is undefined, that is, it equals $\bot$.

  \labelcref{cond:incom-basic-prop:1} $\Rightarrow$ \labelcref{cond:incom-basic-prop:2}.
  In $\Res$, if $x + y \neq \bot$, then $x + z = \bot$ if and only if $y + z = \bot$ for any $z$.
  Therefore, if $k\cdot x + z \neq \bot$ and $k\cdot x + h \cdot y = \bot$,
  then $z + h \cdot y = \bot$.

  \labelcref{cond:incom-basic-prop:2} $\Rightarrow$ \labelcref{cond:incom-basic-prop:0}.
  We prove the contrapositive.
  Let $x = (\Qit\mapsto\qket, \Store)$ and $y = (\QitB\mapsto\qketB, \Store')$.
  Since there is no $a \in \dStore$ such that $\Store[a] \neq \Store'[a]$,
  we can take the supremum $\Store'' \defeq \sup(\Store, \Store')$.
  Let us define $\QitC \defeq \Qit\cup\QitB$
  and $z \defeq (\QitC \mapsto 0, \Store'')$.
  Such $z$ satisfies
  $
  z
  = z + z
  = x \cdot (\QitC\setminus\Qit \mapsto 0, \Store''\setminus\Store)
  = y \cdot (\QitC\setminus\QitB \mapsto 0, \Store''\setminus\Store')
  $
  thus it contradicts \labelcref{cond:incom-basic-prop:2}.
\end{proof}

\begin{lemma}[Frame on incompatibility]
  \label{lem:incomp-closed-under-mult-add}
  If $x, y \in R$ satisfy $x \hash y$, then $x + y = \bot$.
  In particular, $x\neq y$.
  Also, for any $z \in R$, $x\cdot z \hash y$ and $x + z \hash y$ if the left-hand side is defined.
\end{lemma}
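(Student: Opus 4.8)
The plan is to reduce everything to the algebraic characterisation of $\hash$ just established: for $x, y \in R = \Res$, $x \hash y$ holds if and only if
$(\star)$: for all $k, h \in R$ one has $k \cdot x + h \cdot y = \bot$.
So I would start by assuming $x \hash y$ and extracting $(\star)$. For $x + y = \bot$ I simply instantiate $(\star)$ at $k = h = \unit$ and use $\unit \cdot x = x$, $\unit \cdot y = y$. For $x \neq y$: either note directly that $x = y$ contradicts $x \hash y$ (no store value can differ from itself), or observe that in $\Res$ the sum $a + a$ is always defined (the store component agrees with itself and the quantum component becomes $2\qket$), so $x + y = \bot$ already forces $x \neq y$.

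For the two closure statements I would again go through the characterisation, establishing the relevant instance of $(\star)$. Consider first $x \cdot z \hash y$, under the assumption that $x \cdot z$ is defined. Fix $k, h \in R$. If $k \cdot (x \cdot z)$ is undefined, then $k \cdot (x \cdot z) + h \cdot y = \bot$ trivially. Otherwise, by commutativity and associativity of the PCM product, $k \cdot (x \cdot z) = (k \cdot z) \cdot x$ and $k \cdot z$ is defined, so instantiating $(\star)$ at $(k \cdot z, h)$ gives $k \cdot (x \cdot z) + h \cdot y = (k \cdot z) \cdot x + h \cdot y = \bot$. Hence $(\star)$ holds for $x \cdot z$ and $y$, i.e.\ $x \cdot z \hash y$.

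For $x + z \hash y$, under the assumption that $x + z$ is defined, fix $k, h \in R$ and compute, reading all equalities as Kleene equalities:
by distributivity and commutativity of $\cdot$, $k \cdot (x + z) = k \cdot x + k \cdot z$, and then by commutativity and associativity of $+$, $(k \cdot x + k \cdot z) + h \cdot y = (k \cdot x + h \cdot y) + k \cdot z$; since $k \cdot x + h \cdot y = \bot$ by $(\star)$ and undefinedness propagates through $+$, the whole expression is $\bot$. Thus $k \cdot (x + z) + h \cdot y = \bot$ for all $k, h$, so $x + z \hash y$.

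The only real care needed is the bookkeeping of Kleene equality in the resource-ring axioms — that rewriting $k \cdot (x \cdot z)$ as $(k \cdot z) \cdot x$ (resp.\ $k \cdot (x + z)$ as $k \cdot x + k \cdot z$) also transfers definedness, and that the undefined sub-cases are dealt with separately — but this is routine rather than deep. As a sanity check, the two closure facts are also immediate in the concrete model $\Res = \Qstate \times \dStore$: both $x \cdot z$ and $x + z$ leave the store of $x$ intact (merely enlarging it, in the product case), so any variable $a$ witnessing $\Store_x[a] \neq \Store_y[a]$ still witnesses incompatibility with $y$.
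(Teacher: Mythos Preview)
Your proof is correct and follows exactly the approach the paper takes: the paper's own proof is the single line ``Immediate from \cref{cond:incom-basic-prop:1}'', i.e.\ exactly your $(\star)$, and you have simply unpacked the instantiations and Kleene-equality bookkeeping that the paper leaves implicit.
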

\begin{proof}
  Immediate from \labelcref{cond:incom-basic-prop:1}.
\end{proof}

\begin{remark}
  In this paper, we defined the incompatibility relation only for the specific resource ring $\Res$.
  This definition extends to any resource ring $R$
  by adopting \labelcref{cond:incom-basic-prop:1} $\land$ \labelcref{cond:incom-basic-prop:2} as axioms.
  These conditions \labelcref{cond:incom-basic-prop:1} and \labelcref{cond:incom-basic-prop:2}
  are not equivalent in general, so both are required.
  \Cref{lem:incomp-closed-under-mult-add} holds for arbitrary $R$,
  and all soundness proof in this subsection
  can be done with this general definition.
  We note that the condition \labelcref{cond:incom-basic-prop:1}
  captures the essence of incompatibility,
  whereas \labelcref{cond:incom-basic-prop:2} is required for the soundness of \ref{lrule:sum-precise} and \ref{lrule:sum-unframe}.
\end{remark}

\subsection{Soundness Proof}
\label{app:sect:logic:proof}

We give proofs of the correctness of some non-trivial proof rules.

\begin{proof}[Proof of \ref{lrule:bigbmix-frame}, \cref{fig:pcm-sep}]
  All elements on the left-hand side are of the form $(\biguplus_x \mult_x) \cdot \mult'$ where
  $\mult_x\in {\Prop_x}$ and $\mult'\in {\PropB}$.
  By an easy calculation, this is equivalent to
  $\biguplus_x \mult_1 \cdot \mult'$,
  which is on the right-hand side.\footnote{
    The opposite direction is not generally true when ${\PropB}$ is not a singleton set.
  }
\end{proof}

\begin{proof}[Proof of \ref{lrule:bigbmix-unframe}, \cref{fig:pcm-sep}]
  If ${\PropB}$ is empty, then both sides are empty.
  If ${\PropB}$ is a singleton $\{ \mult \}$, then the left-hand side is $\set{ \biguplus_x \mult_x \cdot \mult }{\mult_x \in {\Prop_x} }$.
  Since $\biguplus_x \mult_x \cdot \mult = (\biguplus_x \mult_x)\cdot \mult$, this is included in the right-hand side.
\end{proof}

\begin{proof}[Proof of \ref{lrule:sum-assoc}, \cref{fig:sum}]
  All elements on the left-hand side are of the form $(\mult_1 +_\mbij \mult_2) +_\mbijB \mult_3$ where
  $\mult_1\in {\Prop}$,
  $\mult_2\in {\PropB}$,
  $\mult_3\in {\PropC}$,
  with multiset bijections
  $\mbij \col \mult_1 \mbiject \mult_2$
  and
  $\mbijB \col (\mult_1 + \mult_2) \mbiject \mult_3$.
  Since multiset bijections compose and there are canonical multiset bijections
  $\mult \mbiject \mult' \mbiject \mult +_{h} \mult'$
  when $\mult +_h \mult'$ is defined,
  we have a canonical multiset bijection $\mult_1 \mbiject \mult_2 \mbiject \mult_3$.
  Along these bijections, we can define $\mult_1 +_{\mbij'}(\mult_2 +_{\mbijB'} \mult_3)$
  with appropriate $\mbij'$ and $\mbijB'$, satisfying
  $\mult_1 +_{\mbij'}(\mult_2 +_{\mbijB'} \mult_3)
  = (\mult_1 +_{\mbij} m_2) +_{\mbijB} \mult_3$.
\end{proof}

\begin{proof}[Proof of \ref{lrule:sum-frame}, \cref{fig:sum}]
  All elements on the left-hand side are of the form $(\mult_1 +_\mbij \mult_2) \cdot \mult'$ where
  $\mult_1\in {\Prop}$, $\mult_2\in {\PropB}$, and $\mult'\in {\PropC}$,
  with a multiset bijection $\mbij \col \mult_1 \mbiject \mult_2$.
  Then,
  \begin{align*}
    (\mult_1 +_\mbij \mult_2) \cdot \mult'
     & = \mset{ (x + x') \cdot y
    \mid (x, x') \mIn \mbij, y \mIn \mult' }                      \\
     & = \mset{ (x \cdot y) + (x' \cdot y)
    \mid (x,x') \mIn \mbij, y \mIn \mult' }                       \\
     & = (\mult_1 \cdot \mult') +_{\mbij'} (\mult_2 \cdot \mult')
    \quad\in \text{RHS},
  \end{align*}
  where $\mbij'$ is the extended multiset bijection of $\mbij$
  that relates $x \cdot y$ and $x' \cdot y$ for each $(x,x') \mIn \mbij$.\footnote{
    The opposite direction is not generally true because not all the multiset bijections between $\mult_1 \cdot \mult'$ and $\mult_2 \cdot \mult'$ are of the form $\mbij'$.
  }
\end{proof}

\begin{proof}[Proof of \ref{lrule:sum-bigbmix}, \cref{fig:sum}]
  All elements on the left-hand side are of the form
  $\biguplus_x \mult_x +_{\mbij_x} \mult'_x$ where
  $\mult_x\in {\Prop_x}$ and $\mult'_x\in {\PropB_x}$,
  with multiset bijections $\mbij_x \col \mult_x \mbiject \mult'_x$.
  Let $\biguplus_x \mbij_x \col (\biguplus_x \mult_x) \mbiject (\biguplus_x \mult'_x)$ be the multiset bijection
  that is the union of all the $\mbij_x$, \ie
  if $\mult_x, \mult'_x, \mbij_x \colon J_x \to \Res$,
  \[
    \Big(\biguplus\nolimits_x \mbij_x\Big)
    \colon\
    \bigsqcup\nolimits_{x \in I} J_x \longrightarrow \Res
    ;\quad
    (x \in I, y \in J_x) \longmapsto (a, b) = r_x(y).
  \]
  Then, we have $\biguplus_x \mult_x +_{\mbij_x} \mult'_x =
    \mset{ a + b \mid x\in I, (a,b) \mIn \mbij_x }
    =
    \mset{ a + b \mid (a,b) \mIn \biguplus_x \mbij_x }
    \in \text{RHS}$.\footnote{
    The opposite direction is not generally true because not all the multiset bijections
    between $\biguplus_x \mult_x$ and $\biguplus_x \mult'_x$ are of the form $\biguplus_x \mbij_x$.
  }
\end{proof}

\begin{proof}[Proof of \ref{lrule:sum-precise}, \cref{fig:incomp-unambig}]
  Let $\Prop = \{ \mult \}$ and $\PropB = \{ \mult' \}$.
  Assume that we have $\mbij, \mbijB \col \mult \mbiject \mult'$
  such that $\mult +_\mbij \mult'$ and $\mult +_\mbijB \mult'$.
  If $(a, b) \mIn \mbij$ and $(a', b) \mIn \mbijB$,
  then $a + b \defined$ and $a' + b \defined$ hold.
  From \labelcref{cond:incom-basic-prop:2} in \cref{lem:incom-basic-prop},
  $\neg (a \hash a')$.
  Since $\mult$ is unambiguous, $a = a'$.
  Therefore, $\mbij = \mbijB$.
\end{proof}

\begin{proof}[Proof of \ref{lrule:bigbmix-sum}, \cref{fig:incomp-unambig}]
  All elements on the left-hand side are of the form
  $(\biguplus_{x\in I} \mult_x) +_{\mbij} (\biguplus_{x \in I} \mult'_x)$
  where
  $\mult_x \in {\Prop_x}$,
  $\mult'_x \in {\PropB_x}$, and
  $\mbij \col (\biguplus_x \mult_x) \mbiject (\biguplus_x \mult'_x)$.
  Let $\mult \defeq \biguplus_x \mult_x$
  and $\mult' \defeq \biguplus_x \mult'_x$.
  This $\mbij$ can be regarded as a bijection of the set of indices $\bigsqcup_{x\in I}J_x$ and $\bigsqcup_{x\in I}J'_x$
  where $\mult_x \colon J_x \to \Res$ and $\mult'_x \colon J'_x \to \Res$.
  For each $(x,j) \in \bigsqcup_{x\in I}J_x$, let $(y, j') \defeq \mbij(x,j)$.
  Since the sum $\mult(x,j) + \mult'(\mbij(x,j)) = \mult_x(j) + \mult'_y(j')$ is defined,
  $x = y$ because of the assumption $x \neq y \Rightarrow \Prop_x \hash \PropB_y$
  and \cref{lem:incomp-closed-under-mult-add}.
  Therefore, for each $i \in I$, the image of $\mbij$ restricted to $J_i$ is included in $J'_i$.
  The bijection $\mbij$ must be of the form
  $\biguplus_{x\in I} \mbij_x \col \biguplus_{x\in I}\mult_x \to \biguplus_{x\in I}\mult'_x$
  with some multiset bijections $\mbij_x \col \mult_x \mbiject \mult'_x$.
  Therefore, $(\biguplus_x \mult_x) +_\mbij (\biguplus_x \mult'_x)$ can now be rewritten as
  $\biguplus_x \mult_x +_{\mbij_x} \mult'_x$, which is in the right-hand side.
\end{proof}

\begin{proof}[Proof of \ref{lrule:sum-unframe}, \cref{fig:incomp-unambig}]
  If ${\PropC}$ is empty, then both sides are empty.
  Let ${\PropC}$ be $\{ \mult \}$.
  All elements on the left-hand side are of the form
  $(\mult_1 \cdot \mult) +_\mbij (\mult_2 \cdot \mult)$
  where $\mult_1 \in {\Prop}$, $\mult_2 \in {\PropB}$,
  and $\mbij \col \mult_1\cdot \mult \mbiject \mult_2\cdot \mult$.
  Because of the unambiguity of $\Prop$ and $\PropC$,
  all the multiplicities of $\mult_1$ and $\mult$ are either $0$ or $1$,
  so the multisets $\mult_1$ and $\mult$ can be regarded as mere sets of resources.
  The multiset bijection $\mbij$ can now be regarded as a bijection of the sets $\mult_1 \times \mult$ and $I \times \mult$,
  where $I$ is the set of indices of $\mult_2\colon I \to \Res$.
  For each $x \in \mult_1$ and $z \in \mult$,
  if $(i, z') \defeq \mbij(x, z)$,
  then $x\cdot z + \mult_2(i) \cdot z' \neq \bot$ follows because
  $(\mult_1 \cdot \mult) +_\mbij (\mult_2 \cdot \mult) \defined$.
  Because of the unambiguity of $\PropC$,
  $z = z'$ follows from \labelcref{cond:incom-basic-prop:1} in \cref{lem:incom-basic-prop}.
  Therefore, for each $z \in \mult$,
  there is a bijection $\mbij_z \defeq \mathrm{fst} \circ \mbij(-, z) \col \mult_1 \to I$
  that satisfies $\mbij(x, z) = (\mbij_z(x), z)$.
  We prove that this bijection $\mbij_z$ does not depend on $z$.
  Since $\mult$ is non-empty because of the assumption $\PropC \col \nonnb$, we choose one $z_0 \in \mult$.
  For each $z \in \mult$ and $i = \mbij_{z_0}(x) = \mbij_z(x')$,
  both $x\cdot z + \mult_2(i) \cdot z$ and $x'\cdot z + \mult_2(i) \cdot z$ are defined.
  Therefore, from
  \labelcref{cond:incom-basic-prop:2} in \cref{lem:incom-basic-prop},
  $\neg(x\hash x')$.
  Because $\Prop$ is unambiguous, $x$ and $x'$ must be equal, which proves
  $\mbij_z^{-1} \circ \mbij_{z_0} = \id_{\mult_1} (\Leftrightarrow \mbij_{z_0} = \mbij_{z})$.
  We can now rewrite $(\mult_1 \cdot \mult) +_\mbij (\mult_2 \cdot \mult)$ as
  \begin{align*}
    (\mult_1 \cdot \mult) +_\mbij (\mult_2 \cdot \mult)
     & = \mset{ x\cdot z + y \cdot z' \ \mid\  ((x,z), (y,z')) \mIn \mbij }                                 \\
     & = \mset{ x\cdot z + \mult_2(i) \cdot z' \ \mid\  x \in \mult_1, z \in \mult, (i, z') = \mbij(x, z) } \\
     & = \mset{ (x + \mult_2(\mbij_{z_0}(x)))\cdot z \ \mid\ x \in \mult_1, z \in \mult }                            \\
     & = (\mult_1 +_{\mbij_{z_0}} \mult_2) \cdot \mult \qquad
    \quad\in \text{RHS}.
    \qedhere
  \end{align*}
\end{proof}

\begin{proof}[Proof of the rest of the rules in \cref{fig:incomp-unambig}]
  Most rules are straightforward from the definition
  and \cref{lem:incomp-closed-under-mult-add}.
  We prove $\Prop, \PropB \col \unambig \Rightarrow \Prop * \PropB \col \unambig$
  as an example.
  Let $\mult\cdot \mult' \in \Prop * \PropB$ where $\mult\in\Prop$ and $\mult'\in\PropB$,
  and $\mset{x\cdot x',\, y\cdot y'} \msubseteq \mult \cdot \mult'$
  where $x,y \mIn \mult$ and $x',y' \mIn \mult'$.
  The indices of $x \cdot x'$ and $y\cdot y'$ are not the same,
  thus $\mset{x,\,y} \msubseteq \mult$ or $\mset{x',\,y'} \msubseteq \mult'$
  from the definition of $\mult \cdot \mult'$.
  Without loss of generality, we assume $\mset{x,\,y} \msubseteq \mult$.
  Then, from \cref{lem:incomp-closed-under-mult-add},
  $x \cdot x' \hash y$ and $x \cdot x' \hash y \cdot y'$.
\end{proof}

The remaining rules to be shown are the ones regarding Hoare triples.
From here on, we denote $c \Rightarrow t$ to mean
$c = (\bar{\Cmd}, \qket, S)$ and $t = \sem{\bar{\Cmd}}(\qket, S)$.

\begin{lemma}[Frame on denotations]\label{lem:denosem-frame}
  Let $c = (\bar{\Cmd}, a)$ be a configuration and $t$ be a tree such that $c\Rightarrow t$.
  If $x \in \Res$ satisfies $(\bar{\Cmd}, a\cdot_\Res x)\Rightarrow t'$,
  then the tree $t'$ is obtained from $t$
  by replacing all the leaves $\Leaf(b)$ with $\Leaf(b\cdot_\Res x)$.
\end{lemma}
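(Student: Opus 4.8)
The plan is to prove the lemma by exhibiting $t'$ explicitly as the image of $t$ under a leaf-relabelling operation. Define $\mathsf{Fr}_x \col \BTree \pto \BTree$ coinductively by $\mathsf{Fr}_x(\Leaf(b)) \defeq \Leaf(b \cdot_\Res x)$ (defined exactly when $b \cdot_\Res x$ is), $\mathsf{Fr}_x(\Nil) \defeq \Nil$, and $\mathsf{Fr}_x(\Branch(\bar\tree)) \defeq \Branch(\overline{\mathsf{Fr}_x(\tree)})$. Two easy facts about $\mathsf{Fr}_x$ will be used: it is monotone for the tree order $\le$ (it only rewrites leaves and keeps every occurrence of $\Nil$ in place), and it commutes with limits of $\omega$-chains. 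Since $\sem{\bar\Cmd}(a) = \lim_n \sem{\bar\Cmd}^n(a)$ and likewise for $a \cdot_\Res x$, it suffices to establish the approximation-level claim: for every $n$, every command sequence $\bar\Cmd$, and all $a, x \in \Res$ such that $a \cdot_\Res x$, $\sem{\bar\Cmd}^n(a)$ and $\sem{\bar\Cmd}^n(a \cdot_\Res x)$ are all defined, one has $\sem{\bar\Cmd}^n(a \cdot_\Res x) = \mathsf{Fr}_x(\sem{\bar\Cmd}^n(a))$. Taking $n \to \infty$ and invoking continuity of $\mathsf{Fr}_x$ then yields $t' = \mathsf{Fr}_x(t)$, which is the statement.

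The approximation-level claim is proved by induction on $n$. The base case ($n = 0$) is immediate since both sides are $\Nil$. In the inductive step, unfold $\semop$ by the defining equations of \cref{fig:denosem} according to the head command of $\bar\Cmd$; each case reduces to the induction hypothesis at level $n$ applied to the continuation, once a handful of routine locality facts are checked. Writing $a = (\Qit \mapsto \qket, \Store)$ and $x = (\Qit' \mapsto \qket', \Store')$, definedness of $a \cdot_\Res x$ gives $\Qit \cap \Qit' = \empset$ and $\dom\Store \cap \dom\Store' = \empset$. The facts needed are: (i) no command alters the qubit domain or the store domain, so throughout the run the state lives on $\Qit$ and the store has domain $\dom\Store$; (ii) since the frame-free run is defined, every pure expression evaluated along it has free variables inside the current store domain $\dom\Store$, so adding the disjoint frame store $\Store'$ does not change its value — this covers the guards of conditionals and loops and the right-hand side of $\var \store \expr$; (iii) store updates commute with the frame, because the updated variable lies in $\dom\Store$ and not in $\dom\Store'$; (iv) $\Unitary_{\bar\qit}$ and $\Measure^{(i)}_{\bar\qit}$ act locally, commuting with tensoring by the frame state $\qket'$, because $\bar\qit \subseteq \Qit$ is disjoint from $\Qit'$; and (v) $\mathsf{Fr}_x$ commutes with $\Branch$, so the measurement case follows by applying the induction hypothesis to each child subtree. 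Fact (i) also ensures that $\mathsf{Fr}_x$ is genuinely defined on $\sem{\bar\Cmd}(a)$ whenever $a \cdot_\Res x$ is, since every leaf resource then has a store disjoint from $\Store'$ and a qubit set disjoint from $\Qit'$.

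The main obstacle is not any individual case — they are all routine unfoldings — but getting the setup right around the least-fixed-point construction. Two points deserve care. First, adding a frame can turn a previously \emph{undefined} run into a defined one, because $\Store'$ may supply a variable that $\Store$ lacks and some expression requires; so the induction cannot assert an equivalence of definedness and must instead be stated, as above, under the hypothesis that \emph{both} runs are defined. Second, one must verify that $\mathsf{Fr}_x$ commutes with the $\omega$-limit through which $\semop$ is obtained from its approximants $\semop^n$: this is exactly where monotonicity of $\mathsf{Fr}_x$ for $\le$ and its continuity are used, and it is what licenses passing from the approximation-level claim to the statement about $t$ and $t'$. With these two points settled, the remainder is a direct case analysis mirroring the equations of \cref{fig:denosem}.
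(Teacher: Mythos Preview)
Your proposal is correct and follows essentially the same approach as the paper, which simply says ``by induction on the definition of the denotational semantics'' together with the key observation that commands preserve the qubit domain and store domain (your fact~(i)). Your version is more explicit about the fixed-point machinery---defining $\mathsf{Fr}_x$, proving the claim at each approximation level $\sem{-}^n$, and invoking monotonicity and $\omega$-continuity to pass to the limit---which the paper leaves implicit in its one-line proof.
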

\begin{proof}
  By induction on the definition of the denotational semantics.
  Note that no commands change the domain of resources, \ie
  if $c'$ appears as a leaf of $\sem{\bar{\Cmd}}(c)$
  where
  $c = (\Qit \mapsto \qket, \Store)$
  and
  $c' = (\QitB \mapsto \qket, \Store')$,
  then $\Qit = \QitB$ and $\dom(\Store) = \dom(\Store')$.
\end{proof}

\begin{proof}[Proof of \ref{lrule:hoare-frame}, \cref{fig:hoare}]
  Follows from \cref{lem:denosem-frame}.
\end{proof}

\begin{proof}[Proof of \ref{lrule:hoare-bigbmix}, \cref{fig:hoare}]
  Straightforward from the definition of weakest precondition.
\end{proof}

\begin{lemma}[Adding denotations]\label{lem:denosem-addition}
  Let $c_0$ and $c_1$ be two configurations such that $c_i = (\bar{\Cmd}, \Qit \mapsto \qket[_i], \Store)$.
  If $c_i \Rightarrow t_i$ holds for $i \in \{0, 1\}$, then $c_0 + c_1 \Rightarrow t_0 + t_1$ holds, where
  $c_0 + c_1 \defeq (\bar{\Cmd}, (\Qit \mapsto \qket[_0] + \qket[_1]), \Store)$
  and $t_0 + t_1$ is the tree of the same structure as $t_0$ and $t_1$ (the two agree in the structure) whose leaves are $\Leaf(a_0 +_\Res a_1)$ letting $\Leaf(a_i)$ be the leaf of $t_i$ at the same position.
\end{lemma}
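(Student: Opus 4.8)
The plan is to prove the statement by leveraging that, along any run of a command sequence, the control flow and the store are determined entirely by the store component, whereas the quantum state is only ever hit by the \emph{linear} operators $\Unitary_{\bar{\qit}}$ and $\Measure^{(i)}_{\bar{\qit}}$. Concretely, I would argue on the finite approximants $\sem{-}^n(-,-)$ used to build the least fixed point (\cref{fig:denosem}), and prove by induction on $n$ that, for every command sequence $\bar{\Cmd}$, every pair $\qket[_0], \qket[_1]$ of state vectors over a common qubit set $\Qit$, and every store $\Store$: whenever $\sem{\bar{\Cmd}}^n(\Qit \mapsto \qket[_0], \Store)$ and $\sem{\bar{\Cmd}}^n(\Qit \mapsto \qket[_1], \Store)$ are both defined, they have the same tree shape, with $\Nil$ in the same positions and equal stores at every leaf (so that their tree-sum is well-defined), the approximant $\sem{\bar{\Cmd}}^n(\Qit \mapsto \qket[_0] +_\Qstate \qket[_1], \Store)$ is also defined, and
\[
  \sem{\pk1 \bar{\Cmd} \pk1}^n(\Qit \mapsto \qket[_0] +_\Qstate \qket[_1], \Store)
  \pk3 = \pk3
  \sem{\pk1 \bar{\Cmd} \pk1}^n(\Qit \mapsto \qket[_0], \Store)
  \pk2 + \pk2
  \sem{\pk1 \bar{\Cmd} \pk1}^n(\Qit \mapsto \qket[_1], \Store),
\]
the ``$+$'' on the right being the tree operation of the statement.

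The base case $n = 0$ is immediate since all approximants are $\Nil$ and $\Nil + \Nil = \Nil$. The inductive step is a case analysis on $\bar{\Cmd}$ following the defining equations: for $\empseq$ it reduces to $\Leaf(\qket[_0] +_\Qstate \qket[_1], \Store) = \Leaf(\qket[_0], \Store) + \Leaf(\qket[_1], \Store)$; for $\cskip$, sequencing, assignment $\var \store \expr$, conditionals, and loops the equation merely rewrites to a self-reference on another command sequence with the store updated deterministically---and, crucially, identically for both $\qket[_0]$ and $\qket[_1]$, since guards are evaluated under the shared $\Store$---so the induction hypothesis at $n$ closes the case (and definedness of the summands' approximants transfers to the sum's, because the extra definedness conditions, e.g.\ $\var \in \dom \Store$ or $\sem{\expr}_\Store$ defined, concern only $\Store$); for $\Unitary[\bar{\qit}]$ we use $\Unitary_{\bar{\qit}}(\qket[_0] +_\Qstate \qket[_1]) = \Unitary_{\bar{\qit}} \qket[_0] +_\Qstate \Unitary_{\bar{\qit}} \qket[_1]$ and then the hypothesis; and for $\var \store \Measure[\bar{\qit}]$ we use that each $\Measure^{(i)}_{\bar{\qit}}$ is linear together with the fact that a $\Branch$ of pointwise-summed children is the tree-sum of the two $\Branch$es, closing with the hypothesis on each child. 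Here I also rely on the observation used for \cref{lem:denosem-frame} that no command changes the qubit domain, so $\qket[_0] +_\Qstate \qket[_1]$ remains meaningful at every node.

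It remains to pass to the limit. Given $c_0 \Rightarrow t_0$ and $c_1 \Rightarrow t_1$ (so both denotations are defined), \cref{thm:opsem-denosem} tells us neither configuration ever gets stuck; since stuck-ness is a property of the store alone, which evolves identically in $c_0$, $c_1$ and $c_0 + c_1$, all three have all their approximants defined, and the displayed identity holds at every stage $n$. As $n$ grows, the $\omega$-chains $(\sem{\bar{\Cmd}}^n(\Qit \mapsto \qket[_i], \Store))_n$ fill in their $\Nil$s in lockstep (same shape at every stage) and converge to $t_0$, $t_1$; one then checks that the tree-sum commutes with these limits, i.e.\ $t_0 + t_1 = \lim_n \bigl(\sem{\bar{\Cmd}}^n(\Qit \mapsto \qket[_0], \Store) + \sem{\bar{\Cmd}}^n(\Qit \mapsto \qket[_1], \Store)\bigr) = \lim_n \sem{\bar{\Cmd}}^n(\Qit \mapsto \qket[_0] +_\Qstate \qket[_1], \Store) = \sem{\bar{\Cmd}}(\Qit \mapsto \qket[_0] +_\Qstate \qket[_1], \Store)$, which is exactly $c_0 + c_1 \Rightarrow t_0 + t_1$ (and in particular this last limit is a genuine tree, so $c_0 + c_1$ does not get stuck). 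I expect the main obstacle to be precisely the bookkeeping around the tree-sum operation: one must verify it is always well-defined, which is the lockstep-control-flow fact folded into the induction, and that it is Scott-continuous enough to commute with the approximation limits; once this scaffolding is in place, linearity of $\Unitary$ and of the $\Measure^{(i)}$ supplies all the substance.
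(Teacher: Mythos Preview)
Your proposal is correct and follows essentially the same approach as the paper, which simply says ``by induction on the definition of the denotational semantics''; you have spelled out the approximant-wise induction and the limit passage that this terse phrase stands for. The paper also notes in passing an alternative route via \cref{thm:opsem-denosem} and linearity of the operational semantics, but your chosen route is the one the paper presents as primary.
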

\begin{proof}
  By induction on the definition of the denotational semantics.
  This can also be proven using \cref{thm:opsem-denosem} and the linearity of the operational semantics.
\end{proof}

\begin{proof}[Proof of \ref{lrule:hoare-sum}, \cref{fig:sum}]
  Follows from \cref{lem:denosem-addition}
  and the fact that there is a canonical multiset bijection between
  $\Leaves(t)$ and $\Leaves(t')$ whenever $t$ and $t'$ have the same shape.
\end{proof}

\begin{proof}[Proof of \ref{lrule:hoare-seq}, \cref{fig:hoare}]
  Let $\mset{a_i \in \Res \mid i \in I} \in {\Prop}$ and $(\Cmd, a_i) \Rightarrow t_i$.
  Then, we have $\biguplus_{i \in I} \Leaves(t_i) \in {\PropB}$.
  We also know that for each $b_{i j} \mIn \Leaves(t_i)$,
  $\ex{t_{i j}} (\Cmd', b_{i j}) \Rightarrow t_{i j}$,
  satisfying $\biguplus_{i} \biguplus_{j} \Leaves(t_{i j}) \in {\PropC}$.
  Now, from the definition of the denotational semantics,
  we have $(\Cmd;\Cmd', a_i) \Rightarrow t_i[t_{i j}/\Leaf(b_{i j})]$.
  The multiset of leaves of the tree $t_i[t_{i j}/\Leaf(b_{i j})]$ is $\biguplus_{j} \Leaves(t_{i j})$.
  Therefore, $\biguplus_i \Leaves(t_i[t_{i j}/\Leaf(b_{i j})]) = \biguplus_i\biguplus_j \Leaves(t_{i j}) \in {\PropC}$.
\end{proof}

\begin{proof}[Proof of \ref{lrule:hoare-while}, \cref{fig:hoare}]
  From the definition of the denotation of the while loop,
  the following holds.
  \[
    \sem{\whilex{\expr}{\Cmd}}
    \ =\ \sem{\ifonly{\expr}{\big(\Cmd;\ \whilex{\expr}{\Cmd}\big)}}
  \]
  Since the denotation of the while loop is defined by the least fixed point,
  this can be rewritten as
  \begin{align*}
    \sem{\whilex{\expr}{\Cmd}}
    &\ =\ \sup\nolimits_{n \in \NN}
    \sem{
      \ifonly{\expr}{\big( \Cmd;\
      \dots
      \ifonly{\expr}{\big( \Cmd;\
        \ifonly{\expr}{\eloop}
      \big)}
      \dots
      \big)}
    }
  \\
    &\ =\ \sup\nolimits_{n \in \NN}
    \sem{\big(\ifonly{\expr}{\Cmd} \big)^n,\ \ifonly{\expr}{\eloop}}
  \end{align*}
  where $\eloop$ is the infinite loop whose denotation is $\lambda \_. \Nil$,
  and $\iter_n \defeq \big(\ifonly{\expr}{\Cmd} \big)^n$ denotes that
  the command $\big(\ifonly{\expr}{\Cmd}\big)$ is executed $n$ times in sequence.
  Here, we used the following equation and the compositionality of
  the denotational semantics.
  \begin{align*}
    & \sem{\ifonly{\expr}{\big(\Cmd;\ \ifonly{\expr}{\Cmd'}\big)}} (\qket, \Store)
  \\ &\ =\
    \begin{cases}
      \sem{\Cmd;\ \ifonly{\expr}{\Cmd'}}(\qket, \Store) & \text{if } \sem{e}_S = 1 \\
      \sem{\cskip}(\qket, \Store) & \text{otherwise}
    \end{cases}
  \\ & \ =\
    \begin{cases}
      \sem{\Cmd,\ \ifonly{\expr}{\Cmd'}}(\qket, \Store) & \text{if } \sem{e}_S = 1 \\
      \sem{\cskip,\ \ifonly{\expr}{\Cmd'}}(\qket, \Store) & \text{otherwise}
    \end{cases}
  \\ & \ =\
    \sem{\ifonly{\expr}{\Cmd},\ \ifonly{\expr}{\Cmd'}} (\qket, \Store).
  \end{align*}

  Let $\mset{a_i \in \Res \mid i \in I} \in {\Prop_0}$,
  and $(\whilex{\expr}{\Cmd}, a_i) \Rightarrow t_{i, \infty}$.
  Then, there exists a sequence of trees $t_{i, n}$ such that
  $(\iter_n, a_i) \Rightarrow t_{i, n}$.
  We define multisets from such trees.
  \begin{align*}
     & \mult^P_{i, 0} \defeq \mset{ a_i },
     \qquad
     \mult^P_{i, n + 1} \defeq
    \mset{ b \mid a \mIn \mult^R_{i, n}, b \mIn \Leaves(\sem{C}(a)) }
    \\
     & \mult^Q_{i, n} \defeq
    \mset{ (\qket, \Store) \mid (\qket, \Store) \mIn \mult^P_{i, n}, \sem{\expr}_\Store = 0 }
    \\
     & \mult^R_{i, n} \defeq
    \mset{ (\qket, \Store) \mid (\qket, \Store) \mIn \Leaves(t_{i, n}), \sem{\expr}_\Store = 1 }
  \end{align*}
  Note that this is not a recursive definition:
  the definition of $\mult^P$ and $\mult^Q$ depends on $\mult^R$,
  but $\mult^R$ depends only on $t_{i,n}$.
  We first prove that
  $\biguplus_i \mult^R_{i, n} \in {\PropC_n}$
  by induction on $n$.
  For the base case $n = 0$,
  since $\iter_0 = \cskip$, the tree $t_{i,0}$ is $\Leaf(a_i)$.
  So,
  $\biguplus_i \mult^R_{i, 0}
  = \mset{a_i \mid i\in I, a_i = (\qket, \Store), \sem{\expr}_\Store = 1}
  \in \PropC_0$
  since $\mset{a_i \mid i \in I} \in \Prop_0$.
  For the step case,
  \begin{align*}
    \biguplus\nolimits_i \mult^R_{i, n + 1}
  & \ =\
    \biguplus\nolimits_i \mset{ (\qket, \Store) \mid (\qket, \Store) \mIn \Leaves(t_{i, n + 1}), \sem{\expr}_\Store = 1 }
  \\ & \ =\
    \biguplus\nolimits_i \{\nk4\vert (\qket['], \Store') \mid (\qket, \Store) \mIn \Leaves(t_{i, n}),
    \sem{\expr}_\Store  = 1,
  \\[-.3em] & \hspace{10em}
      (\qket['], \Store') \mIn \Leaves(\sem{C}(\qket, \Store)), \sem{\expr}_{\Store'} = 1
    \vert\nk4\}
  \\ & \ =\
    \biguplus\nolimits_i \mset{ (\qket['], \Store') \mid
    a \mIn \mult^R_{i, n},
    (\qket['], \Store') \mIn \Leaves(\sem{C}(a)), \sem{\expr}_{\Store'} = 1
    }
  \\ & \ =\
    \biguplus\nolimits_i \biguplus\nolimits_{a \mIn \mult^R_{i, n}} \mset{ (\qket['], \Store') \mid
      (\qket['], \Store') \mIn \Leaves(\sem{C}(a)), \sem{\expr}_{\Store'} = 1
    }
  \\ & \ =\
    \biguplus\nolimits_{a \mIn \biguplus\nolimits_i \mult^R_{i, n}} \mset{ (\qket['], \Store') \mid
      (\qket['], \Store') \mIn \Leaves(\sem{C}(a)), \sem{\expr}_{\Store'} = 1
    }
  \\ & \ \in\ \PropC_{n + 1}.
  \end{align*}
  The statements
   $\biguplus_i \mult^P_{i, n} \in \Prop_n$ and
   $\biguplus_i \mult^Q_{i, n} \in \PropB_n$ follow from this.

  By an easy induction, one can prove that
  the leaves of $t_{i,n}$ can be represented as
  $\mult^P_{i,n} \uplus \biguplus_{k < n} \mult^Q_{i,k}
  =
  \mult^R_{i,n} \uplus \biguplus_{k \leq n} \mult^Q_{i,k}$.
  Therefore, the leaves of $\sem{(\iter_n; \ifonly{\expr}{\eloop})}(a_i)$
  are $\biguplus_{k \leq n} \mult^Q_{i,k}$.
  Taking the supremum of these trees, we conclude that
  \[
    \biguplus_{i\in I} \Leaves(t_{i,\infty})
  \ =\
    \biguplus_{i\in I} \sup_{n\in \NN} \sem{(\iter_n; \ifonly{\expr}{\eloop})}(a_i)
  \ =\
    \biguplus_{i\in I, n\in \NN} \mult^Q_{i,n}
  \ \in\ \bigoplus_{n\in\NN} \PropB_n
    .
    \qedhere
  \]
\end{proof}

\subsection{Inner Product and Orthogonality}
\label{app:sect:logic:inner-prod}

Finally, to reason about the probability of the sum, we introduce the inner product and orthogonality, based on elementary linear algebra.

\begin{definition}[Inner product]\label{app:def:inner-prod}
  We define the \emph{inner product} of resources $\iprod[]{\pk1}{\pk1} \col A \times A \pto \CC$
  for each PCM $A \in \{ \Qstate, \dStore, \Res \}$ as follows.
  \begin{gather*}
    \iprod{\Qit \mapsto \qket}{\Qit \mapsto \qketB}_{\Qstate}
    \ \defeq\ \iprod{\qket}{\qketB}
    \ =\ \braket{\psi}{\phi}
  \hspace{2cm}
    \iprod{\Store}{\Store}_\dStore \defeq 1
  \\[.2em]
    \iprod{(\Qit \mapsto \qket, \Store)}{(\QitB \mapsto \qketB, \Store')}_\Res
    \ \defeq\
    \iprod{\Qit \mapsto \qket}{\QitB \mapsto \qketB}_{\Qstate}
    \cdot \iprod{\Store}{\Store'}_\dStore
  \end{gather*}
  The inner product is just the usual one on vectors
  whenever the classical information (including the set of owned qubits) agrees, and is undefined otherwise.

  Just like for the sum $+$ (\cref{fig:sum}, \cref{sect:logic:sum}),
  we define the inner product on multisets via multiset bijections $\mbij$
  and lift it to propositions by collecting the resulting values.
  \begin{gather*}
    \iprod[]{\mult}{\mult'}_\mbij \ \defeq\
    \sum_{(a,b)\mIn \mbij} \iprod[]{a}{b}
  \hspace{1.2cm}
    \iprod[]{\Prop}{\PropB} \ \defeq\
    \set{ \iprod[]{\mult}{\mult'}_\mbij }
        { \mult \in \Prop, \, \mult' \in \PropB, \, \mbij \col \mult \mbiject \mult' }
  \end{gather*}
  For utility, we also introduce the predicate $\iprod[]{\Prop}{\PropB} \col \alpha$ to assert that the inner product evaluates to $\alpha$ whenever defined:
  \[
    \iprod[]{\Prop}{\PropB} \col \alpha
    \quad\defeq\quad
    \iprod[]{\Prop}{\PropB} \subseteq \curly{\alpha}.
\tag*{\qed}
  \]
\end{definition}

\begin{remark}\label{app:rem:prob-by-inner-prod}
  The probability predicate can be characterized through the inner product as
  \[
    \Prop \col \Prob \prob
    \quad\iff\quad
    \all{\mult \in \Prop} \iprod[]{\mult}{\mult}_\id = \prob,
  \]
  where $\id$ is the canonical identity multiset bijection.
  Also, under the constraint $\Prop \col \precise,\unambig$, $\Prop \col \Prob \prob$ is equivalent to $\iprod[]{\Prop}{\Prop} \col \prob$.
\end{remark}

\begin{definition}[Orthogonality]
  The orthogonality $\Prop \orth \PropB$ of SL assertions is defined as follows:
  \[
    \Prop \orth \PropB \ \ \defeq\ \
      \iprod[]{\Prop}{\PropB} \col 0.
    \tag*{\qed}
  \]
\end{definition}

As an auxiliary notion for reasoning about the inner product and orthogonality, we also introduce the \emph{coherence}.

\begin{definition}[Coherence]
  We say $a, b \in \Res$ are \emph{coherent} and write $a \coh b$ if their inner product is defined, \ie $\iprod[]{a}{b} \defined$.
  This is equivalent to $a + b \defined$.
  We say multisets $\mult, \mult' \in \Mult(\Res)$ are \emph{coherent} if their inner product is defined for some multiset bijection $\mbij$, \ie $\ex{\mbij \col \mult \mbiject \mult'} \iprod[]{\mult}{\mult'}_{\mbij} \defined$.

  The coherence relation $\Prop \coh \PropB$ for SL assertions is defined as follows:
  \[
\Prop \coh \PropB
    \ \ \defeq\ \
    \all{\mult \in \Prop}\
    \all{\mult' \in \PropB}\
    \mult \coh \mult'
    \tag*{\qed}
  \]
\end{definition}

We have the following rules for the coherence $\coh$, whose soundness can be proved easily:
\begin{proofrules*}
\infer*[]{}{
    \bar{\qit} \mapsto \qket
    \,\coh\,
    \bar{\qit} \mapsto \qketB
  }

  \infer*[]{
    \Prop \col \precise
  }{
    \Prop \coh \Prop
  }

  \infer*[]{
    \Prop \coh \PropB
  }{
    \PropB \coh \Prop
  }

  \infer*[]{
    \Prop \coh \PropB
    \\
    \Prop' \vdash \Prop
  }{
    \Prop' \coh \PropB
  }

  \infer*[]{
    \all{x \in I}\pk2 \paren[\big]{\pk2 \Prop_x \coh \PropB \pk2}
  }{
    \paren{\ex{x \in I} \Prop_x} \,\coh\, \PropB
  }

  \infer*[]{
    \Prop \coh \PropB
    \\
    \Prop' \coh \PropB'
  }{
    \Prop * \Prop' \,\coh\, \PropB * \PropB'
  }

  \infer*[]{
    \Prop \coh \PropB
    \\
    \Prop' \coh \PropB'
  }{
    \Prop + \Prop' \,\coh\, \PropB + \PropB'
  }

  \infer*[]{
    \all{x \in I} \Prop_x \coh \PropB_x
  }{
    \bigoplusn_{x \in I} \Prop_x \,\coh\,
    \bigoplusn_{x \in I} \PropB_x
  }
\end{proofrules*}

Now we list the rules for the inner product and orthogonality.
\begin{proofrules*}
  \infer*[]{}{
    \iprod{\bar{\qit} \mapsto \qket}{\bar{\qit} \mapsto \qketB}
    \colon \langle \psi | \phi \rangle
  }

\infer*[]{}{
    \iprod[]{\Prop}{\PropB}
    \,=\, \overline{\iprod[]{\PropB}{\Prop}}
  }

  \infer*[lab=inner-prod-unique]{
    \Prop, \PropB \colon \precise
    \\
    \Prop \colon \unambig
  }{
    \ex{\alpha \in \CC}\
    \iprod[]{\Prop}{\PropB} \colon \alpha
  }
  \label{arule:inner-prod-unique}

  \infer*[]{
    \alpha \in \CC
  }{
    \iprod[]{\Prop}{\scl{\alpha}{\PropB}}
    \,=\, \alpha\pk2 \iprod[]{\Prop}{\PropB}
  }

  \infer*[lab=inner-prod-sum]{}{
    \iprod[]{\Prop}{\PropB + \PropC}
    \,\subseteq\,
    \iprod[]{\Prop}{\PropB} + \iprod[]{\Prop}{\PropC}
  }
  \label{arule:inner-prod-sum}

  \infer*[lab=inner-prod-sum-exact]{
    \Prop \colon \precise
  }{
    \iprod[]{\Prop}{\PropB + \PropC}
    \,=\, \iprod[]{\Prop}{\PropB} + \iprod[]{\Prop}{\PropC}
  }
  \label{arule:inner-prod-sum-exact}

  \infer*[lab=inner-prod-bigbmix]{}{
    \sum\nolimits_{x \in I} \iprod[]{\Prop_x}{\PropB_x}
    \,\subseteq\,
    \iprod{\bigoplusn_{x \in I} \Prop_x}{\bigoplusn_{x \in I} \PropB_x}
  }
  \label{arule:inner-prod-bigbmix}

  \infer*[lab=inner-prod-bigbmix-exact]{
    \all{x,y \in I \st x \neq y} \Prop_x \hash \PropB_y
  }{
    \sum\nolimits_{x \in I} \iprod[]{\Prop_x}{\PropB_x}
    \,=\, \iprod{\bigoplusn_{x \in I} \Prop_x}{\bigoplusn_{x \in I} \PropB_x}
  }
  \label{arule:inner-prod-bigbmix-exact}

  \infer*[lab=inner-prod-frame]{
    \all{\mult \in \Prop, \mult' \in \Prop'}
    \mult \cdot \mult' \defined
  }{
    \iprod[]{\Prop}{\PropB} \cdot \iprod[]{\Prop'}{\PropB'}
    \,\subseteq\,
    \iprod[]{\Prop * \Prop'}{\PropB * \PropB'}
  }
  \label{arule:inner-prod-frame}

  \infer*[lab=inner-prod-unframe]{
    \Prop \colon \unambig
    \\
    \Prop \coh \PropB
    \\
    \iprod[]{\PropC}{\PropC'} \col \alpha
  }{
    \iprod[]{\Prop * \PropC}{\PropB * \PropC'}
    \,\subseteq\,
    \alpha \iprod[]{\Prop}{\PropB}
  }
  \label{arule:inner-prod-unframe}

  \infer*[lab=orth-prob]{
    \Prop, \PropB \col \Prob \prob
    \\
    \Prop \orth \PropB
    \\
    \alpha, \beta \in \CC
    \\
    \abs{\alpha}^2 + \abs{\beta}^2 = 1
  }{
    \scl{\alpha}{\Prop} + \scl{\beta}{\PropB} \col \Prob \prob
  }
  \label{arule:orth-prob}

  \infer*[lab=orth-sum]{
    \Prop \orth \PropC
    \\
    \PropB \orth \PropC
  }{
    \Prop + \PropB \,\orth\, \PropC
  }
  \label{arule:orth-sum}

  \infer*[lab=orth-bigbmix]{
    \all{x \in I} \Prop_x \orth \PropB_x
    \\
    \hspace*{-.5em}
    \all{x, y \in I \st x \ne y} \Prop_x \hash \PropB_y
  }{
    \bigoplusn_{x \in I} \Prop_x \,\orth\, \bigoplusn_{x \in I} \PropB_x
  }
  \label{arule:orth-bigbmix}

  \infer*[lab=orth-frame]{
    \Prop, \PropC \colon \unambig
    \\
\Prop \coh \PropB
    \\
\Prop \orth \PropB
  }{
    \Prop \,*\, \PropC \,\orth\, \PropB \,*\, \PropC'
  }
  \label{arule:orth-frame}
\end{proofrules*}
We have the coherence side condition $\Prop \coh \PropB$ for the rules \ref{arule:inner-prod-unframe} and \ref{arule:orth-frame}.
Some of the above rules are used in the case study of the Shor code in \cref{app:sect:cases:shor-code}.

Before proving the soundness of these rules, we discuss useful properties.

\begin{lemma}[Basic properties of the inner product on $\Res$]
  The following hold for any $a, b, c, d \in \Res$:
  \begin{gather*}
    \iprod[]{a}{b} \,=\, \overline{\iprod[]{b}{a}}
  \hspace{4em}
    \iprod[]{1}{\alpha} \,=\, \alpha
  \\[.3em]
    \iprod[]{a + b}{c}
    \,=\, \iprod[]{a}{c} + \iprod[]{b}{c}
  \hspace{4em}
    \iprod[]{c}{a + b}
    \,=\, \iprod[]{c}{a} + \iprod[]{c}{b}
  \\[.3em]
    a \coh b\ \implies\
    \iprod[]{a \cdot c}{b \cdot d}
    \,=\, \iprod[]{a}{b} \cdot \iprod[]{c}{d}.
  \end{gather*}
\end{lemma}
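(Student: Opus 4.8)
The final statement to prove is the "Basic properties of the inner product on $\Res$" lemma, asserting conjugate symmetry, the normalization $\iprod{1}{\alpha} = \alpha$, bilinearity (additivity in each argument), and a multiplicativity property under coherence.

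Let me think about how to prove each part.

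\textbf{Setup.} A resource $a \in \Res = \Qstate \times \dStore$ is a pair $(\Qit \mapsto \qket, \Store)$. The inner product is defined as
$\iprod{(\Qit \mapsto \qket, \Store)}{(\QitB \mapsto \qketB, \Store')}_\Res = \iprod{\Qit \mapsto \qket}{\QitB \mapsto \qketB}_{\Qstate} \cdot \iprod{\Store}{\Store'}_\dStore$
where $\iprod{\Qit \mapsto \qket}{\Qit \mapsto \qketB}_\Qstate = \braket{\psi}{\phi}$ (the usual Hilbert space inner product) when the qubit sets agree, undefined otherwise; and $\iprod{\Store}{\Store}_\dStore = 1$ when the stores are equal, undefined otherwise.

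\textbf{Conjugate symmetry $\iprod{a}{b} = \overline{\iprod{b}{a}}$.} Both sides are defined under the same conditions (the qubit sets agree and the stores are equal — these are symmetric conditions), and when defined, $\iprod{a}{b} = \braket{\psi}{\phi} \cdot 1$, while $\overline{\iprod{b}{a}} = \overline{\braket{\phi}{\psi} \cdot 1} = \overline{\braket{\phi}{\psi}} = \braket{\psi}{\phi}$ by conjugate symmetry of the Hilbert space inner product. Note we need the Kleene-equality convention: both sides defined iff either is.

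\textbf{Normalization $\iprod{1}{\alpha} = \alpha$.} Here $1 = \unit_\Res = (\empset \mapsto 1, \empset)$ and $\alpha = \lift{\alpha} = (() \mapsto \alpha, \empset)$. Both are zero-qubit states with empty store. So $\iprod{1}{\alpha}_\Res = \iprod{\empset \mapsto 1}{\empset \mapsto \alpha}_\Qstate \cdot \iprod{\empset}{\empset}_\dStore = \overline{1} \cdot \alpha \cdot 1 = \alpha$ — using the convention for the one-dimensional Hilbert space $\CC$ that $\braket{1}{\alpha} = \alpha$ (the complex conjugate of the first entry times the second).

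\textbf{Additivity, e.g. $\iprod{a + b}{c} = \iprod{a}{c} + \iprod{b}{c}$.} Recall from \cref{fig:resource-ring-res} that $(\Qit \mapsto \qket) +_\Qstate (\QitB \mapsto \qketB) = \Qit \mapsto (\qket + \qketB)$ is defined only when $\Qit = \QitB$, and $(\qket,\Store) +_\Res (\qketB,\Store') = (\qket +_\Qstate \qketB, \Store)$ only when $\Store = \Store'$. Again I check both definedness conditions match (using Kleene equality throughout): $a + b$ is defined and $\iprod{a+b}{c}$ is defined iff $a, b$ have the same qubit set and store, and $c$ has that same qubit set and store. Under those conditions all three terms $\iprod{a+b}{c}$, $\iprod{a}{c}$, $\iprod{b}{c}$ are defined and the identity reduces to additivity (in the first slot) of the Hilbert space inner product, times the common store-factor $1$. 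The other additivity law is the conjugate/mirror, which I would get either by the same direct argument or by combining the first with conjugate symmetry.

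\textbf{Multiplicativity $a \coh b \implies \iprod{a\cdot c}{b\cdot d} = \iprod{a}{b}\cdot\iprod{c}{d}$.} This is the most delicate one, because it involves the tensor-product structure of $\cdot_\Qstate$ and the disjointness side conditions. Write $a = (\Qit_a \mapsto \qket[_a], \Store_a)$ etc. By definition $a \cdot_\Res c$ is defined iff $\Qit_a \cap \Qit_c = \empset$ and $\dom\Store_a \cap \dom\Store_c = \empset$, and similarly for $b \cdot d$, and then $a\cdot c = (\Qit_a \cup \Qit_c \mapsto \qket[_a]\otimes\qket[_c], \Store_a \cup \Store_c)$. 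The inner product $\iprod{a\cdot c}{b\cdot d}$ is defined iff additionally $\Qit_a\cup\Qit_c = \Qit_b\cup\Qit_d$ and $\Store_a\cup\Store_c = \Store_b\cup\Store_d$. The coherence hypothesis $a \coh b$ (equivalently $a + b$ defined) gives $\Qit_a = \Qit_b$ and $\Store_a = \Store_b$. The hard part is showing that, combined with the disjointness, this forces $\Qit_c = \Qit_d$ and $\Store_c = \Store_d$, so that the right-hand side is also defined; then the identity reduces to the well-known factorization $\braket{\psi_a \otimes \psi_c}{\psi_b \otimes \psi_d} = \braket{\psi_a}{\psi_b}\braket{\psi_c}{\psi_d}$ of inner products over a tensor product, together with multiplicativity of the (trivial, all-$1$) store factors. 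I expect this definedness bookkeeping — carefully tracking which Kleene-equalities hold and confirming that the coherence hypothesis is exactly what is needed to make the partial tensor factorization go through — to be the main obstacle; the Hilbert-space content is entirely standard. I would also note in passing (since the excerpt remarks $a \coh b \iff a + b \defined$) that the hypothesis $a\coh b$ is genuinely needed: without it, $\iprod{a}{b}$ may be undefined while one could imagine $\iprod{a\cdot c}{b\cdot d}$ being defined, so the bare identity would fail the Kleene-equality convention.
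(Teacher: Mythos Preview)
Your proposal is correct and takes essentially the same approach as the paper: unfold the definition of $\iprod{-}{-}_\Res$ and reduce each identity to the corresponding standard fact about Hilbert-space inner products and tensor products. The paper's own proof is the single line ``Immediately derived from the usual vector calculus,'' so your treatment is considerably more explicit than what the authors provide; your care about the definedness bookkeeping (especially the observation that $a\coh b$ is what aligns the qubit sets so the tensor factorization goes through) is a genuine elaboration beyond the paper's terse justification.
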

\begin{proof}
  Immediately derived from the usual vector calculus.
\end{proof}
By combining these rules, we can, for example, prove the anti-linearity
of the first argument as
$\iprod[]{\scl{\alpha}{a} + \scl{\beta}{b}}{c}
= \iprod[]{\scl{\alpha}{a}}{c} + \iprod[]{\scl{\beta}{b}}{c}
= \iprod[]{\alpha \cdot a}{1 \cdot c}
  + \iprod[]{\beta \cdot b}{1 \cdot c}
= \iprod[]{\alpha}{1} \cdot \iprod[]{a}{c}
  + \iprod[]{\beta}{1} \cdot \iprod[]{b}{c}
= \overline{\iprod[]{1}{\alpha}} \cdot \iprod[]{a}{c}
  + \overline{\iprod[]{1}{\beta}} \cdot \iprod[]{b}{c}
= \bar{\alpha} \iprod[]{a}{c}
  + \bar{\beta} \iprod[]{b}{c}
$.

\begin{lemma}[More properties on $\Res$ and $\Mult(\Res)$]
  \label{app:lem:res-more}
  The following hold for any $a, b, c \in \Res$ and
  $\mult_1, \mult_2, \mult_3 \in \Mult(\Res)$.
  \begin{itemize}
\item If $a \coh b$ and $a \cdot c \defined$,
          then $(a + b) \cdot c \defined$.
    \item If $a \coh b$ and $a \hash c$,
          then $b \hash c$.
    \item If $\mult_1 +_\mbij \mult_2 \defined$
          and $\mult_1 \cdot \mult_3 \defined$,
          then $\mult_2 \cdot \mult_3 \defined$.
\end{itemize}
\end{lemma}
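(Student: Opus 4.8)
The plan is to prove all three claims by unfolding the concrete definitions of $\Res$, $\cdot_\Res$, $+_\Res$, $\coh$, and $\hash$, reducing each to elementary bookkeeping about the \emph{classical skeleton} of a resource, that is, the pair consisting of its owned-qubit set and the domain of its store. First I would record three observations, each immediate from \cref{fig:res-pcm}, \cref{fig:resource-ring-res}, and \appendixref{app:sect:logic:inner-prod}: (i) for $a = (\Qit \mapsto \qket, \Store)$ and any $c \in \Res$, whether $a \cdot_\Res c$ is defined depends only on the pair $(\Qit, \dom \Store)$ and the analogous pair of $c$ (it holds exactly when these are componentwise disjoint); (ii) $a \coh b$ holds iff $a +_\Res b$ is defined, which forces $a$ and $b$ to have the \emph{same} owned-qubit set \emph{and} the \emph{same} store, so $a$, $b$, and $a +_\Res b$ all share a single skeleton and a single store; (iii) whether $a \hash c$ holds depends only on the stores of $a$ and $c$.

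With these in hand, the first claim is immediate: by (ii) the resource $a +_\Res b$ has the same owned qubits and store domain as $a$, so by (i) the definedness of $a \cdot_\Res c$ transfers to $(a +_\Res b) \cdot_\Res c$. The second claim is equally direct: by (ii), $a$ and $b$ have the same store, hence by (iii) any variable witnessing $a \hash c$ also witnesses $b \hash c$.

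For the third claim I would lift the skeleton argument along the multiset bijection. Fix $\mbij \col \mult_1 \mbiject \mult_2$ with $\mult_1 +_\mbij \mult_2$ defined. Then every matched pair $(\res, \resB) \mIn \mbij$ has $\res +_\Res \resB$ defined, hence $\res \coh \resB$, hence by (ii) $\res$ and $\resB$ carry the same skeleton. Since $\mbij$ is a multiset bijection, every $\resB \mIn \mult_2$ is matched with some $\res \mIn \mult_1$; and since $\mult_1 \cdot_{\Mult(\Res)} \mult_3$ is defined, $\res \cdot_\Res w$ is defined for every $w \mIn \mult_3$, so by (i) $\resB \cdot_\Res w$ is defined as well. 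Ranging over all $\resB \mIn \mult_2$ and $w \mIn \mult_3$ yields definedness of $\mult_2 \cdot_{\Mult(\Res)} \mult_3$.

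I do not expect any substantive mathematical obstacle: the entire argument is a careful tracking of definedness side-conditions. The single point deserving care is observation (ii) --- coherence forces \emph{equality} of stores, not mere compatibility --- and it is precisely this, combined with the fact that product-definedness is ``skeleton-only'', that drives all three items. In the multiset case the only extra bookkeeping is the surjectivity of the matching induced by $\mbij$ onto $\mult_2$, which is built into the notion of multiset bijection.
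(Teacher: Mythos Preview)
Your proposal is correct and is precisely the kind of definitional unfolding the paper has in mind: the paper's own proof reads, in its entirety, ``Straightforward.'' Your observations (i)--(iii) are all accurate, and the three items follow from them exactly as you describe.
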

\begin{proof}
  Straightforward.
\end{proof}

Now we prove the soundness of the proof rules for the inner product and orthogonality.

\begin{proof}[Proof of \ref{arule:inner-prod-unique}]
  This can be proven similarly to
  \ref{lrule:sum-precise}.
\end{proof}

\begin{proof}[Proof of \ref{arule:inner-prod-sum}]
  An element in the left-hand side can be written as
  $\iprod[]{\mult}{\mult_1 +_\mbij \mult_2}_{\mbij'}$
  where $\mult \in \Prop$, $\mult_1 \in \PropB$,
  and $\mult_2 \in \PropC$
  with multiset bijections
  $\mbij \colon \mult_1 \mbiject \mult_2$
  and $\mbij' \colon \mult \mbiject (\mult_1 +_\mbij \mult_2)$.
  From these mutiset bijections,
  we obtain a canonical multiset bijection
  $\mbijB \colon \mult \mbiject \mult_1$ and
  $\mbijB' \colon \mult \mbiject \mult_2$.
  Then
  $\iprod[]{\mult}{\mult_1 +_\mbij \mult_2}_{\mbij'}
  =
  \iprod[]{\mult}{\mult_1}_{\mbijB}
  + \iprod[]{\mult}{\mult_2}_{\mbijB'} \in \text{RHS}$.
\end{proof}

\begin{proof}[Proof of \ref{arule:inner-prod-sum-exact}]
  If $\Prop = \{ \mult \}$,
  an element in the right-hand side can be written as
  $
  \iprod[]{\mult}{\mult_1}_{\mbijB}
  + \iprod[]{\mult}{\mult_2}_{\mbijB'}
  =
  \iprod[]{\mult}{\mult_1 +_\mbij \mult_2}_{\mbij'}
  \in \text{LHS}$.
\end{proof}

\begin{proof}[Proof of \ref{arule:inner-prod-bigbmix}]
  Any element in the left-hand side is written as
  $\sum_{x \in I} \iprod[]{\mult_x}{\mult'_x}_{\mbij_x}$,
  which is equal to
  $\iprod{
    \biguplus_x \mult_x
  }{
    \biguplus_x \mult'_x
  }_{
    \biguplus_{x \in I} \mbij_x
  }
  \in \text{RHS}
  $.
\end{proof}

\begin{proof}[Proof of \ref{arule:inner-prod-bigbmix-exact}]
  Let $\mult_x \in \Prop_x$ and $\mult'_x \in \PropB$.
  As in the proof of \ref{lrule:bigbmix-sum},
  any multiset bijection $\mbij$ between
  $\mult \defeq \biguplus_x \mult_x$
  and $\mult' \defeq \biguplus_x \mult'_x$
  that satisfy $\mult +_{\mbij} \mult' \defined$
  has the form $\biguplus_x \mbij_x$ for some
  $\mbij_x \colon \mult_x \mbiject \mult'_x$.
  Therefore,
  $\iprod[]{\biguplus_x \mult_x}{\biguplus_x \mult'_x}_{\mbij}
  = \sum_x \iprod{ \mult_x }{ \mult'_x }_{ \mbij_x }
  \in \text{LHS}$.
\end{proof}

\begin{proof}[Proof of \ref{arule:inner-prod-frame}]
  Any element in the left-hand side can be written as
  $\iprod[]{\mult_1 +_\mbij \mult_2}{\mult_3 +_{\mbijB} \mult_4}$
  where
  $\mult_1 \in \Prop$,
  $\mult_2 \in \PropB$,
  $\mult_3 \in \Prop'$, and
  $\mult_4 \in \PropB'$
  with multiset bijections
  $\mbij \colon \mult_1 \mbiject \mult_2$ and
  $\mbijB \colon \mult_3 \mbiject \mult_4$.
  Thanks to the assumption, $\mult_1 \cdot \mult_3$ is defined.
  Because $\mult_1 +_\mbij \mult_3$ and
  $\mult_2 +_\mbijB \mult_4$ are defined,
  $\mult_2 \cdot \mult_4 \defined$ follows
  from \cref{app:lem:res-more}.
  There is an induced multiset bijection
  $\mbij \times \mbijB \colon
  \mult_1 \cdot \mult_3 \mbiject \mult_2 \cdot \mult_4$
  that satisfies
  \begin{align*}
    \iprod[]{\mult_1 +_\mbij \mult_2}{\mult_3 +_{\mbijB} \mult_4}
    &\ =\
    \mset{
      \iprod[]{a}{b} \cdot \iprod[]{c}{d}
      \mid (a, b) \mIn \mbij, (c, d) \mIn \mbijB
    }
  \\ &\ =\
    \mset{
      \iprod[]{a \cdot c}{b \cdot d}
      \mid (a \cdot c, b \cdot d) \mIn \mbij \times \mbijB
    }
  \ \in\ \text{RHS}.
    \qedhere
  \end{align*}
\end{proof}

\begin{proof}[Proof of \ref{arule:inner-prod-unframe}]
  Let $\iprod[]{\mult \cdot \mult_1}{\mult' \cdot \mult_2}_\mbij$
  be an element of left-hand side where
  $\mult \in \Prop$,
  $\mult' \in \PropB$,
  $\mult_1 \in \PropC$,
  $\mult_2 \in \PropC'$,
  and $\mbij \col \mult \cdot \mult_1 \mbiject \mult' \cdot \mult_2$.
  Since $\Prop \coh \PropB$,
  there exists some multiset bijection
  $\mbijB_0 \col \mult \mbiject \mult'$ such that
  $\mult +_{\mbijB_0} \mult' \defined$.
  Since $\Prop$ is unambiguous,
  similarly to the proof of \ref{lrule:sum-precise},
  we can prove that such $\mbijB_0$ is unique.
  From \cref{app:lem:res-more},
  for any $a \mIn \mult$, $b \mIn \mult'$,
  $(a, b) \mIn \mbijB_0$ or $a \hash b$.
Because
  $\iprod[]{\mult \cdot \mult_1}{\mult' \cdot \mult_2}_\mbij$
  is defined,
  for any $(a \cdot c, b \cdot d) \mIn \mbij$,
  $\iprod[]{a \cdot c}{b \cdot d} \defined$,
  which is equivalent to say $a \cdot c + b \cdot d \defined$.
  So $\neg (a \hash b)$, thus $(a, b) \mIn \mbijB_0$.
  Therefore,
  $(a \cdot c, b \cdot d) \mIn \mbij \implies (a, b) \mIn \mbijB_0$,
  and $\iprod[]{a \cdot c}{b \cdot d} =
  \iprod[]{a}{b} \cdot \iprod[]{c}{d}$.
  For each $a \mIn \mult$,
  let $\mbij_{a}$ be the multiset bijection between
  $\mult_1$ and $\mult_2$ such that
  $(a \cdot c, b \cdot d) \mIn \mbij \implies (c, d) \mIn \mbij_a$
  Then
  \begin{align*}
    &
    \sum\nolimits_{(a \cdot c, b \cdot d) \mIn \mbij}
    \iprod{a \cdot c}{b \cdot d}
  \\ &\ =\
    \sum\nolimits_{(a, b) \mIn \mbijB_0}
    \sum\nolimits_{(c, d) \mIn \mbij_a}
    \iprod{a}{b} \cdot \iprod{c}{d}
  \\ &\ =\
    \sum\nolimits_{(a, b) \mIn \mbijB_0} \iprod{a}{b}
    \cdot
    \sum\nolimits_{(c, d) \mIn \mbij_a} \iprod{c}{d}
  \\ &\ =\
    \sum\nolimits_{(a, b) \mIn \mbij_0} \alpha \iprod{a}{b}
  \\ &\ =\
    \alpha\pk2 \iprod{\mult}{\mult'}_{\mbij_0}
    \ \in\ \alpha \iprod{\Prop}{\PropB}.
    \qedhere
  \end{align*}
\end{proof}

\begin{proof}[Proof of \ref{arule:orth-prob}]
  Let $\mult \in \Prop$, $\mult' \in \PropB$,
  and $\mbij \colon \mult \mbiject \mult'$.
  Then
  \begin{align*}
    \iprod{\alpha\mult +_\mbij \beta\mult'}
          {\alpha\mult +_\mbij \beta\mult'}_\id
    &\ =\
      \bar{\alpha}\alpha\iprod[]{\mult}{\mult}_\id
      +
      \bar{\alpha}\beta\iprod[]{\mult}{\mult'}_\mbij
      +
      \bar{\beta}\alpha\iprod[]{\mult'}{\mult}_{\mbij^{-1}}
      +
      \bar{\beta}\beta\iprod[]{\mult'}{\mult'}_\id
  \\ &\ =\
      |\alpha|^2 + 0 + 0 + |\beta|^2 \ =\ 1.
    \qedhere
  \end{align*}
\end{proof}

\begin{proof}[Proof of \ref{arule:orth-sum}]
  Follows from \ref{arule:inner-prod-sum}.
\end{proof}

\begin{proof}[Proof of \ref{arule:orth-bigbmix}]
  Follows from \ref{arule:inner-prod-bigbmix-exact}.
\end{proof}

\begin{proof}[Proof of \ref{arule:orth-frame}]
  Let
  $\mult \in \Prop$,
  $\mult' \in \PropB$,
  $\mult_1 \in \PropC$,
  $\mult_2 \in \PropC'$,
  and $\mbij \col \mult \cdot \mult_1 \mbiject \mult' \cdot \mult_2$,
  such that
  $\iprod[]{\mult \cdot \mult_1}{\mult' \cdot \mult_2}_\mbij \defined$.
  Similarly to the proof of \ref{arule:inner-prod-unframe},
  we can prove that,
  there exist $\mbijB_0 \colon \mult \mbiject \mult'$
  and $\mbij_a \colon \mult_1 \mbiject \mult_2$
  such that
  $(a \cdot c, b \cdot d) \mIn \mbij \implies
  (a, b) \mIn \mbijB_0 \,\land\, (c, d) \mIn \mbij_a$,
  $\mult +_{\mbijB_0} \mult' \defined$, and
  $\mult_1 +_{\mbij_a} \mult_2 \defined$.
  Moreover, since $\PropC$ is unambiguous,
  $\{ \mbijB' \colon \mult_1 \mbiject \mult_2
  \mid \mult_1 +_{\mbijB'} \mult_2 \defined \}$
  is a singleton or empty.
  If not empty, we denote the unique element as $\mbijB'_{0}$, and then
  \begin{align*}
    \sum\nolimits_{(a \cdot c, b \cdot d) \mIn \mbij}
    \iprod{a \cdot c}{b \cdot d}
    \ &=\
    \sum\nolimits_{(a, b) \mIn \mbijB_0} \iprod{a}{b}
    \cdot
    \sum\nolimits_{(c, d) \mIn \mbij_a} \iprod{c}{d}
    \\ & =\
    \sum\nolimits_{(a, b) \mIn \mbijB_0} \iprod{a}{b}
    \cdot
    \sum\nolimits_{(c, d) \mIn \mbijB'_0} \iprod{c}{d}
    \\ &\in\
    \{ 0 \}.
  \end{align*}
  If such $\mbijB'_0$ does not exist, then $\mult = \mult' = \mset{\ }$.
  The orthogonality is trivial in this case.
\end{proof}
 \section{More on Case Studies}
\label{app:sect:cases}

\subsection{Dirty Qubit: Implementation of CCCX by Toffoli Gates}
\label{app:sect:cases:dirty}

Here we give the details of \cref{sect:cases:dirty}.

The \gCCCX{} gate is a 4-qubit gate that computes $\qitRet \store \qitRet \xor (\qit \land \qitB \land \qitC)$ where $\xor$ denotes exclusive OR.
We can implement this \gCCCX{} gate using only \gCCX{} gates (computing $\qitRet \store \qitRet \xor (\qit \land \qitB)$, also called Toffoli gates) with a dirty auxiliary qubit $\qitTmp$, a qubit whose state is unknown, as follows:
\[
  \gdCCCX[\qit, \qitB, \qitC, \qitTmp, \qitRet] \ \defeq\
    \gCCX[\qitC, \qitTmp, \qitRet]; \gCCX[\qit, \qitB, \qitTmp];
    \gCCX[\qitC, \qitTmp, \qitRet]; \gCCX[\qit, \qitB, \qitTmp].
\]
The $\gCCCX$ gate and the circuit $\gdCCCX$ are visualized in \cref{fig:dcccx-simple}.

We would like to prove the equivalence between $\gCCCX$ and $\gdCCCX$ in our logic in any context, \ie with any other qubits entangled with the qubits involved.
We show the following Hoare triple:
\[
  \hoare{ (\qit, \qitB, \qitC, \qitTmp, \qitRet, \bar{\qitD}) \mapsto \qket }{ \gdCCCX }{
    (\qit, \qitB, \qitC, \qitTmp, \qitRet, \bar{\qitD}) \mapsto (\gCCCX_{\qit, \qitB, \qitC, \qitRet} \otimes \gid_{\qitTmp, \bar{\qitD}}) \qket
  }.
\]
To this end, we first prove the following Hoare triple for any classical state $i, j, k, \ell, m \in \{0, 1\}$:
\[
  \hoare{ (\qit, \qitB, \qitC, \qitTmp, \qitRet) \mapsto \ket{i j k \ell m} }{ \gdCCCX }{
    (\qit, \qitB, \qitC, \qitTmp, \qitRet) \mapsto \ket{i j k \ell(m \xor(i \land j \land k))}
  }.
\]
This can be done as follows.
Here, we simply write $\land$ as multiplication.
Note that this part is no different from symbolic execution of the usual classical bitwise operations.
\begin{align*} &
  \pkcurly[\big]{ \qit \mapsto \ket{i} \pk1*\pk1 \qitB \mapsto \ket{j} \pk1*\pk1 \qitC \mapsto \ket{k} \pk1*\pk1
    \qitTmp \mapsto \ket{\ell} \pk1*\pk1 \qitRet \mapsto \ket{m} }
\\[.2em] &
  \gCCX[\qitC, \qitTmp, \qitRet]
\\ &
  \pkcurly[\big]{ \qit \mapsto \ket{i} \pk1*\pk1 \qitB \mapsto \ket{j} \pk1*\pk1 \qitC \mapsto \ket{k} \pk1*\pk1
    \qitTmp \mapsto \ket{\ell} \pk1*\pk1 \qitRet \mapsto \ket{m \xor k \ell} }
\\[.2em] &
  \gCCX[\qit, \qitB, \qitTmp]
\\ &
  \pkcurly[\big]{ \qit \mapsto \ket{i} \pk1*\pk1 \qitB \mapsto \ket{j} \pk1*\pk1 \qitC \mapsto \ket{k} \pk1*\pk1
    \qitTmp \mapsto \ket{\ell \xor i j} \pk1*\pk1 \qitRet \mapsto \ket{m \xor k \ell} }
\\[.2em] &
  \gCCX[\qitC, \qitTmp, \qitRet]
\\ &
  \pkcurly[\big]{ \qit \mapsto \ket{i} \pk1*\pk1 \qitB \mapsto \ket{j} \pk1*\pk1 \qitC \mapsto \ket{k} \pk1*\pk1
    \qitTmp \mapsto \ket{\ell \xor i j} \pk1*\pk1 \qitRet \mapsto \ket{m \xor k \ell \xor k (\ell \xor i j)} }
\\ &
  \pkcurly[\big]{ \qit \mapsto \ket{i} \pk1*\pk1 \qitB \mapsto \ket{j} \pk1*\pk1 \qitC \mapsto \ket{k} \pk1*\pk1
    \qitTmp \mapsto \ket{\ell \xor i j} \pk1*\pk1 \qitRet \mapsto \ket{m \xor i j k} }
\\[.2em] &
  \gCCX[\qit, \qitB, \qitTmp]
\\ &
  \pkcurly[\big]{ \qit \mapsto \ket{i} \pk1*\pk1 \qitB \mapsto \ket{j} \pk1*\pk1 \qitC \mapsto \ket{k} \pk1*\pk1
    \qitTmp \mapsto \ket{\ell \xor i j \xor i j} \pk1*\pk1 \qitRet \mapsto \ket{m \xor i j k} }
\\ &
  \pkcurly[\big]{\qit \mapsto \ket{i} \pk1*\pk1 \qitB \mapsto \ket{j} \pk1*\pk1 \qitC \mapsto \ket{k} \pk1*\pk1
    \qitTmp \mapsto \ket{\ell} \pk1*\pk1 \qitRet \mapsto \ket{m \xor i j k} }
\end{align*}
Now that we have proven the Hoare triple for the concrete state $\ket{i j k \ell m}$, we can generalize it to any quantum state $\qket$ as follows:
\begin{derivation}
  \infer*[Right = \labelstep{ax:dcccx:sum}]{
    \infer*[Right = \labelstep{ax:dcccx:frame}]{
      \all{i, j, k, \ell, m \in \{0, 1\}}
    \\\\
      \hoare{ (\qit, \qitB, \qitC, \qitTmp, \qitRet) \mapsto \ket{i j k \ell m} }
        { \gdCCCX }
        { (\qit, \qitB, \qitC, \qitTmp, \qitRet) \mapsto
          (\gCCCX_{\qit, \qitB, \qitC, \qitRet} \otimes \gid_\qitTmp) \ket{i j k \ell m} }
    }{
      \all{i, j, k, \ell, m \in \{0, 1\}}
    \\\\
      \pkcurly[\big]{
        (\qit, \qitB, \qitC, \qitTmp, \qitRet) \mapsto
          \alpha_{i j k \ell m} \ket{i j k \ell m} *
        \bar{\qitD} \mapsto \qket[_{i j k \ell m}]
      }
      \ \ \gdCCCX
    \\
      \pkcurly[\big]{
        (\qit, \qitB, \qitC, \qitTmp, \qitRet) \mapsto
          (\gCCCX_{\qit, \qitB, \qitC, \qitRet} \otimes \gid_\qitTmp) \alpha_{i j k \ell m} \ket{i j k \ell m}
        * \bar{\qitD} \mapsto \qket[_{i j k \ell m}]
      }
    }
  }{ \textstyle
    \pkcurly[\big]{
      (\qit, \qitB, \qitC, \qitTmp, \qitRet, \bar{\qitD}) \mapsto
        \sum_{i, j, k, \ell, m} \alpha_{i j k \ell m} \ket{i j k \ell m} \qket[_{i j k \ell m}]
    }
    \ \ \ \gdCCCX
  \\ \textstyle
    \pkcurly[\big]{
      (\qit, \qitB, \qitC, \qitTmp, \qitRet, \bar{\qitD}) \mapsto
        (\gCCCX_{\qit, \qitB, \qitC, \qitRet} \otimes \gid_{\qitTmp, \bar{\qitD}})
        \sum_{i, j, k, \ell, m} \alpha_{i j k \ell m} \ket{i j k \ell m} \qket[_{i j k \ell m}]
    }
  }
\end{derivation}
We first use \ref{lrule:hoare-frame} to add the auxiliary qubits $\bar{\qitD} \mapsto \qket[_{i j k \ell m}]$ and the global phase $\alpha_{i j k \ell m}$ in the pre/postconditions
at \cref{ax:dcccx:frame}.
Finally, we use \ref{lrule:hoare-sum} to sum over all $i j k \ell m \in \{0, 1\}$ to obtain the last line \cref{ax:dcccx:sum}.
We used the linearity of the \gCCCX{} gate to commute the \gCCCX{} gate with the sum $\sum_{i, j, k, \ell, m}$ and the global phase $\alpha_{i j k \ell m}$.
Since the choice of $\alpha_{i j k \ell m}$ and $\qket[_{i j k \ell m}]$ is arbitrary, and every quantum state can be written in the form $\sum_{i, j, k, \ell, m} \alpha_{i j k \ell m} \ket{i j k \ell m} \qket[_{i j k \ell m}]$, we conclude that the Hoare triple holds for any quantum state $\qket$.

\subsection{Program with Measurements: EPR Paradox}
\label{app:sect:cases:epr}

\newcommand*{\EPR}{\mathsf{EPR}}
The EPR paradox is a famous quantum phenomenon that shows that the measurement results of two entangled qubits are correlated, no matter how far apart the two qubits are.
The \emph{Bell state} or \emph{EPR pair} is a state of 2-qubits ($\qit$ and $\qitB$) written as $\tfrac{1}{\sqrt2} (\ket{00} + \ket{11})$.
We suppose that those entangled qubits are shared between Alice and Bob.
If Alice measures her qubit $\qit$ and Bob measures his qubit $\qitB$, the EPR paradox states that their measurement results agree with probability 1, even though the outcomes are not determined prior to measurement.
The program can be written as follows, where the Bell state is prepared in the first step:
\[
  \EPR^{\var, \varB, \varR}[\qit, \qitB] \ \ \defeq\ \
    \Bell[\qit, \qitB];\
    \var \store \gMZ[\qit];\
    \varB \store \gMZ[\qitB];\
    \varR \store \var \xor\nk1 \varB.
\]
The classical boolean variable $\varR$ is added to the program to indicate the difference between the measurement results of Alice and Bob.
Therefore, the correctness of the program can be specified as the following assertion:
\begin{align*}
&
  \ex{\pk3 \Prop \col \frameable, \Prob 1,\pk4 \PropB \col \frameable, \Prob 0}
\\[-.2em] & \hspace{3em}
  \hoare{ (\qit, \qitB) \mapsto \ket{00} }[^{\var, \varB, \varR}]
    { \EPR^{\var, \varB, \varR}[\qit, \qitB] }
    { (\varR \mapsto 0 * \Prop) \pk2\oplus\pk2 (\varR \mapsto 1 * \PropB) }.
\end{align*}
This assertion can be proved as follows,
setting $\PropC_{k, \ell} \defeq \var \mapsto k * \varB \mapsto \ell * (\qit, \qitB) \mapsto \ket{k \ell}$:
\begin{align*} &
  \pkcurly[\big]{ (\qit, \qitB) \mapsto \ket{00} }^{\var, \varB, \varR}
\\ & \textstyle
  \Bell[\qit, \qitB]
  \quad \pkcurly[\big]{
    \tfrac{1}{\sqrt2} \sum_{i = 0, 1} (\qit, \qitB) \mapsto \ket{i i}
  }^{\var, \varB, \varR}
\\ &
  \var \store \gMZ[\qit]
  \quad \pkcurly[\big]{
    \tfrac{1}{\sqrt2} \bigoplusn_{i = 0, 1}
    \var \mapsto i * (\qit, \qitB) \mapsto \ket{i i}
  }^{\varB, \varR}
\\ &
  \varB \store \gMZ[\qitB]
  \quad \pkcurly[\big]{
    \tfrac{1}{\sqrt2} \bigoplusn_{i = 0, 1} \bigoplusn_{j = 0, 1}
    \delta_{i j} * \var \mapsto i * \varB \mapsto i * (\qit, \qitB) \mapsto \ket{i i}
  }^\varR
\\ &
  \pkcurly[\Big]{
    \paren*{\pk2
      \tfrac{1}{\sqrt2} \bigoplusn_{k = 0, 1} \PropC_{k, k}
    \pk2} \ \oplus\
    \paren*{\pk2
      0 \cdot \bigoplusn_{k = 0, 1} \PropC_{k, \neg k}
    \pk2}
  }^\varR
\\ &
  \varR \store \var \xor\nk1 \varB
\\ &
  \pkcurly[\Big]{
    \paren*{\pk2
      \tfrac{1}{\sqrt2}
      \bigoplusn_{k = 0, 1} \varR \mapsto 0 * \PropC_{k, k}
    \pk2} \ \oplus\
    \paren*{\pk2
      0 \cdot \bigoplusn_{k = 0, 1} \varR \mapsto 1 * \PropC_{k, \neg k}
    \pk2}
  }
\\ &
  \pkcurly[\Big]{
    \paren*{\pk2
      \varR \mapsto 0 \,*\,
      \tfrac{1}{\sqrt2} \bigoplusn_{k = 0, 1} \PropC_{k, k}
    \pk2}
    \ \oplus\
    \paren*{\pk2
      \varR \mapsto 1 \,*\,
      0 \cdot \bigoplusn_{k = 0, 1} \PropC_{k, \neg k}
    \pk2}
  }.
\end{align*}
We can finally set
$\Prop \defeq \tfrac{1}{\sqrt2} \bigoplusn_{k = 0, 1} \PropC_{k, k}$ and
$\PropB \defeq 0 \cdot \bigoplusn_{k = 0, 1} \PropC_{k, \neg k}$.
Note that $\Prop \col \Prob 1$ holds because $\PropC_{k, \ell} \col \Prob 1$ and $\paren[\big]{\frac{1}{\sqrt2}}^2 \cdot (1 + 1) = 1$.

\subsection{Quantum Teleportation}
\label{app:sect:cases:teleportation}

Here we give the details of \cref{sect:cases:teleportation}.

Quantum teleportation is a protocol that allows one to send the state of a qubit $\qket$ from Alice to Bob using only classical communication and pre-shared entangled qubits.
The protocol is written as the following circuit:
\[
  \begin{quantikz}[row sep = 0.2cm]
    \lstick{\qit} \qket & & \gate[2]{\Alice} \rstick{\var}
  \\
    \lstick{\qitB} \ket{0} & \gate[2]{\Bell} & \rstick{\varB}
  \\
    \lstick{\qitC} \ket{0} & & & \gate{\Bob^{\smash{\var, \varB}}} & \qket
  \end{quantikz}
\]
In our logic, the correctness of the teleportation program can be specified as follows:
\begin{align*} &
  \ex{\, \Prop \col \frameable, \Prob 1}\
  \all{\qket}
\\[-.3em] & \hspace{3em}
  \hoare{ \qit \mapsto \qket * \qitB \mapsto \ket{0} * \qitC \mapsto \ket{0} }[^{\var, \varB}]
    { \Teleport^{\var, \varB}[\qit, \qitB, \qitC] }
    { \qitC \mapsto \qket * \Prop }.
\end{align*}
The specification states that the final state of the qubit $\qitC$ will be the same as the initial state of the qubit $\qit$.

The first step of the protocol is to distribute the entangled qubits $\qitB$ and $\qitC$ to Alice and Bob.
The specification of this preparation step is as follows:
\[
  \hoare{\, \qitB \mapsto \ket{0} * \qitC \mapsto \ket{0} \, }
    { \Bell[\qitB, \qitC] }
    {\, (\qitB, \qitC) \mapsto \tfrac{1}{\sqrt2}(\ket{00} + \ket{11}) \, }.
\]
Then, Alice performs the following program to generate the classical bits $\var$ and $\varB$ that will be sent to Bob:
\[
  \Alice^{\var, \varB}[\qit, \qitB]
\hspace{.5em}= \hspace{.5em}
  \begin{quantikz}[baseline = {(0, -0.2ex)}, row sep = 0.2cm]
    \qit\, & \ctrl{1} & \gate{\gH} & \meter{} \rstick{\var}\\
    \qitB\, & \targ{} & \qw & \meter{} \rstick{\varB}
  \end{quantikz}
\hspace{.5em}= \hspace{.5em}
  \gCX[\qit, \qitB];\, \gH[\qit];\, \var \store \gMZ[\qit]; \varB \store \gMZ[\qitB].
\]
Finally, Bob receives the classical bits $\var$ and $\varB$ from Alice, and performs the following program to recover the state $\qket$:
\[
  \Bob^{\var, \varB}[\qitC]
\hspace{.5em}= \hspace{.5em}
  \begin{quantikz}[baseline = {(0, -.4ex)}]
    \qitC\, & \gate{\ifonly{\varB}{\gX}} & \gate{\ifonly{\var}{\gZ}} &
  \end{quantikz}
\hspace{.5em}= \hspace{.5em}
  \ifonly{\varB}{\gX[\qitC]};\ \ifonly{\var}{\gZ[\qitC]}.
\]

We prove the correctness of the protocol modularly, \ie we prove the specifications of $\Alice$ and $\Bob$ separately, and then combine them to prove the specification of the whole teleportation program.
We first prove the specification of $\Alice$ assuming the inputs are in the classical state $\ket{x i}$ for $x, i \in \{0, 1\}$:
\[
  \hoare{ (\qit, \qitB) \mapsto \ket{x i} }[^{\var, \varB}]
    { \Alice^{\var, \varB}[\qit, \qitB] }
    { \tfrac{1}{\sqrt2} \bigoplusn^{\var, \varB} (-1)^{x \land \var}\,
      \delta_{x \xor i, \varB} \,\cdot\,
      (\qit, \qitB) \mapsto \ket{\var \varB} }.
\]
This proof can be done symbolically as follows:
\begin{align*} &
  \pkcurly[\big]{
    (\qit, \qitB) \mapsto \ket{x i}
  }^{\var, \varB}
  \ \
  \gCX[\qit, \qitB]
  \ \
  \pkcurly[\big]{
    (\qit, \qitB) \mapsto \ket{x (x \xor i)}
  }^{\var, \varB}
\\ &
  \gH[\qit] \ \
  \pkcurly[\Big]{ \tfrac{1}{\sqrt2}
    \sumn_{k = 0, 1} (-1)^{x \land k} * (\qit, \qitB) \mapsto \ket{k (x \xor i)} }^{\var, \varB}
\\ &
  \var \store \gMZ[\qit] \ \
  \pkcurly[\Big]{ \tfrac{1}{\sqrt2}
    \bigoplusn^{\var}
    (-1)^{x \land \var} * (\qit, \qitB) \mapsto \ket{\var (x \xor i)} }^\varB
\\ &
  \varB \store \gMZ[\qitB] \ \
  \pkcurly[\Big]{ \tfrac{1}{\sqrt2}
    \bigoplusn^{\var, \varB}
      (-1)^{x \land \var} \cdot \delta_{(x \xor i), \varB} *
      (\qit, \qitB) \mapsto \ket{\var \varB} }.
\end{align*}
Note that we are also identifying the classical variables $\var$ and $b$
with the boolean values they store.
We can also prove the following specification of $\Bob$ in a straightforward way:
\[
  \hoare{ \qitC \mapsto \ket{i} \,*\,
    \var \mapsto a \,*\, \varB \mapsto x \xor i }
    { \Bob^{\var, \varB}[\qitC] }
    { (-1)^{x \land a}\, \qitC \mapsto \ket{x} }.
\]
By combining the above specifications, we prove the following specification of the whole teleportation program for the classical input $\ket{x}$ for $x \in \{0, 1\}$:
\[
  \ex{\, \Prop \col \frameable, \Prob 1}\ \
  \hoare{\, \qit \mapsto \ket{x} * \qitB \mapsto \ket{0} * \qitC \mapsto \ket{0} \,}[^{\var, \varB}]
    {\, \Teleport^{\var, \varB}[\qit, \qitB, \qitC] \,}
    {\, \qitC \mapsto \ket{x} * \Prop \,}.
\]
We can prove this by the following derivation:
\begin{align*} &
  \pkcurly[\big]{ \qit \mapsto \ket{x} \,*\, (\qitB, \qitC) \mapsto \ket{00}}^{\var, \varB}
\\ &
  \Bell[\qitB, \qitC]
\\ &
  \pkcurly[\big]{ \qit \mapsto \ket{x} \,*\,
    (\qitB, \qitC) \mapsto \tfrac{1}{\sqrt2}(\ket{00} + \ket{11}) }^{\var, \varB}
  \ \
  \pkcurly[\big]{ \tfrac{1}{\sqrt2} \sumn_{i = 0, 1}
    \qit \mapsto \ket{x} \,*\, (\qitB, \qitC) \mapsto \ket{i i} }^{\var, \varB}
\\ &
  \Alice^{\var, \varB}[\qit, \qitB]
\\ &
  \pkcurly[\Big]{
    \tfrac{1}{2} \sumn_{i = 0, 1} \bigoplusn^{\var, \varB}
    (-1)^{x \land \var}\, \delta_{(x \xor i), \varB} \,\cdot\,
    (\qit, \qitB) \mapsto \ket{\var \varB} \,*\,
    \qitC \mapsto \ket{i} }
\\ &
  \Bob^{\var, \varB}[\qitC]
\\ &
  \pkcurly[\Big]{
    \tfrac{1}{2} \sumn_{i = 0, 1} \bigoplusn^{\var, \varB}
    \delta_{(x \xor i), \varB} \,\cdot\,
    (\qit, \qitB) \mapsto \ket{\var \varB} \,*\,
    \qitC \mapsto \ket{x}
  }
\\ &
  \pkcurly[\Big]{
    \tfrac{1}{2} \bigoplusn^{\var, \varB} \sumn_{i = 0, 1}
    \delta_{(x \xor i), \varB} \,\cdot\,
    (\qit, \qitB) \mapsto \ket{a \varB} \,*\,
    \qitC \mapsto \ket{x}
  }
\\ &
  \pkcurly[\Big]{
    \tfrac{1}{2} \bigoplusn^{\var, \varB}
    (\qit, \qitB) \mapsto \ket{a \varB} \,*\,
    \qitC \mapsto \ket{x}
  }
  \ \
  \pkcurly[\Big]{
    \qitC \mapsto \ket{x} \,*\,
    \bigoplusn^{\var, \varB} (\qit, \qitB) \mapsto \tfrac{1}{2} \ket{a b}
  }.
\end{align*}
Now we can set $\Prop \defeq \bigoplusn^{\var, \varB} (\qit, \qitB) \mapsto \tfrac{1}{2} \ket{\var \varB}$ because it is $\frameable$ and $\Prob 1$.
In the general case, the input state of $\qit$ can be any quantum state $\qket = \alpha_0 \ket{0} + \alpha_1 \ket{1}$.
The Hoare triple we have just proved can be immediately generalized using \ref{lrule:hoare-scale} and \ref{lrule:hoare-sum}:
\begin{derivation}
  \infer*[Right = \ref{lrule:hoare-sum}]{
    \infer*[Right = \ref{lrule:hoare-scale}]{
      \all{x}\
      \hoare{ \qit \mapsto \ket{x} * (\qitB, \qitC) \mapsto \ket{00} }[^{\var, \varB}]
        { \Teleport^{\var, \varB}[\qit, \qitB, \qitC] }
        { \qitC \mapsto \ket{x} * \Prop }
    }{
      \all{x}\
      \hoare{ \qit \mapsto \alpha_{x} \ket{x} * (\qitB, \qitC) \mapsto \ket{00} }[^{\var, \varB}]
        { \Teleport^{\var, \varB}[\qit, \qitB, \qitC] }
        { \qitC \mapsto \alpha_x \ket{x} * \Prop }
    }
  }{
    \hoare{ \qit \mapsto \qket * (\qitB, \qitC) \mapsto \ket{00} }[^{\var, \varB}]
      { \Teleport^{\var, \varB}[\qit, \qitB, \qitC] }
      { \qitC \mapsto \qket * \Prop }
  }
\end{derivation}

\subsection{Lattice Surgery: Implementation of CNOT with Measurements}
\label{app:sect:cases:meas-cnot}

Here we give the details of \cref{sect:cases:meas-cnot}.

Lattice surgery is a technique for fault-tolerant quantum computing that uses surface codes~\cite{DennisKLP02-surface-codes}.
In lattice surgery, the \gCNOT{} gate is implemented without using two-qubit gates, but using only two-qubit measurements, as shown in \cref{fig:mcnot}.
The right-hand side circuit \mCNOT{} implements the \gCNOT{} gate using two 2-qubit measurements, $\gMXX$ and $\gMZZ$,
and some single-qubit gates with an auxiliary qubit $\qitB$ initialized to $\ket{0}$.
The program $\mCNOT^{\hvar, \hvarB, \hvarC}[\qit, \qitB, \qitC]$ for the circuit has been given in \cref{sect:cases:meas-cnot}.
We want to formally prove that this circuit indeed implements the \gCNOT{} gate. That is, we want to prove the following specification:
\begin{align*} &
  \ex{\, \Prop \col \frameable, \Prob 1}\
  \all{\qket}
\\[-.3em] & \hspace{3em}
  \hoare{ (\qit, \qitC) \mapsto \qket * \qitB \mapsto \ket{0} }[^{\hvar, \hvarB, \hvarC}]
    { \mCNOT^{\hvar, \hvarB, \hvarC}[\qit, \qitB, \qitC] }
    { (\qit, \qitC) \mapsto \gCX \qket * \Prop }.
\end{align*}

The specification of 2-qubit measurements $\gMXX$ and $\gMZZ$ is given as follows.
\begin{align*}
  \all{\pk2 s \in \{+, -\}} &\
  \hoare{ (\qit, \qitB) \mapsto \ket{s s} }
    { \gMXX^\var[\qit, \qitB] }
    { (\qit, \qitB) \mapsto \ket{s s} \mix{\var} (\qit, \qitB) \mapsto 0 }
\\
  \all{\pk2 s, t \in \{+, -\} \st s \neq t} &\
  \hoare{ (\qit, \qitB) \mapsto \ket{s t} }
    { \gMXX^\var[\qit, \qitB] }
    { (\qit, \qitB) \mapsto 0 \mix{\var} (\qit, \qitB) \mapsto \ket{s t} }
\\
  \all{\pk2 i \in \{0, 1\}} &\
  \hoare{ (\qit, \qitB) \mapsto \ket{i i} }
    { \gMZZ^\var[\qit, \qitB] }
    { (\qit, \qitB) \mapsto \ket{i i} \mix{\var} (\qit, \qitB) \mapsto 0 }
\\
  \all{\pk2 i, j \in \{0, 1\} \st i \neq j} &\
  \hoare{ (\qit, \qitB) \mapsto \ket{i j} }
    { \gMZZ^\var[\qit, \qitB] }
    { (\qit, \qitB) \mapsto 0 \mix{\var} (\qit, \qitB) \mapsto \ket{i j} }
\end{align*}
By linearity, we can assume that the initial state of $\qit$ and $\qitC$ is some disentangled state.
Here, we take the initial state to be $(\qit, \qitC) \mapsto \ket{a}(\gH \ket{b})$ for $a, b \in \{0, 1\}$.
The qubit $\qitC$ is not in the classical state $\ket{0}$ or $\ket{1}$,
but we choose $\ket{\plus} \defeq \gH \ket{0}$ and $\ket{\minus} \defeq \gH \ket{1}$ for simplicity.
Since $\angl{\ket{\plus}, \ket{\minus}}$ spans the whole 2-dimensional Hilbert space $\CC^2$,
this is enough to prove the correctness of the program.
The derivation is as follows:
\begin{align*} &
  \pkcurly[\big]{
    \qit \mapsto \ket{a} \,*\,
    \qitB \mapsto \ket{0} \,*\, \qitC \mapsto \gH \ket{b}
  }^{\hvar, \hvarB, \hvarC}
\\[.4em] &
  \pkcurly[\big]{
    \qit \mapsto \ket{a} \,*\,
    \tfrac{1}{\sqrt2} \sumn_{c}
      \qitB \mapsto \gH \ket{c} \,*\, \qitC \mapsto \gH \ket{b}
  }^{\hvar, \hvarB, \hvarC}
\\[.4em] &
\gMXX^\hvar[\qitB, \qitC]
\\[.4em] &
  \pkcurly[\big]{
    \qit \mapsto \ket{a} \,*\,
    \tfrac{1}{\sqrt2} \bigoplusn^{\hvar} \sumn_{c}
      \delta_{\hvar, c \xor b} \cdot
      \qitB \mapsto \gH \ket{c} \,*\, \qitC \mapsto \gH \ket{b}
  }^{\hvarB, \hvarC}
\\[.4em] &
  \pkcurly[\big]{
    \qit \mapsto \ket{a} \,*\,
    \tfrac{1}{\sqrt2} \bigoplusn^{\hvar}
      \qitB \mapsto \gH \ket{b \xor \hvar} \,*\, \qitC \mapsto \gH \ket{b}
  }^{\hvarB, \hvarC}
  \hspace{1.5em} (c \coloneq b \xor \hvar)
\\[.4em] &
\ifonly{ \hvar }{ \gZ[\qit] }
\\[.4em] &
  \pkcurly[\big]{
    \tfrac{1}{\sqrt2} \bigoplusn^{\hvar}
      (-1)^{a \hvar} \cdot
      \qit \mapsto \ket{a} \,*\,
      \qitB \mapsto \gH \ket{b \xor \hvar} \,*\, \qitC \mapsto \gH \ket{b}
  }^{\hvarB, \hvarC}
\\[.4em] &
  \pkcurly[\big]{
    \tfrac{1}{2} \bigoplusn^{\hvar}
      \sumn_d (-1)^{a \hvar + (b \xor \hvar) d} \cdot
      (\qit, \qitB) \mapsto \ket{a d} \,*\,
      \qitC \mapsto \gH \ket{b}
  }^{\hvarB, \hvarC}
\\[.4em] &
\gMZZ^\hvarB[\qit, \qitB]
\\[.4em] &
  \pkcurly[\big]{
    \tfrac{1}{2} \bigoplusn^{\hvar, \hvarB} \sumn_{d}
      \delta_{\hvarB, a \xor d} \cdot (-1)^{a \hvar + (b \xor \hvar) d} \cdot
      (\qit, \qitB) \mapsto \ket{a d} \,*\, \qitC \mapsto \gH \ket{b}
  }^\hvarC
\\[.4em] &
  \pkcurly[\big]{
    \tfrac{1}{2} \bigoplusn^{\hvar, \hvarB}
      (-1)^{a \hvar + (b \xor \hvar) (a \xor \hvarB)} \cdot
      (\qit, \qitB) \mapsto \ket{a\pk1 (a \xor \hvarB)} \,*\,
      \qitC \mapsto \gH \ket{b}
  }^\hvarC
  \hspace{1.5em} (d \coloneq a \xor \hvarB)
\\[.4em] &
\ifonly{ \hvarB }{ \gX[\qitC] }
\\[.4em] &
  \pkcurly[\big]{
    \tfrac{1}{2} \bigoplusn^{\hvar, \hvarB}
      (-1)^{a \hvar + (b \xor \hvar) (a \xor \hvarB) + b \hvarB} \cdot
      (\qit, \qitB) \mapsto \ket{a\pk1 (a \xor \hvarB)} \,*\,
      \qitC \mapsto \gH \ket{b}
  }^\hvarC
\\[.4em] &
  \pkcurly[\big]{
    \tfrac{1}{2} \bigoplusn^{\hvar, \hvarB}
      (-1)^{a b + \hvar \hvarB} \cdot
      (\qit, \qitB) \mapsto \ket{a\pk1 (a \xor \hvarB)} \,*\,
      \qitC \mapsto \gH \ket{b}
  }^\hvarC
\\[.4em] &
\gH[\qitB]
\\[.4em] &
  \pkcurly[\big]{
    \tfrac{1}{2} \bigoplusn^{\hvar, \hvarB}
      (-1)^{a b + \hvar \hvarB} \cdot
      \qit \mapsto \ket{a} \,*\, \qitB \mapsto \gH \ket{a \xor \hvarB} \,*\,
      \qitC \mapsto \gH \ket{b}
  }^\hvarC
\\[.4em] &
\gMZ^\hvarC[\qitB]
\\[.4em] &
  \pkcurly[\big]{
    \tfrac{1}{2 \sqrt2} \bigoplusn^{\hvar, \hvarB, \hvarC}
      (-1)^{a b + \hvar \hvarB + (a \xor \hvarB) \hvarC} \cdot
      (\qit, \qitB) \mapsto \ket{a \hvarC} \,*\,
      \qitC \mapsto \gH \ket{b}
  }
\\[.4em] &
\ifonly{ \hvarC }{ \gZ[\qit] }
\\[.4em] &
  \pkcurly[\big]{
    \tfrac{1}{2 \sqrt2} \bigoplusn^{\hvar, \hvarB, \hvarC}
      (-1)^{a b + \hvar \hvarB + (a \xor \hvarB) \hvarC + a \hvarC} \cdot
      (\qit, \qitB) \mapsto \ket{a \hvarC} \,*\,
      \qitC \mapsto \gH \ket{b}
  }
\\[.4em] &
  \pkcurly[\big]{
    \tfrac{1}{2 \sqrt2} \bigoplusn^{\hvar, \hvarB, \hvarC}
      (-1)^{a b + \hvar \hvarB + \hvarB \hvarC} \cdot
      (\qit, \qitB) \mapsto \ket{a \hvarC} \,*\,
      \qitC \mapsto \gH \ket{b}
  }
\\[.4em] &
  \pkcurly[\Big]{
    (-1)^{a b} \tfrac{1}{2 \sqrt2} \cdot
    (\qit, \qitC) \mapsto \ket{a} \otimes \gH \ket{b} \,*\,
    \bigoplusn^{\hvar, \hvarB, \hvarC}
      (-1)^{\hvar \hvarB + \hvarB \hvarC} \cdot \qitB \mapsto \ket{\hvarC}
  }
\\[.4em] &
  \pkcurly[\Big]{
    (\qit, \qitC) \mapsto \gCX\pk2 (\ket{a} \otimes \gH \ket{b}) \,*\,
    \tfrac{1}{2 \sqrt2} \bigoplusn^{\hvar, \hvarB, \hvarC}
      (-1)^{\hvar \hvarB + \hvarB \hvarC} \cdot \qitB \mapsto \ket{\hvarC}
  }.
\end{align*}
Although the above proof is a bit long, each step is straightforward.
Note that at the second last step, we used the unframe rule \ref{lrule:bigbmix-unframe}.
This proves that the program behaves as the \gCNOT{} gate for the initial state $(\qit, \qitC) \mapsto \ket{a} \otimes \gH \ket{b}$.
Finally, we can set $\Prop \defeq \tfrac{1}{2 \sqrt2} \bigoplusn^{\hvar, \hvarB, \hvarC}
  (-1)^{\hvar \hvarB + \hvarB \hvarC} \cdot \qitB \mapsto \ket{\hvarC}$
(which satisfies $\frameable$ and $\Prob 1$)
and conclude the proof with the following derivation, generalizing the assertion to any initial state of $\qit$ and $\qitC$ using \ref{lrule:hoare-sum} and \ref{lrule:hoare-scale}:
\begin{derivation}
  \infer*[Right = \ref{lrule:hoare-sum}]{
    \infer*[Right = \ref{lrule:hoare-scale}]{
      \all{a, s}
      \hoare{
        (\qit, \qitC) \mapsto \ket{a s} *
        \qitB \mapsto \ket{0}
      }[^{\hvar, \hvarB, \hvarC}]
        { \mCNOT^{\hvar, \hvarB, \hvarC}[\qit, \qitB, \qitC] }
        { (\qit, \qitC) \mapsto \gCX \ket{a s} \pk1*\pk1 \Prop }
    }{
      \all{a, s}
      \hoare{
        (\qit, \qitC) \mapsto \alpha_{a, s} \ket{a s} *
        \qitB \mapsto \ket{0}
      }[^{\hvar, \hvarB, \hvarC}]
        { \mCNOT^{\hvar, \hvarB, \hvarC}[\qit, \qitB, \qitC] }
        { (\qit, \qitC) \mapsto \alpha_{a, s}\pk1 \gCX \ket{a s} \pk1*\pk1 \Prop }
    }
  }{
    \hoare{
      (\qit, \qitC) \mapsto \qket * \qitB \mapsto \ket{0}
    }[^{\hvar, \hvarB, \hvarC}]
      { \mCNOT^{\hvar, \hvarB, \hvarC}[\qit, \qitB, \qitC] }
      { (\qit, \qitC) \mapsto \gCX \qket \pk1*\pk1 \Prop }
  }
\end{derivation}
Here, $a$ and $s$ range over $\{0, 1\}$ and $\{\plus, \minus\}$, respectively.
This works because we can decompose any vector $\qket$ into $\sum_{a, s} \alpha_{a, s} \ket{a s}$ for some coefficients $\alpha_{a, s} \in \CC$.

\subsection{Error Correction: Bit-Flip Code}
\label{app:sect:cases:bit-flip-code}

Here we give the details of \cref{sect:cases:bit-flip-code}.

Fault-tolerant quantum computation aims to protect quantum information from errors by introducing redundancy, and the error correction codes are a key component of this approach.
For example, instead of representing a qubit state $\ket{i}$ (for $i \in \{0, 1\}$) using a single qubit,
we can encode it as $\ket{i i i}$ using three qubits.
This duplication enables error detection and correction if one of the qubits flips due to noise.

More generally, such redundancy means that an abstract single-qubit state---referred to as a \emph{logical qubit}---is encoded across multiple \emph{physical qubits}.
Such an encoding is called an \emph{error correction code} and allows the system
to tolerate certain classes of errors without compromising the logical information.

One of the simplest quantum error correction codes we consider in this section is the three-qubit bit-flip code.
It encodes a logical qubit $\alpha \ket{0} + \beta \ket{1}$ as $\alpha \ket{000} + \beta \ket{111}$,
and protects against a single bit-flip error,
which is an error that flips the state of a qubit
from $\ket{0}$ to $\ket{1}$ or vice versa.
The error correction procedure can be expressed as the circuit in \cref{fig:bitflip-code}.
Formally, $\BitEC$ can be defined as the following program:
\[\begin{aligned}
  \BitEC^{\var, \varB}[\qit, \qitB, \qitC] \ \ \defeq\ \ \ &
    \gCX[\qitB, \qit];\, \gMZ^\var[\qit];\, \gCX[\qitB, \qit];\,
    \gCX[\qitC, \qitB];\, \gMZ^\varB[\qitB];\, \gCX[\qitC, \qitB];
\\ &
    \ifonly{ \var \land \neg \varB }{ \gX[\qit] };\,
    \ifonly{ \var \land \varB }{ \gX[\qitB] };\,
    \ifonly{ \neg \var \land \varB }{ \gX[\qitC] }.
\end{aligned}\]
The first line applies the \gCNOT{} gates and measurements are \emph{error detection} steps.
During this phase, we measure the effect of the errors, called the \emph{syndrome}, by measuring the qubits $\qitB$ and $\qitC$.
The second line applies corrective operations based on the measurement results to \emph{recover} the original logical qubit state.
The correctness of this error-correction procedure can be specified as the following assertion, which holds for any $e_1, e_2, e_3 \in \{0, 1\}$ satisfying $e_1 + e_2 + e_3 \leq 1$:
\[\begin{aligned}
  \all{\alpha, \beta} \hspace{1em} &
  \pkcurly[\big]{\pk2
     (\qit, \qitB, \qitC) \mapsto
      (\gX^{e_1} \gX^{e_2} \gX^{e_3}) (\alpha \ket{000} + \beta \ket{111})
  \pk2}^{\var, \varB}
\\[-.2em] & \hspace{2em}
    \BitEC^{\var, \varB}[\qit, \qitB, \qitC] \hspace{1em}
    \pkcurly[\big]{\pk2
      (\qit, \qitB, \qitC) \mapsto \paren{\alpha \ket{000} + \beta \ket{111}} \,*\,
      \Prop^{\,\BitEC^{\var, \varB}}_{e_1,e_2,e_3}
    \pk2}.
\end{aligned}\]
Here, $\Prop^{\,\BitEC^{\var, \varB}}_{e_1,e_2,e_3}$ is some SL assertion satisfying $\frameable$ and $\Prob 1$.

We can prove it by the following derivation for $i = 0, 1$:
\begin{align*} &
  \pkcurly[\big]{
    \qit \mapsto \ket{i \xor e_1} * \qitB \mapsto \ket{i \xor e_2} *
    \qitC \mapsto \ket{i \xor e_3}
  }^{\var, \varB}
\\[.3em] &
  \gCX[\qit, \qitB]
  \ \ \pkcurly[\big]{
    \qit \mapsto \ket{i \xor e_1} * \qitB \mapsto \ket{e_1 \xor e_2} *
    \qitC \mapsto \ket{i \xor e_3}
  }^{\var, \varB}
\\[.3em] &
  \gMZ^\var[\qitB]
  \ \ \pkcurly[\big]{
    \bigoplusn^\var \delta_{\var, e_1 \xor e_2} \cdot
      \qit \mapsto \ket{i \xor e_1} * \qitB \mapsto \ket{e_1 \xor e_2} *
      \qitC \mapsto \ket{i \xor e_3}
  }^{\var, \varB}
\\[.3em] &
  \gCX[\qit, \qitB]
  \ \ \pkcurly[\big]{
    \bigoplusn^\var \delta_{\var, e_1 \xor e_2} \cdot
      \qit \mapsto \ket{i \xor e_1} * \qitB \mapsto \ket{i \xor e_2} *
      \qitC \mapsto \ket{i \xor e_3}
  }^\var
\\[.3em] &
  \gCX[\qitB, \qitC];\
  \gMZ^\varB[\qitC];\ \gCX[\qitB, \qitC]
\\[-.1em] &
  \pkcurly[\big]{
    \bigoplusn^{\var, \varB}
      \delta_{\var, e_1 \xor e_2} \delta_{\varB, e_2 \xor e_3} \cdot
      \qit \mapsto \ket{i \xor e_1} * \qitB \mapsto \ket{i \xor e_2} *
      \qitC \mapsto \ket{i \xor e_3}
  }
\\[.4em] &
  \ifonly{ \var \land \neg \varB }{ \gX[\qit] };\
  \ifonly{ \var \land \varB } {\gX[\qitB] };\
  \ifonly{ \neg \var \land \varB }{ \gX[\qitC] };\
\\[-.1em] &
  \pkcurly[\big]{
    \bigoplusn^{\var, \varB}
      \delta_{\var, e_1 \xor e_2} \delta_{\varB, e_2 \xor e_3} \cdot
      \qit \mapsto \ket{i \xor e_1 \xor \var \nk2\cdot\nk2 \neg \varB} *
      \qitB \mapsto \ket{i \xor e_2 \xor \var \varB} *
      \qitC \mapsto \ket{i \xor e_3 \xor \neg \var \nk2\cdot\nk2 \varB}
  }
\\[.4em] &
  \big\{\pk2
    \Prop^{\,\BitEC^{\var, \varB}}_{e_1,e_2,e_3} \,*\,
    \qit \mapsto
      \ket{i \xor e_1 \xor (e_1 \xor e_2) (1 \xor e_2 \xor e_3)} \,*\,
    \qitB \mapsto
      \ket{i \xor e_2 \xor (e_1 \xor e_2) (e_2 \xor e_3)} \,*\,
\\[-.1em] & \hspace{3em}
    \qitC \mapsto
      \ket{i \xor e_3 \xor (1 \xor e_1 \xor e_2) (e_2 \xor e_3)}
  \pk2\big\}.
\end{align*}
Here, we let $\Prop^{\,\BitEC^{\var, \varB}}_{e_1,e_2,e_3} \,\defeq\, \bigoplusn^{\var, \varB} \delta_{\var, e_1 \xor e_2} \delta_{\varB, e_2 \xor e_3}$, which does not depend on $i$.
Thanks to $e_1 + e_2 + e_3 \le 1$, with an easy bit manipulation we can show the following (we use $e_i^2 = e_i$ and $e_i e_j = 0$ under $i \ne j$):
\begin{gather*}
  e_1 \xor (e_1 \xor e_2) (1 \xor e_2 \xor e_3)
  = e_1 \xor e_1 \xor e_2 \xor e_2
  = 0
\\
  e_2 \xor (e_1 \xor e_2) (e_2 \xor e_3)
  = e_2 \xor e_2
  = 0
\\
  e_3 \xor (1 \xor e_1 \xor e_2) (e_2 \xor e_3)
  = e_3 \xor e_2 \xor e_3 \xor e_2
  = 0.
\end{gather*}
Therefore, the postcondition can be simplified to $\Prop^{\,\BitEC^{\var, \varB}}_{e_1,e_2,e_3} \pk1*\pk1 (\qit, \qitB, \qitC) \mapsto \ket{i i i}$.

Finally, we can use \ref{lrule:hoare-sum} and \ref{lrule:hoare-scale} to generalize this proof to any initial state $\alpha \ket{000} + \beta \ket{111}$
(we here abbreviate $\BitEC^{\var, \varB}[\qit, \qitB, \qitC]$ as $\BitEC$ and $\Prop^{\,\BitEC^{\var, \varB}}_{e_1,e_2,e_3}$ as $\Prop^{\,\BitEC}$):
\begin{derivation}
  \infer*[Right = \ref{lrule:hoare-sum}]{
    \infer*[Right = \ref{lrule:hoare-scale}]{
      \all{i \in \{0, 1\}}\,
      \hoare{
        (\qit, \qitB, \qitC) \mapsto (\gX^{e_1} \gX^{e_2} \gX^{e_3}) \ket{i i i}
      }[^{\var, \varB}]
        { \BitEC }
        { \Prop^{\,\BitEC} \pk2*\pk2 (\qit, \qitB, \qitC) \mapsto \ket{i i i} }
    }{
      \hoare{
        (\qit, \qitB, \qitC) \mapsto (\gX^{e_1} \gX^{e_2} \gX^{e_3}) (\alpha \ket{000})
      }[^{\var, \varB}]
        { \BitEC }
        { \Prop^{\,\BitEC} \pk2*\pk2 (\qit, \qitB, \qitC) \mapsto \alpha \ket{000} }
    \\
      \hoare{
        (\qit, \qitB, \qitC) \mapsto (\gX^{e_1} \gX^{e_2} \gX^{e_3}) (\beta \ket{111})
      }[^{\var, \varB}]
        { \BitEC }
        { \Prop^{\,\BitEC} \pk2*\pk2 (\qit, \qitB, \qitC) \mapsto \beta \ket{111} }
    }}{
    \hoare{
      (\qit, \qitB, \qitC) \mapsto (\gX^{e_1} \gX^{e_2} \gX^{e_3}) (\alpha \ket{000} + \beta \ket{111})
    }[^{\var, \varB}]
      { \BitEC }
      {
        \Prop^{\,\BitEC} \pk2*\pk2
        (\qit, \qitB, \qitC) \mapsto \alpha \ket{000} + \beta \ket{111}
      }
  }
\end{derivation}

Our proof of the correctness of the bit-flip code has some similarities with verification of quantum error correction codes using symbolic execution as in \cite{WangY24-symbolic-execution},
though our approach is fundamentally different.
Their approach uses density matrices and stabilizer codes, while our approach uses vector spaces and separation logic.

There are other kinds of errors that can occur in quantum computation, such as \emph{phase-flip errors}.
A phase-flip error negates the relative phase between $\ket{0}$ and $\ket{1}$, \ie applies the $\gZ$ gate, turning $\ket{1}$ to $- \ket{1}$ but $\ket{0}$ to $\ket{0}$.
In other words, it transforms a qubit state $\ket{\plus}$ to $\ket{\minus}$ and vice versa.
In order to correct such phase-flip errors, we need a different error correction code.
However, this can be done easily by using the bit-flip code in a clever way.
Since a bit-flip code encodes a logical qubit $\ket{0}$ as $\ket{000}$ and $\ket{1}$ as $\ket{111}$ to protect against a bit-flip error,
we can instead take three copies with respect to the $\gX$ basis $\angl{\ket{\plus}, \ket{\minus}}$ rather than the $\gZ$ basis $\angl{\ket{0}, \ket{1}}$, \ie encode a qubit $\ket{\plus}$ as $\ket{\plus\plus\plus}$ and $\ket{\minus}$ as $\ket{\minus\minus\minus}$.
As the Hadamard gate $\gH$ transforms the $\gX$ basis to the $\gZ$ basis and vice versa,
we can derive the phase-flip code error correction procedure $\PhaseEC$ simply by putting $\gH$ gates before and after the phase-flip code error correction procedure $\BitEC$.

The phase-flip code error correction procedure $\PhaseEC$ can be derived from $\BitEC$ as follows:
\[
  \begin{quantikz}[row sep = 2pt, column sep = 10pt]
    \lstick{\qit}  & \gate[style=error]{\ifonly{e_1}{\gZ}} &[.5em]
    \gate[3]{\PhaseEC} &
  \\
    \lstick{\qitB} & \gate[style=error]{\ifonly{e_2}{\gZ}} & &
  \\
    \lstick{\qitC} & \gate[style=error]{\ifonly{e_3}{\gZ}} & &
  \end{quantikz}
  \ \ =\ \
  \begin{quantikz}[row sep = 2pt, column sep = 10pt]
    \lstick{\qit}  & \gate[style=error]{\ifonly{e_1}{\gZ}} &[.5em]
    \gate{\gH} & \gate[3]{\BitEC} & \gate{\gH} &
  \\
    \lstick{\qitB} & \gate[style=error]{\ifonly{e_2}{\gZ}} & \gate{\gH} &                  & \gate{\gH} &
  \\
    \lstick{\qitC} & \gate[style=error]{\ifonly{e_3}{\gZ}} & \gate{\gH} &                  & \gate{\gH} &
  \end{quantikz}
\]
Formally, $\PhaseEC$ is defined as the following program:
\[
  \PhaseEC^{\var, \varB}[\qit, \qitB, \qitC] \ \ \defeq\ \
    \gH[\qit];\, \gH[\qitB];\, \gH[\qitC];\,
    \BitEC^{\var, \varB}[\qit, \qitB, \qitC];\,
    \gH[\qit];\, \gH[\qitB];\, \gH[\qitC].
\]
We can immediately derive the following specification of the procedure $\PhaseEC$ from the specification of $\BitEC$, for any $e_1, e_2, e_3 \in \{0, 1\}$ such that $e_1 + e_2 + e_3 \le 1$:
\[\begin{aligned}
  \all{\alpha, \beta}\hspace{1em} &
  \pkcurly[\big]{\pk2
     (\qit, \qitB, \qitC) \mapsto
      \gZ^{e_1} \gZ^{e_2} \gZ^{e_3} (\alpha \ket{\plus\plus\plus} + \beta \ket{\minus\minus\minus})
  \pk2}^{\var, \varB}
\\[-.1em] & \hspace{2em}
    \PhaseEC^{\var, \varB}[\qit, \qitB, \qitC] \hspace{1em}
    \pkcurly[\big]{\pk2
      (\qit, \qitB, \qitC) \mapsto \paren{\alpha \ket{\plus\plus\plus} + \beta \ket{\minus\minus\minus}} \,*\,
      \Prop^{\,\BitEC^{\var, \varB}}_{e_1,e_2,e_3}
    \pk2}.
\end{aligned}\]

\subsection{Error Correction: Shor Code}
\label{app:sect:cases:shor-code}

Here we give the details of \cref{sect:cases:shor-code}.

Now, we can combine the bit-flip code and the phase-flip code to construct a universal quantum error correction code capable of correcting any single-qubit error.
This construction is known as the \emph{Shor code}, which encodes a single logical qubit into 9 physical qubits.
The Shor code achieves this by nesting two types of codes: we first apply a phase-flip code to the logical qubit, and then apply a bit-flip code to each of the resulting qubits.
Concretely, the Shor code encodes the logical qubit state $\alpha \ket{0} + \beta \ket{1}$ into
\[
  \alpha \ket{0_\Log 0_\Log 0_\Log} + \beta \ket{1_\Log 1_\Log 1_\Log},
\]
where $\ket{0_\Log}$ and $\ket{1_\Log}$ are phase-flip-encoded states.
In this way, the Shor code uses a bit-flip code over logical qubits that have already been protected against phase-flip errors, ensuring robustness against both bit-flip and phase-flip errors.

Such codes that can correct both bit-flip and phase-flip errors are universal in the sense that they can correct any single-qubit unitary error.
The sketch of the intuitive proof of this is as follows.
Because the Pauli gates $\gid$, $\gX$, $\gZ$ and $\gY = - \sqrt{-1}\pk2 \gZ \gX$ form an orthogonal basis for the whole linear space of complex $2 \times 2$ matrices,
any single-qubit unitary error $\Unitary$ can be decomposed into a linear combination of $\gid$, $\gX$, $\gZ$ and $\gZ \gX$, with a coefficient vector of norm $1$.
So we can write $\Unitary = \lambda_0 \gid + \lambda_1 \gX + \lambda_2 \gZ + \lambda_3 \gZ \gX$ for some $(\lambda_i)_{i=0}^3$ such that $\sum_{i=0}^3 \abs{\lambda_i}^2 = 1$.
Note that a state $\qket$ with the error $\Unitary$ applied, $\Unitary \qket$, can be expressed as
\[
  \lambda_0 \qket + \lambda_1 \gX \qket + \lambda_2 \gZ \qket + \lambda_3 \gZ \gX \qket.
\]
By linearity of quantum computation,
it suffices to show that the matrices $\gX$, $\gZ$ and $\gZ \gX$, can be corrected by the Shor code.
Perhaps surprisingly, this is immediate,
because $\PhaseEC$ and $\BitEC_\Log$ are designed to correct $\gX$ and $\gZ$ errors, respectively.

\begin{figure*}
  \begin{quantikz}[row sep = 2pt]
    & \gate[style = error]{ \ifonly{e_1}{\Unitary} } &
    \gate[3]{ \PhaseEC } & \gate[9]{ \BitEC_\Log } &
  \\
    & \gate[style = error]{ \ifonly{e_2}{\Unitary} } & & &
  \\
    & \gate[style = error]{ \ifonly{e_3}{\Unitary} } & & &
  \\
    & \gate[style = error]{ \ifonly{e_4}{\Unitary} } &
    \gate[3]{ \PhaseEC } & &
  \\
    & \gate[style = error]{ \ifonly{e_5}{\Unitary} } & & &
  \\
    & \gate[style = error]{ \ifonly{e_6}{\Unitary} } & & &
  \\
    & \gate[style = error]{ \ifonly{e_7}{\Unitary} } &
    \gate[3]{ \PhaseEC } & &
  \\
    & \gate[style = error]{ \ifonly{e_8}{\Unitary} } & & &
  \\
    & \gate[style = error]{ \ifonly{e_9}{\Unitary} } & & &
  \end{quantikz}
  \caption{Error correction circuit for the Shor code.}
  \label{fig:shor-code}
  \Description{}
\end{figure*}
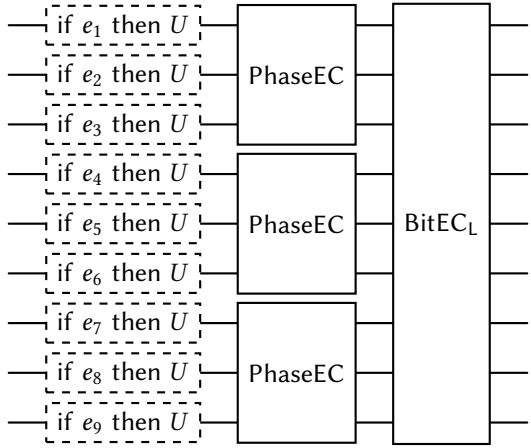

The Shor code is a perfect example to demonstrate the power of our separation logic.
The error correction circuit $\ShorEC$ for the Shor code is given in \cref{fig:shor-code}
and the formal definition has been given in \cref{sect:cases:shor-code}.
As we can see, the whole circuit has 9 qubits, meaning the dimension of the Hilbert space is $2^9 = 512$.
However, it can be decomposed into smaller modules,
which can be verified independently using separation logic.
This not only makes the verification process more manageable,
it also lets us use our linear-combination rule to simplify the proof,
since the universality of the Shor code is based on the linearity of quantum computation as sketched above.

Let us first explain how the last component $\BitEC_\Log$ is constructed.
This program is essentially the same as the 3-qubit bit-flip code $\BitEC$ we have already verified,
but with each physical qubit replaced by a logical qubit encoded by the phase-flip code.
Since we used \gX{} and \gCNOT{} gates and a measurement to implement the bit-flip code,
we first implement these gates for the phase-flip code.
That is, we verify the following assertions for programs $\gX_\Log$,
$\gCX_\Log$ and $\gMZ^\Log$ to use them as building blocks for the error correction of the bit-flip code.
\begin{align*}
  \ex{\, \PropB \col \frameable, \Prob 1}\
  \all{i \in \{0, 1\}}\ &
  \hoare{ \bar{\qit} \mapsto \ket{i_\Log} }
    { \gX_\Log[\bar{\qit}] }
    { \bar{\qit} \mapsto \ket{\neg i_\Log} * \PropB }
\\[.1em]
  \ex{\, \PropB \col \frameable, \Prob 1}\
  \all{i, j \in \{0, 1\}}\ &
  \hoare{ (\bar{\qit}, \bar{\qitB}) \mapsto \ket{i_\Log j_\Log} }
    { \gCX_\Log[\bar{\qit}, \bar{\qitB}] }
    { (\bar{\qit}, \bar{\qitB}) \mapsto
      \ket{i_\Log (i \xor\nk1 j)_\Log}
      \pk1*\pk1 \PropB }
\\[.1em]
  \ex{\, \PropB \col \frameable, \Prob 1}\ &
  \hoare{ \bar{\qit} \mapsto \ket{0_\Log} }[^\var]
    { \gMZ^{\Log\pk1 \var}[\bar{\qit}] }
    {
      \paren[\big]{\pk1
        \bar{\qit} \mapsto \ket{0_\Log} \,\mix{\var}\,
        \bar{\qit} \mapsto 0
      \pk1} \pk2*\pk2 \PropB
    }
\\[.1em]
  \ex{\, \PropB \col \frameable, \Prob 1}\ &
  \hoare{ \bar{\qit} \mapsto \ket{1_\Log} }[^\var]
    { \gMZ^{\Log\pk1 \var}[\bar{\qit}] }
    {
      \paren[\big]{\pk1
        \bar{\qit} \mapsto 0 \,\mix{\var}\,
        \bar{\qit} \mapsto \ket{0_\Log}
      \pk1} \pk2*\pk2 \PropB
    }
\end{align*}
Notably, we can abstract the factor $\PropB$ by the condition $\PropB \col \frameable, \Prob 1$.
Then, we can replace the basic gates $\gX$, $\gCX$ and $\gMZ$ in $\BitEC$
with the above programs to implement $\BitEC_\Log$.
More concretely, $\BitEC_\Log$ is defined as follows:
\[\begin{aligned}
  \BitEC_\Log^{\var, \varB}[\bar{\qit}, \bar{\qitB}, \bar{\qitC}] \ \ \defeq\ \ \ &
    \gCX_\Log[\bar{\qitB}, \bar{\qit}];\,
    \gMZ^{\Log\pk1 \var}[\bar{\qit}];\, \gCX_\Log[\bar{\qitB}, \bar{\qit}];\,
    \gCX_\Log[\bar{\qitC}, \bar{\qitB}];\,
    \gMZ^{\Log\pk1 \varB}[\bar{\qitB}];\, \gCX_\Log[\bar{\qitC}, \bar{\qitB}];
\\ &
    \ifonly{ \var \land \neg \varB }{ \gX_\Log[\bar{\qit}] };\,
    \ifonly{ \var \land \varB }{ \gX_\Log[\bar{\qitB}] };\,
    \ifonly{ \neg \var \land \varB }{ \gX_\Log[\bar{\qitC}] }.
\end{aligned}\]
We can also verify the correctness of $\BitEC_\Log$
exactly as we have done for the normal bit-flip code:
\begin{align*}
  &
  \ex{\, \PropB \col \frameable, \Prob 1}\,\
  \all{e_1, e_2, e_3 \st e_1 + e_2 + e_3 \le 1}\,\
  \all{\alpha, \beta} \hspace{1em}
\\[-.2em] & \hspace{2em}
  \pkcurly[\big]{\pk2
     (\bar{\qit}, \bar{\qitB}, \bar{\qitC}) \mapsto
      (\gX_\Log^{e_1} \gX_\Log^{e_2} \gX_\Log^{e_3})
      (\alpha \ket{0_\Log0_\Log0_\Log} + \beta \ket{1_\Log1_\Log1_\Log})
  \pk2}^{\var, \varB}
\\[-.2em] & \hspace{4em}
    \BitEC_\Log^{\var, \varB}[\bar{\qit}, \bar{\qitB}, \bar{\qitC}] \hspace{1em}
    \pkcurly[\big]{\pk2
      (\bar{\qit}, \bar{\qitB}, \bar{\qitC}) \mapsto
      \paren{\alpha \ket{0_\Log0_\Log0_\Log} + \beta \ket{1_\Log1_\Log1_\Log}} \,*\,
      \Prop^{\,\BitEC^{\var, \varB}}_{e_1,e_2,e_3} \,*\,
      \PropB
    \pk2}.
\end{align*}
This procedure of lifting the proof of correctness of the 3-qubit bit-flip code
to the 9-qubit version $\BitEC_\Log$ is not completely automatic.
However, we can easily redo the proof of correctness of a generalized
bit-flip code implemented on top of any logical qubit encoded by an arbitrary code.
This allows us to reuse the specification of the bit-flip code
as a building block in the verification of any program that uses it.
In this way, we can still avoid verifying the correctness of the bit-flip code
every time we use it.

Finally, we verify the correctness of the Shor code.
The specification of the correctness is similar to the ones we have seen (we omit classical variables for simplicity):
\begin{align*} \textstyle
  \ex{\, \Prop \col \frameable, \Prob 1}\
  \all{\alpha, \beta} \hspace{.8em}
  & \pkcurly[\big]{\,
    (\bar{\qit}, \bar{\qitB}, \bar{\qitC}) \mapsto
    (\Unitary^{e_1} \otimes \cdots \otimes \Unitary^{e_9})
    \paren*{ \alpha \ket{0_\Log 0_\Log 0_\Log} + \beta \ket{1_\Log 1_\Log 1_\Log} }
  \,}
\\ & \hspace{1.4em}
  \ShorEC[\bar{\qit}, \bar{\qitB}, \bar{\qitC}] \hspace{.8em}
  \pkcurly[\big]{\,
    (\bar{\qit}, \bar{\qitB}, \bar{\qitC}) \mapsto
      \paren{ \alpha \ket{0_\Log 0_\Log 0_\Log} + \beta \ket{1_\Log 1_\Log 1_\Log} }
    \,*\, \Prop
  \,}.
\end{align*}
This holds for any single-qubit unitary error $\Unitary$ and any $e_1, \ldots, e_9 \in \{0, 1\}$ under $\sum_{i = 1}^9 e_i = 1$.\footnote{
  We can safely exclude the case where $e_1 = \cdots = e_9 = 0$, because we can set $\Unitary = \id$ to model the error-free situation.
}

To simplify the proof, we first assume the case
where $\Unitary = \gZ^a \gX^b$ for $a, b \in \{0, 1\}$.
We can derive the following on $\PhaseEC[\bar{\qit}]$.
\begin{derivation}
  \infer*[Right = \ref{lrule:hoare-sum}]{
    \infer*[Right = \ref{lrule:hoare-frame}]{ \textstyle
      \all{c \in \{0, 1\}}\
      \pkcurly[\big]{
        \bar{\qit} \mapsto
          (\bigotimes_{i=1}^3 \gZ^{a e_i} \gX^{b e_i})
          \ket{c_\Log c_\Log c_\Log}
      }
      \hspace{1em} \PhaseEC[\bar{\qit}]
    \\ \hspace{14em} \textstyle
      \pkcurly[\big]{\,
        \bar{\qit} \mapsto
          (\bigotimes_{i=1}^3 \gX^{b e_i})
          \ket{c_\Log c_\Log c_\Log}
        \,*\, \Prop^{\,\BitEC}_{ae_1,ae_2,ae_3}
      \,}
    }{ \textstyle
      \all{c \in \{0, 1\}}\
      \pkcurly[\big]{
        (\bar{\qit}, \bar{\qitB}, \bar{\qitC}) \mapsto
          (\bigotimes_{i=1}^9 \gZ^{a e_i} \gX^{b e_i})
          \ket{c_\Log c_\Log c_\Log}
      }
      \hspace{1em} \PhaseEC[\bar{\qit}]
    \\ \hspace{9em} \textstyle
      \pkcurly[\big]{\,
        (\bar{\qit}, \bar{\qitB}, \bar{\qitC}) \mapsto
          (\bigotimes_{i=1}^3 \gX^{b e_i} \otimes
            \bigotimes_{i=4}^9 \gZ^{a e_i} \gX^{b e_i})
          \ket{c_\Log c_\Log c_\Log}
        \,*\, \Prop^{\,\BitEC}_{ae_1,ae_2,ae_3}
      \,}
    }
  }{ \textstyle
    \pkcurly[\big]{
      (\bar{\qit}, \bar{\qitB}, \bar{\qitC}) \mapsto
        (\bigotimes_{i=1}^9 \gZ^{a e_i} \gX^{b e_i})
        (\alpha \ket{0_\Log 0_\Log 0_\Log} + \beta \ket{1_\Log 1_\Log 1_\Log})
    }
    \hspace{1em} \PhaseEC[\bar{\qit}]
  \\ \hspace{3em} \textstyle
    \pkcurly[\big]{\,
      (\bar{\qit}, \bar{\qitB}, \bar{\qitC}) \mapsto
        (\bigotimes_{i=1}^3 \gX^{b e_i} \otimes
          \bigotimes_{i=4}^9 \gZ^{a e_i} \gX^{b e_i})
        (\alpha \ket{0_\Log 0_\Log 0_\Log} + \beta \ket{1_\Log 1_\Log 1_\Log})
      \,*\, \Prop^{\,\BitEC}_{ae_1,ae_2,ae_3}
    \,}
  }
\end{derivation}
Here, we used $a e_1 + a e_2 + a e_3 \le 1$.
Using this kind of derivation for $\PhaseEC$ three times and the specification of the enriched bit-flip code $\BitEC$, we can derive the following for any $\alpha, \beta \in \CC$,
letting $\qket[_\Log] \defeq \alpha \ket{0_\Log 0_\Log 0_\Log} + \beta \ket{1_\Log 1_\Log 1_\Log}$:
\begin{align*} & \textstyle
  \pkcurly[\big]{
    (\bar{\qit}, \bar{\qitB}, \bar{\qitC}) \mapsto
      (\bigotimes_{i=1}^9 \gZ^{a e_i} \gX^{b e_i}) \qket[_\Log]
  }
\\[.2em] &
  \PhaseEC[\bar{\qit}]
\\ & \textstyle
  \pkcurly[\big]{
    (\bar{\qit}, \bar{\qitB}, \bar{\qitC}) \mapsto
      (\bigotimes_{i=1}^3 \gX^{b e_i} \otimes
        \bigotimes_{i=4}^9 \gZ^{a e_i} \gX^{b e_i})
      \qket[_\Log]
    \,*\, \Prop^{(1)}_a
  }
\\[.2em] &
  \PhaseEC[\bar{\qitB}]
\\ & \textstyle
  \pkcurly[\big]{
    (\bar{\qit}, \bar{\qitB}, \bar{\qitC}) \mapsto
      (\bigotimes_{i=1}^6 \gX^{b e_i} \otimes
        \bigotimes_{i=7}^9 \gZ^{a e_i} \gX^{b e_i})
      \qket[_\Log]
    \,*\, \Prop^{(2)}_a
  }
\\[.2em] &
  \PhaseEC[\bar{\qitC}]
\\ & \textstyle
  \pkcurly[\big]{
    (\bar{\qit}, \bar{\qitB}, \bar{\qitC}) \mapsto
      (\bigotimes_{i=1}^9 \gX^{b e_i}) \qket[_\Log]
    \,*\, \Prop^{(3)}_a
  }
\\ &
  \pkcurly[\big]{
    (\bar{\qit}, \bar{\qitB}, \bar{\qitC}) \mapsto
      (
        \gX^{b(e_1 \lor e_2 \lor e_3)}_\Log \otimes
        \gX^{b(e_4 \lor e_5 \lor e_6)}_\Log \otimes
        \gX^{b(e_7 \lor e_8 \lor e_9)}_\Log
      )
      \qket[_\Log]
    \,*\, \Prop^{(3)}_{a}
  }
\\[.2em] &
  \BitEC_\Log[\bar{\qit}, \bar{\qitB}, \bar{\qitC}]
\\ &
  \pkcurly[\big]{
    (\bar{\qit}, \bar{\qitB}, \bar{\qitC}) \mapsto
      \qket[_\Log]
    \,*\, \Prop^{(4)}_{a,b}
  }
\end{align*}
Note that every $\Prop^{(k)}$ is a global phase guard that satisfies
$\Prop^{(k)} \col \frameable$, thanks to the fact that the separating conjunction of frameable assertions is frameable (\cref{fig:frameable}).
Importantly, each $\Prop^{(k)}$ depends on the error parameters $a, b$ (and each $e_i$),
but it does not depend on the initial state $\qket[_\Log]$.
Finally, we can use the linear combination rule to derive the final specification of the Shor code.
Let a single-qubit error $\Unitary$ be written as $\sum_{a, b \in \{0, 1\}} \lambda_{a, b} \gZ^a \gX^b$ for $\paren{\lambda_{a, b}}_{a, b}$ such that $\sum_{a, b \in \{0, 1\}} \abs{\lambda_{a, b}}^2 = 1$.
Then we have the following derivation (we here abbreviate $\ShorEC[\bar{\qit}, \bar{\qitB}, \bar{\qitC}]$ as $\ShorEC$):
\begin{derivation}
  \infer*[Right = \ref{lrule:hoare-sum}]{
    \infer*[Right = \ref{lrule:hoare-scale}]{ \textstyle
      \all{a, b}
      \hoare[\big]{
        (\bar{\qit}, \bar{\qitB}, \bar{\qitC}) \mapsto
        (\bigotimes_{i=1}^9 \gZ^{a e_i} \gX^{b e_i})
        \qket[_\Log]
      }{ \ShorEC }{
        (\bar{\qit}, \bar{\qitB}, \bar{\qitC}) \mapsto \qket[_\Log]
        \pk1*\pk1 \Prop^{(4)}_{a, b}
      }
    }{
      \all{a, b}
      \hoare[\big]{ \textstyle
        (\bar{\qit}, \bar{\qitB}, \bar{\qitC}) \mapsto
          (\bigotimes_{i=1}^9 \gZ^{a e_i} \gX^{b e_i})
          \lambda_{a, b} \qket[_\Log]
      }{ \ShorEC }{
        (\bar{\qit}, \bar{\qitB}, \bar{\qitC}) \mapsto \qket[_\Log]
        \pk1*\pk1 \lambda_{a, b} \Prop^{(4)}_{a, b}
      }
    }
  }{ \textstyle
    \hoare[\big]{
      (\bar{\qit}, \bar{\qitB}, \bar{\qitC}) \mapsto
        \sum_{a, b}
        (\bigotimes_{i=1}^9 \gZ^{a e_i} \gX^{b e_i})
        \lambda_{a, b} \qket[_\Log]
    }{ \ShorEC }{
      (\bar{\qit}, \bar{\qitB}, \bar{\qitC}) \mapsto \qket[_\Log]
      \pk1*\pk1 \sum_{a, b} \lambda_{a, b} \Prop^{(4)}_{a, b}
    }
  }
\end{derivation}
The precondition can be rewritten as
$
  \sum_{a, b}
    (\bigotimes_{i=1}^9 \gZ^{a e_i} \gX^{b e_i})
    \lambda_{a, b} \qket[_\Log] =
  (\bigotimes_{i=1}^9 \Unitary^{e_i}) \qket[_\Log]
$.
The frameability of
$\Prop \defeq \sum_{a, b} \lambda_{a, b} \Prop^{(4)}_{a, b}$
follows from the construction.

Finally, we prove $\Prop \col \Prob 1$.\footnote{
  This is quite trivial if we know that $\ShorEC$ is loop-free, because its termination probability is clearly $1$.
  However, here we prove $\Prop \col \Prob 1$ even without depending on that knowledge, leveraging orthogonality.
}
Since the coefficients $\lambda_{a, b}$ satisfy
$\sum_{a, b \in \{0, 1\}} \abs{\lambda_{a, b}}^2 = 1$,
it suffices to prove $\Prop^{(4)}_{a,b} \col \Prob 1$ and
that the assertions $\paren{ \Prop^{(4)}_{a,b} }_{a,b\in\{0,1\}}$ are orthogonal.
To begin with, $\Prop^{(4)}_{a,b}$ has the following form:
\begin{align*}
  \Prop^{\,\BitEC^{\hvarB, \hvarC}}_{e, e', e''}
  &\ \defeq\
  \textstyle\bigoplusn^{\hvarB, \hvarC} \delta_{\hvarB, e \xor e'} \delta_{\hvarC, e' \xor e''}
\\[.1em]
  \Prop^{(4)}_{a,b}
  &\ \defeq\
  \Prop^{\,\BitEC^{\hvarB_1, \hvarC_1}}_{ae_1, ae_2, ae_3}
  \,*\,
  \Prop^{\,\BitEC^{\hvarB_2, \hvarC_2}}_{ae_4, ae_5, ae_6}
  \,*\,
  \Prop^{\,\BitEC^{\hvarB_3, \hvarC_3}}_{ae_7, ae_8, ae_9}
  \,*\,
  \Prop^{\,\BitEC^{\hvarB_4, \hvarC_4}}_{
    b(e_1 \lor e_2 \lor e_3),
    b(e_4 \lor e_5 \lor e_6),
    b(e_7 \lor e_8 \lor e_9)
  }
  \,*\,
  \PropB.
\end{align*}
Here, $\PropB$ is some proposition satisfying $\frameable$ and $\Prob 1$.
We can easily prove $\Prop^{(4)}_{a,b} \col \Prob 1$.
Now we prove the orthogonality of $\paren{ \Prop^{(4)}_{a,b} }_{a,b\in\{0,1\}}$.
We can easily prove the orthogonality
$\Prop^{\,\BitEC^{\hvarB, \hvarC}}_{\bar{e}}
\orth \Prop^{\,\BitEC^{\hvarB, \hvarC}}_{\bar{e}'}$ for distinct $(\bar{e}), (\bar{e}') \in \{0,1\}^3$.
So, for any two distinct parameters $(a,b), (a',b') \in \{0,1\}^2$,
at least one of the first four conjuncts in $\Prop^{(4)}$ is orthogonal.\footnote{
  Recall that we assume $e_1 + \cdots + e_9 = 1$.
  So exactly one of $e_1, \ldots, e_9$ is non-zero.
  Also, exactly one of $e_1 \lor e_2 \lor e_3$, $e_4 \lor e_5 \lor e_6$ and $e_7 \lor e_8 \lor e_9$ is non-zero.
}
Therefore, by the rule \ref{arule:orth-frame} in \cref{app:sect:logic:inner-prod},
the set $\{ \Prop^{(4)}_{a,b} \}_{a,b\in\{0,1\}}$ is proved orthogonal.

\subsection{Probabilistic Choice and Almost Sure Termination}
\label{app:sect:cases:coin-toss}

Here we give the details of \cref{sect:cases:coin-toss}.

Recall the repeat-until-success program we considered:
\[
  \cointoss'_\prob \ \ \defeq\ \
    \whilex{ \var }{ \paren[\big]{\pk2
      \coin^\var_\prob;\, \varC \store \varC + 1
    \pk2} }
\hspace{3em}
  \cointoss_\prob \ \ \defeq\ \
    \coin^\var_\prob;\
    \cointoss'_\prob.
\]
We can prove the following specification for any probability $\prob \in \opcl{0, 1}$, guaranteeing almost sure termination:
\[
  \ex{\, \Prop \col \frameable, \Prob 1}\
  \hoare{\pk2 \varC \mapsto 0 \pk2}[^\var]
    { \cointoss_\prob }
    {\pk2 \var \mapsto 0 \pk2*\pk2 \Prop \pk2}.
\]

Let $\Prop_n \,\defeq\, \sqrt{(1 - \prob)^n} \pk1\cdot\pk1 (\var \mapsto 0 \oplus_\prob \var \mapsto 1) \pk1*\pk1 \varC \mapsto n$,
$\PropB_n \,\defeq\, \sqrt{(1 - \prob)^n \prob} \pk1\cdot\pk1 \var \mapsto 0 \pk1*\pk1 \varC \mapsto n$,
and $\PropC_n \,\defeq\, \sqrt{(1 - \prob)^{n + 1}} \pk1\cdot\pk1 \var \mapsto 1 \pk1*\pk1 \varC \mapsto n$ for $n \in \NN$.
Also, we set $\Prop \,\defeq\, \bigoplus^\varC_{n \in \NN} \sqrt{(1 - \prob)^n \prob}$.
Then we can conclude the proof by the following derivation:
\begin{derivation}
  \hspace{-3em}
  \infer*{
    \hoare{ \varC \mapsto 0 }[^\var]{ \coin^\var_\prob }{ \Prop_0 }
    \hspace{-.8em}
  \\
    \infer*[Right = \ref{lrule:bigbmix-unframe}]{
      \infer*[Right = \ref{lrule:hoare-while}]{
        \all{n} \hoare{ \Prop_n }{ \var }
          {\pk2 \rtok{0} * \PropB_n \pk1\oplus\pk1
            \rtok{1} * \PropC_n \pk2}
        \hspace{-.8em}
      \\
        \all{n} \hoare{ \PropC_n }
          { \coin^\var_\prob;\, \varC \store \varC + 1 }
          { \Prop_{n + 1} }
      }{
        \hoare{ \Prop_0 }
          { \Prop_0 }
          { \bigoplusn^\varC_{n \in \NN} \sqrt{(1 - \prob)^n \prob}
              \pk1*\pk1 \var \mapsto 0 }
      }
    }{
      \hoare{ \Prop_0 }
        { \cointoss'_\prob }
        { \var \mapsto 0 \pk1*\pk1 \Prop }
    }
  }{
    \hoare{ \varC \mapsto 0 }[^\var]
      { \cointoss_\prob }
      { \var \mapsto 0 \pk1*\pk1 \Prop }
  }
\end{derivation}
Notably, $\Prop \col \frameable, \Prob 1$ holds.
The counter $\varC$ is crucial for the unambiguity $\Prop \col \unambig$.
The probability judgment $\Prop \col \Prob 1$ holds because $\sum_{n \in \NN} (1 - \prob)^n \prob \pk1=\pk1 1$.
 
\end{document}